\documentclass[acmsmall,nonacm,authorversion,screen]{acmart}
\copyrightyear{\the\year}

\acmPrice{}
\acmDOI{10.1145/3434316}
\acmYear{2021}
\acmSubmissionID{popl21main-p234-p}
\acmJournal{PACMPL}
\acmVolume{5}
\acmNumber{POPL}
\acmMonth{1}

\setcopyright{ccby}

\bibliographystyle{ACM-Reference-Format}

\citestyle{acmauthoryear}   

\usepackage[utf8]{inputenc}
\usepackage[T1]{fontenc}

\usepackage{dblfloatfix}
\usepackage{amsmath}
\usepackage{amsthm}
\usepackage{bbm}
\usepackage{stmaryrd}
\usepackage{mathtools}
\usepackage{mathpartir}
\usepackage{enumitem}
\usepackage{cancel}
\usepackage{pl-syntax}
\usepackage{xspace}
\usepackage{doi}
\usepackage{ifthen}
\usepackage{suffix}
\usepackage{microtype}
\usepackage{calc}
\usepackage{tikz}
\usepackage{footmisc}

\frenchspacing

\makeatletter
\newcommand{\citeposessive}[2][]{%
  \def\@yearlist{#2\ifx\empty#1\else,#1\fi}%
  \citeauthor{#2}'s~[\citeyear{\@yearlist}]}
\makeatother


\newcommand{\programfont}[1]{\ensuremath{\mathsf{#1}}}

\newcounter{numlevels}
\newcommand{\newmultilevelprogram}[3][0]{\newcommand{#2}[#1]{\ifthenelse{\value{numlevels} > 0}{\begin{array}[t]{@{}l@{}}}{\begin{array}{l}}\addtocounter{numlevels}{1}#3\addtocounter{numlevels}{-1}\end{array}}}


\newcommand{\newcommenter}[3]{%
  \newcommand{#1}[1]{%
    \textcolor{#2}{\small\textsf{[#3: {##1}]}}%
  }%
}
\definecolor{darkgreen}{rgb}{0,0.7,0}
\newcommenter{\ethan}{darkgreen}{Ethan}
\newcommenter{\akh}{purple}{AKH}

\theoremstyle{theorem}
\newtheorem{thm}{Theorem}

\newtheorem{lem}{Lemma}

\theoremstyle{definition}
\newtheorem{defn}{Definition}

\newcommand{\one}{\textit{i}\xspace}
\newcommand{\two}{\textit{ii}\xspace}
\newcommand{\three}{\textit{iii}\xspace}


\newcommand{\proves}{\vdash}
\newcommand{\pc}{\ensuremath{\mathrm{pc}}\xspace}

\newcommand{\Ptd}{\textrm{ptd}\xspace}

\newcommand{\ctxsep}{\diamond}

\newcommand{\provesEff}{\proves_{\mkern-4mu\raisebox{-0.5pt}{$\scriptstyle\textsc{e}$}}}
\newcommand{\provesPure}{\proves_{\mkern-4mu\raisebox{-0.5pt}{$\scriptstyle\textsc{p}$}}}
\newcommand{\provesPc}{\proves_{\mkern-6mu\raisebox{-0.5pt}{$\scriptstyle\textsc{pc}$}}}
\definecolor{eff-color}{rgb}{0.5,0.3,0}
\definecolor{pure-color}{rgb}{0,0,0.67}
\definecolor{pc-color}{rgb}{0,0.5,0}
\newcommand{\effColor}[1]{{\color{eff-color}#1}}
\newcommand{\pureColor}[1]{{\color{pure-color}#1}}
\newcommand{\pcColor}[1]{{\color{pc-color}#1}}


\newcommand{\Labs}{\ensuremath{\mathcal{L}}\xspace}
\newcommand{\flowsto}{\sqsubseteq}
\newcommand{\nflowsto}{\not\sqsubseteq}
\newcommand{\JoinL}{\sqcup}

\newcommand{\lequiv}[1][\ell]{\ensuremath{\mathrel{\approx_{#1}}}}
\newcommand{\lequivEff}[2][\ell]{\ensuremath{\mathrel{\cong^{#2}_{#1}}}}
\newcommand{\StExnLequiv}[1][\ell]{\lequivEff[#1]{\text{\upshape\scshape se}}}
\newcommand{\tslequiv}[1][\ell]{\lequivEff[#1]{\text{\upshape\scshape ts}}}

\newcommand{\Eff}{\ensuremath{\mathcal{E}}\xspace}

\newcommand{\ComposeERaw}[2]{#1 \leq #2}
\newcommand{\ComposeE}[2]{\ComposeERaw{[#1]}{#2}}

\newcommand{\MapE}[1]{\programfont{map}_{#1}}

\newcommand{\CoerceName}{\programfont{coerce}}
\newcommand{\CoerceE}[2]{\CoerceName_{#1 \mapsto #2}}
\newcommand{\JoinName}{\programfont{join}\xspace}
\newcommand{\JoinE}[2]{\JoinName_{[#1],#2}}

\newcommand{\prot}{\mathrel{\triangleleft}}

\newcommand{\pcto}[1][\pc]{\mathchoice%
  {\xrightarrow{#1}}
  {\xrightarrow{\raisebox{-1pt}[0pt][0pt]{$\scriptstyle #1$}}}
  {\xrightarrow{#1}}
  {\xrightarrow{#1}}}
\newcommand{\effto}[1][\varepsilon]{\mathchoice%
  {\xrightarrow{#1}}
  {\xrightarrow{\raisebox{-1pt}[0pt][0pt]{$\scriptstyle #1$}}}
  {\xrightarrow{#1}}
  {\xrightarrow{#1}}}

\newcommand{\CapTypePref}[1]{P_{#1}}
\newcommand{\CapturedType}[2]{\CapTypePref{#1}(#2)}
\newcommand{\pureTypeTrans}[1]{\llparenthesis #1 \rrparenthesis}
\newcommand{\CapturedPureTypeTrans}[2]{\CapTypePref{#1}\pureTypeTrans{#2}}
\newcommand{\effTypeTrans}[1]{\llbracket #1 \rrbracket}



\newcommand{\ty}{\,{:}\,}

\newcommand{\seq}{\mathbin{;}}
\newcommand{\subst}[3]{#1 [#2 \mapsto #3]}

\newcommand{\Inl}{\programfont{inl}\xspace}
\newcommand{\Inr}{\programfont{inr}\xspace}
\newcommand{\Unit}{\programfont{unit}\xspace}

\newcommand{\As}{\programfont{as}\xspace}
\newcommand{\In}{\programfont{in}\xspace}
\newcommand{\Match}{\programfont{match}\xspace}
\newcommand{\With}{\programfont{with}\xspace}
\newcommand{\End}{\programfont{end}\xspace}

\newcommand{\Unlabel}{\programfont{unlabel}\xspace}
\newcommand{\If}{\programfont{if}\xspace}
\newcommand{\Then}{\programfont{then}\xspace}
\newcommand{\Else}{\programfont{else}\xspace}
\newcommand{\Catch}{\programfont{catch}\xspace}

\newcommand{\Try}{\programfont{try}\xspace}
\newcommand{\Read}{\programfont{read}\xspace}
\newcommand{\Write}{\programfont{write}\xspace}
\newcommand{\Throw}{\programfont{throw}\xspace}
\newcommand{\Let}{\programfont{let}\xspace}
\newcommand{\Labeled}{\programfont{label}\xspace}

\newcommand{\Captured}[2]{\mathchoice%
  {\left\lfloor #2 \right\rfloor_{#1}}
  {\lfloor #2 \rfloor_{#1}}
  {\lfloor #2 \rfloor_{#1}}
  {\lfloor #2 \rfloor_{#1}}}
\WithSuffix\newcommand\Captured*[1]{\Captured{}{\begin{array}{@{}c@{}}#1\end{array}}}

\makeatletter
\newlength{\@progEqLen}
\newcommand{\progEq}[2][]{%
  \setlength{\@progEqLen}{\maxof{\widthof{\ensuremath{\scriptstyle#1}}}{\widthof{\ensuremath{\scriptstyle#2}}}}
  \addtolength{\@progEqLen}{6pt}
  \mathrel{\overset{#1}{\underset{#2}{\raisebox{1pt}{\rule{\@progEqLen}{2\arrayrulewidth}}\hspace{-\@progEqLen}\raisebox{3pt}{\rule{\@progEqLen}{2\arrayrulewidth}}}}}
}
\makeatother

\newcommand{\LabeledOp}[1]{\Labeled_{#1}}
\newcommand{\LabeledProgram}[2]{\LabeledOp{#1}\mkern2mu(#2)}
\newcommand{\LabeledType}[2]{L_{#1}\mkern2mu(#2)}

\newcommand{\Pair}[2]{({#1}, {#2})}

\newcommand{\Proj}[1]{\programfont{proj}_{#1}\xspace}

\newcommand{\TryCatch}[2]{\Try~\{#1\} \mathrel{\Catch} \{#2\}}

\newmultilevelprogram[5]{\MatchSum}{%
    \Match~#1~\With\\%
    \hspace*{0.5em}|~\Inl(#2) \mathrel{\Rightarrow} #3\\%
    \hspace*{0.5em}|~\Inr(#4) \mathrel{\Rightarrow} #5\\%
    \End%
}
\WithSuffix\newcommand\MatchSum*[5]{\Match~{#1}~\With\mid\Inl({#2}) \Rightarrow {#3} \mid \Inr({#4}) \Rightarrow {#5}~\End}

\newcommand{\UnlabelOp}[2]{\Unlabel_{#1}\mkern2mu(#2)}
\WithSuffix\newcommand\UnlabelOp*[1]{\Unlabel_{#1}}

\newmultilevelprogram[3]{\UnlabelAs}{%
  \Unlabel~#1 \mathrel{\As} #2\\%
  \In~#3%
}
\WithSuffix\newcommand\UnlabelAs*[3]{\Unlabel~#1 \mathrel{\As} #2 \mathrel{\In} #3}

\newmultilevelprogram[2]{\UnlabelIn}{%
  \Unlabel~#1\\
  \In~#2%
}
\WithSuffix\newcommand\UnlabelIn*[2]{\Unlabel~#1 \mathrel{\In} #2}

\newmultilevelprogram[3]{\LetIn}{%
  \Let~#1 = #2\\%
  \In~#3%
}
\WithSuffix\newcommand\LetIn*[3]{\Let~#1 = #2 \mathrel{\In} #3}

\newcommand{\stepsone}{\longrightarrow}
\newcommand{\stepsmany}{\stepsone^\ast}


\newcommand{\Public}{\programfont{public}\xspace}
\newcommand{\Secret}{\programfont{secret}\xspace}

\newcommand{\Fix}{\programfont{fix}\xspace}
\newcommand{\Lift}{\programfont{lift}\xspace}
\newcommand{\Seq}{\programfont{seq}\xspace}

\newcommand{\StateLabel}{\ensuremath{\ell_{\programfont{State}}}\xspace}
\newcommand{\ExnLabel}{\ensuremath{\ell_{\programfont{Exn}}}\xspace}
\newcommand{\TermLabel}{\ensuremath{\ell_{\programfont{PNT}}}\xspace}
\newcommand{\AtkLabel}{\ensuremath{\ell_{\programfont{Atk}}}\xspace}

\newcommand{\e}{\varepsilon}

\newcommand{\StateEff}{\ensuremath{\mathrm{S}}\xspace}
\newcommand{\REff}{\ensuremath{\mathrm{R}}\xspace}
\newcommand{\WEff}{\ensuremath{\mathrm{W}}\xspace}
\newcommand{\ExnEff}{\ensuremath{\mathrm{E}}\xspace}
\newcommand{\PntEff}{\ensuremath{\mathrm{PNT}}\xspace}

\newcommand{\Monad}[1]{\ensuremath{P_{#1}}}

\newcommand{\LiftType}[1]{{#1}_{\bot}}
\newcommand{\LiftProgram}[1]{\Lift(#1)}

\newcommand{\BindM}[1]{\programfont{bind}_{#1}\xspace}

\newcommand{\Fixpoint}[3]{\Fix~#1 \ty #2.\,#3}

\newmultilevelprogram[3]{\SeqTerm}{%
  \Seq~#1 = #2\\%
  \In~#3%
}
\WithSuffix\newcommand\SeqTerm*[3]{\Seq~#1 = #2 \mathrel{\In} #3}

\newmultilevelprogram[3]{\ITE}{%
    \If~#1\\%
    \Then~#2\\%
    \Else~#3%
}
\WithSuffix\newcommand\ITE*[3]{\If~#1~\Then~#2~\Else~#3}

\title{Giving Semantics to Program-Counter Labels via Secure~Effects}

\author{Andrew K. Hirsch}
\orcid{0000-0003-2518-614X}
\affiliation{
  \institution{Max Planck Institute for Software Systems}
  \city{Kaiserslautern and Saarbr\"ucken}
  \country{Germany}
}
\email{akhirsch@mpi-sws.org}

\author{Ethan Cecchetti}
\affiliation{
  \institution{Cornell University}
  \city{Ithaca}
  \state{New York}
  \country{USA}
}
\email{ethan@cs.cornell.edu}
\date{}

\begin{document}

\keywords{semantics of effects, information-flow control, noninterference}

\begin{abstract}
  Type systems designed for information-flow control commonly use a \emph{program-counter label} to track the sensitivity of the context and rule out data leakage arising from effectful computation in a sensitive context.
Currently, type-system designers reason about this label informally except in security proofs, where they use ad-hoc techniques.
We develop a framework based on monadic semantics for effects to give semantics to program-counter labels.
This framework leads to three results about program-counter labels.
First, we develop a new proof technique for noninterference, the core security theorem for information-flow control in effectful languages.
Second, we unify notions of security for different types of effects, including state, exceptions, and nontermination.
Finally, we formalize the folklore that program-counter labels are a lower bound on effects.
We show that, while not universally true, this folklore has a good semantic foundation.

\end{abstract}

\maketitle

\section{Introduction}
\label{sec:introduction}

Static information-flow control~(IFC) assigns \emph{information-flow labels} to data within a program.
These labels describe the sensitivity of the data.
For instance, data labeled secret is more sensitive than data labeled public.
The type system then prevents more-sensitive inputs from influencing less-sensitive outputs.
A flow of information can be \emph{explicit}, if a program directly returns an input,
or \emph{implicit} if a program conditions on the input and returns a different value from each branch.
In both cases, the type system can enforce \emph{noninterference}---the powerful safety property that a program's sensitive inputs will not influence its less-sensitive outputs~\citep{GoguenM82}---%
by checking that the output is at least as sensitive as the inputs used to compute it.

When we combine effects with implicit flows, however, this simple output checking becomes insufficient.
\citet{VolpanoSI96} demonstrate this concern with the following simple program where the secret value~$x$ is either 0 or 1 and \Write modifies the state and returns the singleton value~$()$ of type~\Unit:
$$\ITE*{x = 1}{\Write(1)}{\Write(0)}$$
While this program does not directly write a secret and always returns the same thing, an attacker who can read the final state now learns the value of $x$.

Languages often rule out these effectful implicit flows by tracking the sensitivity of the current control-flow with a \emph{program-counter label} (written~\pc) in the typing judgment~\citep[e.g.,][]{Myers99,PottierS02,MilanoM18}.
If the \pc is private, then private data influenced which command is executing, so writing to public outputs may leak that data.
If the \pc is public, however, only public data has determined which program path was taken, so a decision to write leaks nothing.
For instance, a type system with a program-counter label can detect the leak above since we write to state after branching on secret data.

Type-system designers commonly use such intuitive reasoning when building their type system.
Then they adjust the type system as needed to prove noninterference.
Ideally, designers would instead use semantically- and mathematically-grounded design principles to design type systems.
This approach would make the proof of noninterference almost trivial, since the mathematical grounding of the design principles would guarantee noninterference.
Developing such design principles requires a semantic model of program-counter labels.

A piece of folklore gives a clue for how to develop these semantic models: the \pc label is a lower bound on the effects that can occur in a well-typed program.
Taken literally, this folklore does not even seem to type-check since effects are not labels.
However, it suggests that we need a framework that relates effects and labels in a meaningful way.

To investigate this intuition, we employ a common semantic model for effects.
We translate the earlier example to a monadic form, which returns a pair consisting of the original output and the state set by \Write.
$$\ITE*{x = 1}{\Pair{()}{0}}{\Pair{()}{1}}$$
Indeed, after this translation, checking only the output is sufficient to detect any leaks.
The \pc label is no longer necessary, lending credence to the above-mentioned piece of folklore.

We formalize these intuitions by building a semantic model of program-counter labels based on monadic treatments of effects.
We base our framework on a categorical construct called a \emph{productor}~\citep{Tate13}.
Productors provide the most-general-known framework for the semantics of \emph{producer effects}---a generalization of monadic effects.
(In fact, \citet{Tate13} argues that productors are the most general possible framework for producer effects.)

Since productors, like monads, are a categorical construct, naively applying them to a programming language would require that the programming language only have one variable in its context.
We circumvent this weakness by following a suggestion from \citeposessive{Tate13} conclusion and developing \emph{strong} productors, allowing us to apply our framework to realistic languages.

Our framework requires that the productors capture the sensitivity of the effects they encode.
For instance, when translating the above example into monadic form, the left side of the output pair must capture the visibility of the old output, while the right side must capture the visibility of the heap.
If the translated program were well-typed in a noninterfering language---which it was not in the insecure example above---we would therefore be sure that the original program did not leak data.
We refer to effects captured by these security-typed productors as \emph{secure effects}.

Our core theorem enables proofs of noninterference for effectful languages (with \pc labels) that fit our framework
while only proving it directly for the pure part of the languages (without \pc labels).
As far as we are aware, this is the first theorem proving noninterference for a large swath of languages.
Moreover, this style of proof is nearly unknown in the literature.
(\citet{AlgehedR17} mention that it is possible, but do not explore it in any depth.)

For languages that fit our framework, proving noninterference (of the effects) is almost trivial, as expected.
However, to fit our framework, a language's effects must be secure, and showing that an effect is secure---that it has a productor that properly captures its visibility---%
requires reasoning about who can see the results of the effect.
Luckily, for important examples, this reasoning is not difficult, so our proof technique leads to simpler proofs than previous techniques.
As a result, we call this proof technique \emph{Noninterference Half-Off}.

In addition to this new proof technique, we use our framework to unify different notions of noninterference from the information-flow literature.
Some notions consider the termination behavior of programs (termination-sensitive), while others do not (termination-insensitive).
Our framework shows that these two notions of noninterference are distinguished by whether or not nontermination is considered a secure effect.
This view is both intellectually satisfying and provides half-off proofs of termination-sensitive noninterference.

Finally, we formalize the folklore that \pc labels serve as a lower bound on effects.
We show that the aphorism is not always true for noninterfering languages that fit our framework,
but it is true for fundamental reasons in every realistic information-flow language of which we are aware.

In Section~\ref{sec:pure-ifc-lang}, we review the Dependency Core Calculus~(DCC)~\citep{AbadiBJHR99}, a simple, pure, noninterfering language.
DCC serves as an introduction to necessary parts of IFC languages and as the basis of our example languages throughout the paper.
We then add the following contributions:
\begin{itemize}
\item We demonstrate a productor-based translation for a language with state and exceptions.
  Beyond exploring the semantics of secure effects, this allows a simple proof of noninterference (Section~\ref{sec:state-and-exns}).
\item By treating possible nontermination as an effect---as is common in the effects literature---we obtain a simple proof of termination-sensitive noninterference (Section~\ref{sec:tsni}).
\item We present our general semantic framework for effectful languages with IFC labels (Section~\ref{sec:prod-effects-ifc}), allowing us to prove properties about a wide class of IFC languages.
\item We define and prove the \emph{Noninterference Half-Off Theorem} (Theorem~\ref{thm:niho}), allowing us to extend noninterference from pure languages to effectful languages in many settings (Section~\ref{sec:niho}).
\item We show that the folklore ``the program-counter label is a lower bound on the effects in a program'' need not hold in our framework.
  We also show that extending our framework with a few simple rules makes it hold (Section~\ref{sec:deepening-pc-effect}).
\item We extend the theory of productors to include multiple-input languages like simply-typed $\lambda$-calculus and DCC (Appendix~\ref{sec:multi-input}).
\end{itemize}

\section{An Information-Flow-Control Type System for a Pure Language}
\label{sec:pure-ifc-lang}

We begin by reviewing \citeposessive{AbadiBJHR99} \emph{Dependency Core Calculus}~(DCC), a pure language with a simple noninterference property.
DCC will form the basis of our examples in Sections~\ref{sec:state-and-exns} and~\ref{sec:tsni}.
It also serves as a good language to introduce information-flow control~(IFC) and noninterference, as well as the notation for this paper.

\begin{figure*}[b]
  \vspace{-\baselineskip}
  \begin{syntax}
    \categoryFromSet[Labels]{\ell}{\Labs}
    \category[Types]{\tau} \alternative{\Unit} \alternative{\tau_1 + \tau_2} \alternative{\tau_1 \times \tau_2} \alternative{\tau_1 \to \tau_2} \alternative{\LabeledType{\ell}{\tau}}
    \category[Values]{v} \alternative{()} \alternative{\Inl(v)} \alternative{\Inr(v)} \alternative{\Pair{v_1}{v_2}} \alternative{\lambda x\ty\tau.\, e} \alternative{\LabeledProgram{\ell}{v}}
    \category[Expressions]{e} \alternative{x} \alternative{()}
    \alternative{\lambda x \ty \tau.\,e} \alternative{e_1~e_2}
    \alternative{\Pair{e_1}{e_2}} \alternative{\Proj{1}(e)} \alternative{\Proj{2}(e)}
    \alternativeLine{\Inl(e)} \alternative{\Inr(e)} \alternative{(\MatchSum*{e}{x}{e_1}{y}{e_2})}
    \alternativeLine{\LabeledProgram{\ell}{e}} \alternative{\UnlabelAs*{e_1}{x}{e_2}}
    \category[Evaluation Contexts]{E} \alternative{[\cdot]}
    \alternative{E~e} \alternative{v~E}
    \alternative{\Pair{E}{e}} \alternative{\Pair{v}{E}} \alternative{\Proj{1}(E)} \alternative{\Proj{2}(E)}
    \alternativeLine{\Inl(E)} \alternative{\Inr(E)} \alternative{(\MatchSum*{E}{x}{e_1}{y}{e_2})}
    \alternativeLine{\LabeledProgram{\ell}{E}} \alternative{\UnlabelAs*{E}{x}{e}}
  \end{syntax}
  \caption{Grammar for DCC Types and Terms}
  \label{fig:dcc-types-terms}
\end{figure*}

Figure~\ref{fig:dcc-types-terms} contains the syntax of DCC.
The heart of DCC is the simply-typed $\lambda$-calculus with products and sums.
The only additional terms are the security features that make DCC interesting from our perspective: $\LabeledProgram{\ell}{e}$ and $\UnlabelAs*{e_1}{x}{e_2}$.
We will also make free use of \Let notation, with its standard definition.
(For simplicity, we omit the fixpoint operator present in the original language~\citep{AbadiBJHR99}, though we will add it back in Section~\ref{sec:tsni}.)

The security terms use a set of \emph{information-flow labels}, \Labs, over which DCC is parameterized, that represent restrictions on data use.
For instance, if we have labels~\Secret and~\Public, then data labeled \Secret should not be used to compute data labeled \Public.
We require that labels form a preorder.
That is, there is a reflexive and transitive relation $\flowsto$ (pronounced ``flows to'').
For presentation clarity, we also assume that $\Labs$ forms a join semilattice,
meaning any two labels~$\ell_1$ and~$\ell_2$ have a join---a least upper bound---denoted $\ell_1 \JoinL \ell_2$, and there is a \emph{top} element, denoted $\top$, such that $\ell \flowsto \top$ for all~$\ell \in \Labs$.
We note again that this is only for clarity of presentation; we could replace every join with any upper bound, and disallow rules that use a join when no upper bound exists.
Intuitively, if $\ell_1 \flowsto \ell_2$, then $\ell_2$ is at least as restrictive as $\ell_1$, so $\top$ is the most-restrictive label.
We note that most IFC work assumes that labels form a lattice, meaning labels also have greatest lower bounds and there is a least element $\bot$.
We omit this additional structure as we do not find it helpful.

The term $\LabeledProgram{\ell}{e}$ represents protecting the output of~$e$ at label~$\ell$.
That is, $e$~should only be used to compute information at levels at least as high as $\ell$.
Such computations are possible using the term $\UnlabelAs*{e_1}{x}{e_2}$, which requires the output type of $e_2$ to be at a high-enough level, and if it is, allows use of $e_1$ as if it were not labeled through the variable $x$. 

The concept of a type~$\tau$ being ``of high enough level'' to use information at label~$\ell$ is expressed in a relation $\ell \prot \tau$, which is read as ``$\ell$ is protected by $\tau$'' or ``$\tau$ protects $\ell$.''
The formal rules defining this relation are in Figure~\ref{fig:dcc-protection}.
Intuitively, if $\ell \prot \tau$, then $\tau$~information is at least as secret as~$\ell$.

\begin{figure}
  \begin{mathpar}
    \infer{\ell \flowsto \ell'}{\ell \prot \LabeledType{\ell'}{\tau}}
    \and
    \infer{\ell \prot \tau}{\ell \prot \LabeledType{\ell'}{\tau}}
    \and
    \infer{\ell \prot \tau_1 \\ \ell \prot \tau_2}{\ell \prot \tau_1 \times \tau_2}
    \and
    \infer{\ell \prot \tau_2}{\ell \prot \tau_1 \to \tau_2}
  \end{mathpar}
  \caption{Protection Rules for DCC}
  \label{fig:dcc-protection}
\end{figure}

The typing rules for $\LabeledProgram{\ell}{e}$ and $\UnlabelAs*{e_1}{\ell}{e_2}$ are as follows:
\begin{mathpar}
  \infer{\Gamma \proves e : \tau}{\Gamma \proves \LabeledProgram{\ell}{e} : \LabeledType{\ell}{\tau}} \and
  \infer{\Gamma \proves e_1 : \LabeledType{\ell}{\tau_1}\\ \Gamma, x \ty \tau_1 \proves e_2 : \tau_2\\ \ell \prot \tau_2}{\Gamma \proves \UnlabelAs*{e_1}{x}{e_2} : \tau_2}
\end{mathpar}
Notice the use of the protection relation in the latter rule.
Since \Unlabel allows a program to compute with labeled data, this check requires the output of that computation to be at least as sensitive as the input.
This protection premise is the main security check in DCC's type system.

The operational semantics of DCC are mostly the standard semantics of call-by-value simply-typed $\lambda$-calculus,
so we only discuss the semantics of the terms $\LabeledProgram{\ell}{e}$ and \mbox{$\UnlabelAs*{e_1}{\ell}{e_2}$}.
We first note that $\LabeledProgram{\ell}{E}$ and $\UnlabelAs*{E}{\ell}{e_2}$ are evaluation contexts, where $E$ stands for an arbitrary evaluation context.
That is, computation can take place under both $\LabeledProgram{\ell}{-}$ and $\UnlabelAs*{-}{x}{e_2}$.
Note that computation \emph{cannot} take place in the second expression of an \Unlabel{} operation, since this is expected to run after binding the variable~$x$.
The only remaining operational semantics rule is as follows:
$$\UnlabelAs*{(\LabeledProgram{\ell}{v})}{x}{e} \stepsone \subst{e}{x}{v}$$
The full definition of the type system and operational semantics of DCC are in Appendix~\ref{sec:full-pure-typing}.

DCC's main security theorem is its \emph{noninterference} theorem.
It formalizes the fact that programs do not compute, e.g., public information with secret data.
The theorem requires a notion of equivalence at a label $\ell$, representing what an attacker who can see values only up to label $\ell$ can distinguish.
The definition is contextual to allow for comparison of first-class functions.
\begin{defn}[$\ell$-Equivalent Programs]
  \label{defn:ell-equiv}
  We say programs $e_1$ and $e_2$ are \emph{$\ell$-equivalent}, denoted $e_1 \lequiv e_2$, if for all expression contexts $C$ such that
  $\proves C[e_i] : \LabeledType{\ell}{\Unit + \Unit}$ and $C[e_i] \stepsmany v_i$ for both $i = 1, 2$, then $v_1 = v_2$.
\end{defn}

Intuitively, expressions are $\ell$-equivalent if no well-typed decision procedure with output labeled~$\ell$ can distinguish them.
Note that this definition only requires equivalent outputs when both programs terminate.
Because we omitted DCC's fixpoint operator, the language is strongly normalizing so both terms always converge.
We will address potential nontermination in Section~\ref{sec:tsni}.

We use this definition to say that two well-typed expressions must be $\ell$-equivalent unless their labels allow them to influence~$\ell$.
Formally, the protection relation defines the label of data, leading to the following theorem.
\citet{BowmanA15} proved the version we use here and \citet{AlgehedB19} provided a machine-checked proof in Agda.
\begin{thm}[Noninterference for DCC~\citep{BowmanA15,AlgehedB19}]
  \label{thm:dcc-ni}
  For expressions $e_1$ and $e_2$ and $\ell \in \Labs$, if \mbox{$\Gamma \proves e_1 : \tau$}, \mbox{$\Gamma \proves e_2 : \tau$},
  and $\ell \prot \tau$, then for all labels $\AtkLabel \in \Labs$, either $\ell \flowsto \AtkLabel$ or $e_1 \lequiv[\AtkLabel] e_2$.
\end{thm}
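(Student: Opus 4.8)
The plan is to prove, by a binary logical relation indexed by the attacker label, a compositional statement that specializes to the theorem; throughout I fix $\AtkLabel$ and assume $\ell \nflowsto \AtkLabel$, since otherwise the conclusion $\ell \flowsto \AtkLabel$ holds trivially.

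\textbf{Step 1 (the logical relation).} I would define, by induction on types, a value relation $\mathcal{V}_{\AtkLabel}\llbracket\tau\rrbracket$ and an expression relation $\mathcal{E}_{\AtkLabel}\llbracket\tau\rrbracket$ on closed, well-typed terms: the clauses at $\Unit$, $\tau_1 \times \tau_2$, $\tau_1 + \tau_2$, and $\tau_1 \to \tau_2$ are the standard ones (componentwise for pairs and sums, related inputs to related outputs for functions), and $(e_1, e_2) \in \mathcal{E}_{\AtkLabel}\llbracket\tau\rrbracket$ holds when the two terms normalize (DCC without fixpoint is strongly normalizing, as noted above) to values related by $\mathcal{V}_{\AtkLabel}\llbracket\tau\rrbracket$. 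The decisive clause is at $\LabeledType{\ell'}{\tau}$: $(\LabeledProgram{\ell'}{v_1}, \LabeledProgram{\ell'}{v_2})$ is related exactly when $\ell' \flowsto \AtkLabel$ \emph{implies} $(v_1, v_2) \in \mathcal{V}_{\AtkLabel}\llbracket\tau\rrbracket$ --- so crossing a label the attacker cannot see makes the relation trivial. Lifting to open terms over pointwise-related closing substitutions gives the judgment $\Gamma \models e_1 \sim e_2 : \tau$.

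\textbf{Step 2 (protection trivializes).} By induction on the derivation of $\ell \prot \tau$ I would show that, since $\ell \nflowsto \AtkLabel$, the relation $\mathcal{V}_{\AtkLabel}\llbracket\tau\rrbracket$ relates \emph{every} pair of closed values of type $\tau$. The labeled base case $\ell \prot \LabeledType{\ell'}{\tau}$ from $\ell \flowsto \ell'$ uses transitivity of $\flowsto$: if $\ell' \flowsto \AtkLabel$ then $\ell \flowsto \AtkLabel$, a contradiction, so $\ell' \nflowsto \AtkLabel$ and the decisive clause gives triviality. The product case and the recursive labeled case follow from the induction hypotheses; the case $\ell \prot \tau_1 \to \tau_2$ from $\ell \prot \tau_2$ additionally uses strong normalization, so that every application reduces to a value, which the induction hypothesis at $\tau_2$ then relates.

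\textbf{Step 3 (compatibility).} I would establish the compatibility lemmas --- one per term former --- showing that $\models$ is preserved by every construct; this yields both that $\models$ is a congruence with respect to term contexts and, in particular, the fundamental lemma $\Gamma \proves e : \tau \Rightarrow \Gamma \models e \sim e : \tau$. All cases but one are routine. The exception, and the step I expect to be the main obstacle, is the \Unlabel rule: having reduced the two scrutinees to $\LabeledProgram{\ell'}{w_1}$ and $\LabeledProgram{\ell'}{w_2}$ and taken the reduction step, one must relate the substituted bodies at $\tau_2$; when $\ell' \flowsto \AtkLabel$ the bodies receive related arguments and the induction hypothesis applies, while when $\ell' \nflowsto \AtkLabel$ --- where $w_1, w_2$ need not be related at all --- one appeals instead to Step 2 via the typing premise $\ell' \prot \tau_2$. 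That the single security check in DCC's type system is exactly what repairs the otherwise-broken branch is the crux of the argument.

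\textbf{Step 4 (conclusion).} Given $\Gamma \proves e_1 : \tau$, $\Gamma \proves e_2 : \tau$, $\ell \prot \tau$, and $\ell \nflowsto \AtkLabel$: by Step 2 all closed well-typed values of type $\tau$ are $\mathcal{V}_{\AtkLabel}$-related, and closed well-typed terms normalize to such values, so $\Gamma \models e_1 \sim e_2 : \tau$. For any context $C$ with $\proves C[e_i] : \LabeledType{\AtkLabel}{\Unit + \Unit}$, the congruence property of Step 3 (with the non-hole subterms of $C$ filled in using the fundamental lemma) gives $\models C[e_1] \sim C[e_2] : \LabeledType{\AtkLabel}{\Unit + \Unit}$. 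Unfolding $\mathcal{E}_{\AtkLabel}$ and then $\mathcal{V}_{\AtkLabel}$: each $C[e_i]$ normalizes to some $\LabeledProgram{\AtkLabel}{w_i}$, and since $\AtkLabel \flowsto \AtkLabel$ the decisive clause forces $(w_1, w_2) \in \mathcal{V}_{\AtkLabel}\llbracket\Unit + \Unit\rrbracket$, whence $w_1 = w_2$ because the relation at $\Unit + \Unit$ relates $\Inl(())$ and $\Inr(())$ each only to itself. Thus the two normal forms coincide, which is precisely $e_1 \lequiv[\AtkLabel] e_2$.
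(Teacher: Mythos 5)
Your proposal is correct in outline, but note that the paper itself does not prove Theorem~\ref{thm:dcc-ni}: it imports it by citation, and the cited proofs~\citep{BowmanA15,AlgehedB19} proceed by a type-directed translation of DCC into System~F (repairing the argument of \citet{TseZ04}) and derive noninterference from parametricity, with \citet{AlgehedB19} mechanizing a streamlined version in Agda. You instead give a direct, attacker-indexed binary logical relation over the operational semantics, in the spirit of the PER/logical-relations proofs the paper mentions in related work~\citep{AbadiBJHR99,ShikumaI08,RajaniG18}. The trade-off is roughly this: the translation-plus-parametricity route reuses an existing metatheory of free theorems and comes with a machine-checked artifact, while your direct relation is self-contained and elementary precisely because the fragment of Section~\ref{sec:pure-ifc-lang} omits \Fix, so strong normalization lets you define the expression relation as ``both sides normalize to related values'' and lets your protection-trivializes lemma go through at function types; it also localizes the security content exactly where the paper's narrative puts it, namely the $\ell \prot \tau_2$ premise of \Unlabel rescuing the branch where the unlabeled values are unrelated. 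Two small points to tighten: Step~2 must be stated for an \emph{arbitrary} label $\ell'$ with $\ell' \nflowsto \AtkLabel$ (in the \Unlabel case you invoke it at the label of the scrutinee, not at the theorem's $\ell$; the proof generalizes verbatim), and Step~4 implicitly uses determinism of the call-by-value semantics together with canonical forms at $\LabeledType{\AtkLabel}{\Unit + \Unit}$ so that the values $v_i$ of Definition~\ref{defn:ell-equiv} coincide with the normal forms your relation constrains---both routine. Finally, be aware that your argument is tied to the strongly normalizing fragment; once \Fix and pointed types are added (Section~\ref{sec:tsni}), this style of proof needs step-indexing or a domain-theoretic relation, which is one reason proofs of record for DCC-like calculi take other routes.
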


\section{Example: Noninterference in a Language with State and Exceptions}
\label{sec:state-and-exns}

We now extend our simple noninterfering pure language with two effects: state and exceptions.
Both are simplified for space and ease of understanding.
Specifically, we consider only one (typed) state cell and one type of exception.
Moreover, the type of this state cell,~$\sigma$, must not contain a function as a subterm in order to avoid concerns around higher-order state.
\footnote{Allowing higher-order state is possible, but it requires recursive types and complicates reasoning about termination.}
However, neither is difficult to broaden to more-realistic versions.
We extend the language with the following syntax:
\begin{syntax}
  \category[Expressions]{e} \alternative{\cdots} \alternative{\Read} \alternative{\Write(e)} \alternative{\Throw} \alternative{\TryCatch{e_1}{e_2}}
  \category[Evaluation Contexts]{E} \alternative{\cdots} \alternative{\Write(E)} \alternative{\TryCatch{E}{e_2}}
  \category[Throw Contexts]{T} \alternative{[\cdot]} \alternative{T~e} \alternative{v~T} \alternative{\Pair{T}{e}} \alternative{\Pair{v}{T}} \alternative{\Proj{1}(T)} \alternative{\Proj{2}(T)}
  \alternativeLine{\Inl(T)} \alternative{\Inr(T)} \alternative{(\MatchSum*{T}{x}{e_1}{y}{e_2})}
  \alternativeLine{\LabeledProgram{\ell}{T}} \alternative{\UnlabelAs*{T}{x}{e}} \alternative{\Write(T)}
\end{syntax}
Here \Read returns the value of type~$\sigma$ currently stored in the one state cell, while $\Write(e)$ replaces that value with~$e$ and returns $\Unit$.
We include $\Write(E)$ as an evaluation context, ensuring that~$e$ is reduced to a value before being stored.
The term \Throw throws an exception, which propagates through contexts until it either hits top-level or a \Try block.
We use a throw context $T$ to implement this propagation.
$T$ is identical to evaluation contexts except it does not include \Try blocks.
Finally, \mbox{$\TryCatch{e_1}{e_2}$} runs~$e_1$ until it returns either a value or an exception.
If it returns a value~$v$, then the try-catch returns $v$ as well.
If it throws an exception, however, the try-catch instead discards the exception and runs~$e_2$.
We include $\TryCatch{E}{e_2}$ as an evaluation context, so it will evaluate~$e_1$ to a value or exception, but not as a throw context so it can catch and discard an exception.

These considerations give rise to the following typing rules:
\begin{mathpar}
  \infer{ }{\Gamma \proves \Read : \sigma}
  \and
  \infer{\Gamma \proves e : \sigma}{\Gamma \proves \Write(e) : \Unit}
  \and
  \infer{ }{\Gamma \proves \Throw : \tau}
  \and
  \infer{\Gamma \proves e_1 : \tau\\ \Gamma \proves e_2 : \tau}{\Gamma \proves \TryCatch{e_1}{e_2} : \tau}
\end{mathpar}
The resulting operational semantic rules are defined on pairs of an expression and a state cell~$s$ of type~$\sigma$.
The rules for our effectful operations are as follows:
\begin{mathpar}
  \langle \Read, s \rangle \stepsone \langle s, s \rangle
  \and
  \langle \Write(v), s \rangle \stepsone \langle (), v \rangle
  \\
  \langle T[\Throw],s \rangle \stepsone \langle \Throw,s \rangle
  \\
  \langle \TryCatch{v}{e},s \rangle \stepsone \langle v,s \rangle
  \and
  \langle \TryCatch{\Throw}{e},s \rangle \stepsone \langle e,s \rangle
\end{mathpar}
We modify the rest of the operational semantics by including $s$ without modification in every other rule, as is standard for simply-typed $\lambda$-calculus with state~\citep[Chapter 13.3]{Pierce02}.

Technically, our previous noninterference theorem (Theorem~\ref{thm:dcc-ni}) still holds, and by the same proof.
However, the statement of this theorem is now very weak: it assumes that an attacker cannot see the state or distinguish \Throw from any other statement.
This allows implicit flows.
To see how exceptions allow implicit flows, consider the following example program that leaks its input to anyone who can distinguish \Throw from non-exceptional output:
$$h : \LabeledType{\ell}{\Unit + \Unit} \proves \UnlabelAs{h}{x}{\MatchSum{x}{\_}{\Throw}{\_}{\LabeledProgram{\ell}{}}} : \LabeledType{\ell}{\Unit}$$

\subsection{Ruling Out Implicit Flows}
\label{sec:ruling-out-implicit}

We now aim to eliminate implicit flows and recover a strong notion of noninterference with realistic assumptions about an attacker's power.
We achieve this result by changing our typing rules.
We associate a \emph{program-counter label} $\pc$ with the typing judgment to track the sensitivity of the context.
Thus, the typing judgment now takes the form $\Gamma \ctxsep \pc \proves e : \tau$ where $\pc$ is an information-flow label.

In the examples of implicit flows so far, vulnerabilities arose when we performed certain actions depending on the value of a secret expression.
To prevent such problems, we might update $\pc$ so that it is at least as high as any value we conditioned on in a \Match statement.
We could then ensure that actions that might leak information about those values cannot type-check in a sensitive environment.
However, there is a problem with doing this in DCC: we never \Match on labeled data.
Instead, we must first use \Unlabel, removing the label from the data before we can use that data in any way, including in a \Match expression.
This is because DCC is a \emph{coarse-grained} information-flow language.
(In fact, it is the paradigmatic coarse-grained information-flow language.)

The fact that labeled data can only be used in an \Unlabel{} expression means that the \Unlabel{} rule is the only rule in which we can reasonably increase the \pc{}.
(This may seem like a significant restriction, since we are increasing the program-counter label even when we do not \Match{} on the data we are unlabeling.
However, recent research has shown that coarse-grained information flow is equivalent to fine-grained information flow, which would increase the program-counter label in the \Match statement~\citep{RajaniG18}.)
We increase the program counter label using the join operator on labels we discussed in Section~\ref{sec:pure-ifc-lang}.
Thus, the rule for \Unlabel{} becomes the following:
$$\infer{\Gamma \ctxsep \pc \proves e_1 : \LabeledType{\ell}{\tau_1}\\ \Gamma, x : \tau_1 \ctxsep \pc \sqcup \ell \proves e_2 : \tau_2\\ \ell \prot \tau_2}{\Gamma \ctxsep \pc \proves \UnlabelAs*{e_1}{x}{e_2} : \tau_2}$$

We must also change the typing rules for functions.
Intuitively, an expression $\lambda x \ty \tau.\,e$ will execute the actions of $e$ when it is applied, not when it is defined.
It is therefore safe to \emph{construct} a $\lambda$-expression in any context, but it is only safe to \emph{apply} one in a context where its effects do not leak information.
Since we cannot, in general, know where a function will be used when it is constructed, we instead change the type of the function to restrict where it can be applied.
The type $\tau_1 \pcto \tau_2$ is a function that takes an argument of type~$\tau_1$, returns a value of type~$\tau_2$, and can be safely run in contexts which have not discriminated on anything higher than $\pc$.
This gives rise to the following two typing rules:
\begin{mathpar}
  \infer{\Gamma, x : \tau_1 \ctxsep \pc_1 \proves e : \tau_2}{\Gamma \ctxsep \pc_2 \proves \lambda x\ty\tau_1.\,e : \tau_1 \pcto[\pc_1] \tau_2} \and
  \infer{\Gamma \ctxsep \pc_1 \proves e_1 : \tau_1 \pcto[\pc_2] \tau_2\\ \Gamma \ctxsep \pc_1 \proves e_2 : \tau_1\\ \pc_1 \flowsto \pc_2}{\Gamma \ctxsep \pc_1 \proves e_1~e_2 : \tau_2}
\end{mathpar}

This change also necessitates adjusting the function protection rule from Figure~\ref{fig:dcc-protection}.
Applying a function with type $\tau_1 \pcto \tau_2$ can still reveal data through its output at the level of~$\tau_2$, but it can also reveal information about control flow up to label~$\pc$.
We therefore need to use the $\pc$ to ensure that effects will not leak information, leading to the following modified protection rule.
$$\infer{
  \ell \prot \tau_2 \\
  \ell \flowsto \pc
}{\ell \prot \tau_1 \pcto \tau_2}$$

To determine how the $\pc$ label should relate to effects, we need to be clear about what effects the attacker can and cannot see.
We assume that anyone who can see things labeled $\StateLabel$ can see the value stored in the state cell.
An attacker who can read $\StateLabel$ can see writes to state---since writes can change the stored value---but not reads---since reads leave the value unchanged.
This is a reasonable assumption in many cases, but not all.
For instance, it assumes the attacker cannot extract information through cache-based timing attacks~\citep[e.g.,][]{Kocher96}.
We also assume that any attacker who can see information labeled $\ExnLabel$ can distinguish between exceptions and other values.
An attacker who \emph{cannot} see information labeled $\ExnLabel$ is therefore not privy to the success or error status of the program, meaning they also cannot see the result if it returns successfully.
They may, however, still be able to observe the state cell if they can read~$\StateLabel$.

We can use this model to determine the \pc-based typing rules for our effectful operations.
Three of the rules are fairly simple.
The \Read rule allows any \pc, while the \Write and \Throw rules must check that the context is not too sensitive to run this computation.
\begin{mathpar}
  \infer{ }{\Gamma \ctxsep \pc \proves \Read : \sigma}
  \and
  \infer{\Gamma \ctxsep \pc \proves e : \sigma\\ \pc \flowsto \StateLabel}{\Gamma \ctxsep \pc \proves \Write(e) : \Unit}
  \and
  \infer{\pc \flowsto \ExnLabel}{\Gamma \ctxsep \pc \proves \Throw : \tau}
\end{mathpar}
The try-catch rule is slightly more complicated, since the \Catch block only executes if the try block throws an exception, and therefore may return different values depending on whether or not an exception occurs.
To ensure this control flow does not leak data, the output must be at least as sensitive as the control flow: $\ExnLabel$.
$$\infer{
  \Gamma \ctxsep \pc \proves e_1 : \tau \\
  \Gamma \ctxsep \pc \proves e_2 : \tau \\
  \ExnLabel \prot \tau
}{\Gamma \ctxsep \pc \proves \TryCatch{e_1}{e_2} : \tau}$$

None of the other rules change $\pc$, since none of the other rules can leak information about the context or change the context based on a labeled value.
These rules may appear to ensure security against the attacker we sketched above, but unfortunately this intuition misses the fact that exceptions can impact control flow outside just try-catch blocks.
Consider the following program:
\begin{equation}
  \label{eqn:exn-write-leak}
  h \ty \LabeledType{\ExnLabel}{\Unit + \Unit}, s \ty \sigma \proves \LetIn{\_}{\UnlabelAs{h}{x}{\MatchSum{x}{\_}{\Throw}{\_}{\LabeledProgram{\ExnLabel}{}}}}{\Write(s)} : \LabeledType{\ExnLabel}{\Unit}
\end{equation}
If $h = \LabeledProgram{\ExnLabel}{\Inl~()}$, this will throw an exception and the \Write will never execute.
Notably, this program leaks the value of $h$ to anyone who can see~$\StateLabel$.
\citet{PottierS02} eliminate this leak by constraining what effects can execute after an expression that may throw an exception.
While our framework can handle this generality (see Section~\ref{sec:bett-state-except}),
we take a more restrictive but far simpler approach and require that $\ExnLabel \flowsto \StateLabel$.

Proving that we have successfully eliminated data leaks requires using a notion of $\ell$-equivalence that accounts for exceptions and state.
Notably, to properly capture which attackers may view which values, our notion differs depending on how~$\ell$ relates to $\StateLabel$~and~$\ExnLabel$.

\begin{defn}
  \label{defn:state-exn-low-equiv}
  We say programs $e_1$ and $e_2$ are \emph{state and exception \mbox{$\ell$-equivalent}}, denoted $e_1 \StExnLequiv e_2$, if for all values $s$ and expression contexts $C$ such that
  $\proves s : \sigma$, ${} \ctxsep \pc \proves C[e_i] : \LabeledType{\ell}{\Unit + \Unit}$, and $\langle C[e_i], s \rangle \stepsmany \langle v_i, s_i \rangle$ for both $i = 1, 2$,
  then $v_1 = v_2$ if $\ExnLabel \flowsto \ell$ and $s_1 = s_2$ if $\StateLabel \flowsto \ell$.
\end{defn}

Intuitively, Definition~\ref{defn:state-exn-low-equiv} says that $e_1$ and $e_2$ are equivalent if no program will let an attacker at label~$\ell$ distinguish them through either the program output or the state.
Because the attacker can only see the program output if $\ExnLabel \flowsto \ell$, we only check output equivalence in that case.
Similarly, because the attacker can only see the state cell if $\StateLabel \flowsto \ell$, we only check equivalence of the state cells when the flow holds.
Note that structural equality on state values is sufficient because we assumed $\sigma$ contains no function types as subterms.

While it is possible to directly prove our type system enforces noninterference using such an equivalence relation~\citep{WayeBKCR15,RussoCH08,TsaiRH07}, we take a different approach.
We formalize the view that the \pc allows only secure effects to simplify the noninterference proof and help avoid the need for clever ad-hoc reasoning, such as the argument we made for try-catch.

\subsection{Tracking Effects}
\label{sec:tracking-effects-state-exns}

To show that the $\pc$ label restricts well-typed programs to be those with secure effects, we need to track the effects in a program.
To do so, we use a standard type-and-effect system~\citep{Lucassen88,Nielson96,NielsonN99,MarinoM09}.
This type-and-effect system assigns each step of a typing proof an effect $\e$ from a set \Eff of possible effects.
Therefore, we need to decide on the contents of \Eff.

So far, we have described our language has having two effects: state and exceptions.
We might therefore let $\Eff = \{\StateEff, \ExnEff\}$ where $\StateEff$ represents state and $\ExnEff$ exceptions.

This choice is undesirable for two reasons.
First, consider the following program (where the state is of type $\Unit + \Unit$):
$$x : \Unit + \Unit \proves \MatchSum{x}{\_}{\Read}{\_}{\Throw} : \Unit + \Unit$$
This program could either read state or throw an exception.
So which effect should we give it?
We must note that both are possible, which we can do by having an effect that represents ``can use state and/or throw an exception.''
In fact, we need an effect for each possible \emph{collection} of our effects.
Thus our possible effects come from the power set, so $\Eff = 2^{\{\StateEff, \ExnEff\}}$.
Notably, this gives our effects a nice lattice structure, as is standard for a power set.

Second, considering state as a single effect does not allow us to represent that our attacker can only see writes.
For instance, consider the following program in the same setting as above:
$$h : \LabeledType{\ell}{\Unit + \Unit} \proves \UnlabelAs{h}{x}{\MatchSum{x}{\_}{\LabeledProgram{\ell}{\Read}}{\_}{\LabeledProgram{\ell}{\Inl~()}}} : \LabeledType{\ell}{\Unit + \Unit}$$
An attacker who cannot distinguish values labeled $\ell$ cannot distinguish which branch the program takes.
However, if we were to keep state as a single effect, we would have to label that branch as having the state effect, and therefore disallow it.
To resolve this problem, we separate state operations into read effects \REff and write effects \WEff and change \Eff to $2^{\{\REff, \WEff, \ExnEff\}}$.
Note that since our attacker cannot see reads at all, we could consider \Read to be a pure operation.
Though this would simplify a few technical details, we find it is more intuitive to include \REff as an effect.

Now we can develop our type-and-effect system.
We again change the form of the typing judgment, this time to $\Gamma \proves e : \tau \ctxsep \varepsilon$, where $\varepsilon \subseteq \{\REff, \WEff, \ExnEff\}$.
For readability, we will often write this set without curly brackets.
The example program above that could read state or throw an exception would therefore have the typing judgment $\Gamma \proves e : \LabeledType{\ell}{\Unit + \Unit} \ctxsep \REff, \ExnEff$.

Since we are trying to analyze the security of the program, we create a function $\ell_{-}$ from effects to labels, associating a label~$\ell_\e$ with each effect~$\e$.
This label corresponds to our attacker model of who can observe the effect.
For $\WEff$ and $\ExnEff$ we already have these labels---$\StateLabel$ and $\ExnLabel$, respectively.
Because reads are not visible, we can set $\ell_\REff = \top$.
For other effects, the label $\ell_\e$ should capture who might observe \emph{any} component of $\e$, so it should be a lower bound on the components of~$\e$.
That is, $\ell_{\WEff,\ExnEff} \flowsto \StateLabel$ and $\ell_{\WEff,\ExnEff} \flowsto \ExnLabel$.

Now we can build our type-and-effect system, modifying the rules of the pure typing system.
As in the $\pc$ case, most of the typing rules do not change $\e$, since most terms do not change the effects a program may run.
For the same reasons as before, functions and the four rules we added explicitly for effects do change.
Since $\lambda$-expressions execute effects when applied but not when defined, we take the same approach as before and record the effects in the type.
We also modify the function protection rule as we did previously.
This gives rise to the following rules:
\begin{mathpar}
  \infer{\Gamma, x \ty \tau_1 \proves e : \tau_2 \ctxsep \varepsilon}{\Gamma \proves \lambda x \ty \tau_1.\,e : \tau_1 \effto \tau_2 \ctxsep \varnothing}
  \and
  \infer{\Gamma \proves e_1 : \tau_1 \effto[\varepsilon_1] \tau_2 \ctxsep \varepsilon_2\\\\ \Gamma \proves e_2 : \tau_1 \ctxsep \varepsilon_3}{\Gamma \proves e_1~e_2 : \tau_2 \ctxsep \varepsilon_1 \cup \varepsilon_2 \cup \varepsilon_3}
  \and
  \infer{
    \ell \prot \tau_2 \\
    \ell \flowsto \ell_\e
  }{\ell \prot \tau_1 \effto \tau_2}
\end{mathpar}
The rules for our extended expressions must incur appropriate effects.
Again, three are fairly simple.
\begin{mathpar}
  \infer{ }{\Gamma \proves \Read : \sigma \ctxsep \REff}
  \and
  \infer{\Gamma \proves e : \sigma \ctxsep \varepsilon}{\Gamma \proves \Write(e) : \Unit \ctxsep \varepsilon \cup \WEff}
  \and
  \infer{ }{\Gamma \proves \Throw : \tau \ctxsep \ExnEff}
\end{mathpar}
Reading a value gives an \REff~effect, while $\Write(e)$ evaluates $e$ and therefore any effects e runs, and then runs a write effect.
Throwing an exception creates an \ExnEff~effect, but as before, catching an exception is more complicated.
A try-catch expression does not generate any new effects, though it can combine effects from both the \Try and \Catch blocks.
More importantly, since we are still aiming to enforce security through our type-and-effect system, the security concerns surrounding control flow still apply.
We therefore again require $\ExnLabel \prot \tau$.
The resulting rule is as follows.
$$\infer{
  \Gamma \proves e_1 : \tau \ctxsep \e_1 \cup \ExnEff \\
  \Gamma \proves e_2 : \tau \ctxsep \e_2 \\
  \ExnLabel \prot \tau \\
}{\Gamma \proves \TryCatch{e_1}{e_2} : \tau \ctxsep \e_1 \cup \e_2}$$

We also modify the $\Unlabel$ rule to prevent effects from leaking data.
Specifically, if a program $e_2$ relies on data with label $\ell$, the effects of $e_2$ can only be visible at or above~$\ell$.
Including the existing output restriction from Section~\ref{sec:pure-ifc-lang} gives us the following rule:
$$\infer{
  \Gamma \proves e_1 : \LabeledType{\ell}{\tau_1} \ctxsep \e_1 \\
  \Gamma, x\ty\tau_1 \proves e_1 : \tau_2 \ctxsep \e_2 \\
  \ell \prot \tau_2 \\
  \ell \flowsto \ell_{\e_2}
}{\Gamma \proves \UnlabelAs*{e_1}{x}{e_2} : \tau_2 \ctxsep \e_1 \cup \e_2}$$
Finally, as several of these rules require precisely equal effects, we include a rule allowing judgments to overstate a program's effect:
$$\infer{
  \Gamma \proves e : \tau \ctxsep \e' \\
  \e' \subseteq \e
}{\Gamma \proves e : \tau \ctxsep \e}$$

We now have the mechanics we need to formally state a connection between the program-counter label and effects.
Intuitively, a program with a sufficiently restrictive \pc cannot have certain effects.
We capture this intuition with the following lemma:
\begin{lem}[Connection between Program-Counter Label and Effects]
  \label{lem:pc-effects-state-exn}
  The program-counter label forces effects to be well typed, so if $\Gamma \ctxsep \pc \proves e : \tau$, then $\Gamma \proves e : \tau \ctxsep \e$ for some~$\e$.
  Moreover, the $\pc$ controls which effects are possible.
  In particular:
  \begin{itemize}
    \item If $\pc \nflowsto \StateLabel$ and $\Gamma \ctxsep \pc \proves e : \tau$, then there exists an $\e$ such that $\Gamma \proves e : \tau \ctxsep \e$ and $\WEff \notin \e$.
    \item If $\pc \nflowsto \ExnLabel$ and $\Gamma \ctxsep \pc \proves e : \tau$, then there exists an $\e$ such that $\Gamma \proves e : \tau \ctxsep \e$ and $\ExnEff \notin \e$.
  \end{itemize}
\end{lem}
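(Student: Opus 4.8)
The plan is to prove all three claims simultaneously by induction on the derivation of $\Gamma \ctxsep \pc \proves e : \tau$, strengthening the statement so that the induction carries enough information: I will show that $\Gamma \ctxsep \pc \proves e : \tau$ implies $\Gamma \proves e : \tau \ctxsep \e$ for some effect $\e$ with $\pc \flowsto \ell_{\e}$, where (since the two systems use different arrow annotations) the effect judgment is understood up to the canonical translation of the $\pcto$-annotated function types in $\Gamma$ and $\tau$ into $\effto$-annotated ones defined below. Granting this strengthening, the first claim is immediate, and the other two follow by contraposition: if $\WEff \in \e$ then $\ell_{\e} \flowsto \ell_{\WEff} = \StateLabel$, so $\pc \flowsto \ell_{\e} \flowsto \StateLabel$, contradicting $\pc \nflowsto \StateLabel$; hence $\WEff \notin \e$, and symmetrically $\ExnEff \notin \e$ whenever $\pc \nflowsto \ExnLabel$. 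This step uses two facts about $\ell_{-}$ that are built into its definition for the concrete effect set: it is antitone ($\e' \subseteq \e$ implies $\ell_{\e} \flowsto \ell_{\e'}$), and $\pc \flowsto \ell_{\e_i}$ for all $i$ implies $\pc \flowsto \ell_{\bigcup_i \e_i}$ (because $\ell_{-}$ of a union is the greatest lower bound of the components' labels, which exists for the combinations that arise here, e.g. $\ell_{\WEff, \ExnEff} = \ExnLabel$ using $\ExnLabel \flowsto \StateLabel$).

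For the induction, each pure DCC rule maps to the identically shaped effect rule: the conclusion's effect is the union of the effects returned for the premises (or $\varnothing$ when there are none), and since none of these rules raises the $\pc$, the invariant $\pc \flowsto \ell_{\e}$ transfers from the premises using antitonicity and the union fact above. The one pure rule that raises the $\pc$ is \Unlabel, where the body is checked at $\pc \JoinL \ell$; there the induction hypothesis yields $\pc \JoinL \ell \flowsto \ell_{\e_2}$, which both re-establishes the invariant at the lower $\pc$ (since $\pc \flowsto \pc \JoinL \ell \flowsto \ell_{\e_2}$) and supplies exactly the effect-system \Unlabel rule's extra premise $\ell \flowsto \ell_{\e_2}$. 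For the effectful rules: \Read gets effect $\REff$ and $\pc \flowsto \top = \ell_{\REff}$ holds trivially; $\Write(e)$ gets the effect of $e$ together with $\WEff$, and the side condition $\pc \flowsto \StateLabel = \ell_{\WEff}$ combines with the hypothesis for $e$ to give the invariant; \Throw gets effect $\ExnEff$, and its side condition $\pc \flowsto \ExnLabel = \ell_{\ExnEff}$ \emph{is} the invariant; and for $\TryCatch{e_1}{e_2}$ the hypotheses give effects $\e_1, \e_2$ for the branches, effect subsumption presents $e_1$'s judgment in the form $\e_1 \cup \ExnEff$ that the effect-system rule demands, and the conclusion nonetheless carries only $\e_1 \cup \e_2$ (the caught exception does not escape), with $\pc \flowsto \ell_{\e_1 \cup \e_2}$ again by the union fact; the premise $\ExnLabel \prot \tau$ is common to both systems. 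One also needs the routine sublemma that $\ell \prot \tau$ in the $\pc$ system entails $\ell \prot \tau'$ for the translated type, proved by induction on the protection derivation using $\pc' \flowsto \ell_{\e'}$ for each translated arrow.

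The function rules are the remaining cases and the one genuinely delicate point, since the two systems use syntactically distinct arrow types. I will translate each arrow $\tau_1 \pcto[\pc'] \tau_2$ occurring in the derivation to $\tau_1' \effto[\e'] \tau_2'$, where $\e'$ is chosen once and for all --- large enough to dominate the effects of every $\lambda$-abstraction of that type in $e$, yet small enough that $\pc' \flowsto \ell_{\e'}$; this is consistent because each such body is $\pc$-checked at $\pc'$, so the induction hypothesis gives $\pc' \flowsto \ell_{\e_0}$ for its effect $\e_0$, and the union fact lets us take $\e'$ to be their union. In the $\lambda$ case the body has effect $\e_0 \subseteq \e'$; subsuming it up to $\e'$ and applying the effect-system $\lambda$ rule yields a function with effect $\varnothing$, and the invariant $\pc \flowsto \top = \ell_{\varnothing}$ holds. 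In the application case the operator has translated type $\tau_1' \effto[\e'] \tau_2'$, the $\pc$-system side condition $\pc \flowsto \pc'$ (current $\pc$ below the latent one) combines with $\pc' \flowsto \ell_{\e'}$ to give $\pc \flowsto \ell_{\e'}$, so the latent effect $\e'$ appearing in the conclusion's effect is covered by the invariant; and the $\pc$-system's modified function-protection rule ($\ell \flowsto \pc'$) translates to the effect-system's ($\ell \flowsto \ell_{\e'}$) in the same way.

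I expect the main obstacle to be setting up this arrow translation coherently: because two function values can share a $\pc$-type while their bodies have different effects, and the effect system has no arrow subtyping, one cannot translate arrow annotations locally --- one must instead fix, globally and by a suitable induction on types, a single sufficiently large latent effect for each arrow type and then thread the corresponding effect-subsumption steps through every function introduction, all while preserving the $\pc' \flowsto \ell_{\e'}$ bound on which the strengthened invariant, and hence the second and third claims, depend. Once the translation is in place, the case analysis above is mechanical.
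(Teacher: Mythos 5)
Your overall skeleton is the intended one and is sound: strengthen the statement to ``if $\Gamma \ctxsep \pc \proves e : \tau$ then $\Gamma \proves e : \effTypeTrans{\tau} \ctxsep \varepsilon$ for some $\varepsilon$ with $\pc \flowsto \ell_{\varepsilon}$,'' induct on the \pc derivation, and obtain the two bullet points by antitonicity of $\ell_{-}$ (if $\WEff \in \varepsilon$ then $\pc \flowsto \ell_{\varepsilon} \flowsto \StateLabel$, and likewise for $\ExnEff$). The paper leaves this induction implicit, but your invariant $\pc \flowsto \ell_{\varepsilon}$ is exactly the condition $\varepsilon \subseteq \gamma(\pc)$ of the antitone Galois connection in Section~\ref{sec:galo-conn-their}, and your effectful cases (\Read, \Write, \Throw, try-catch via variance, \Unlabel supplying $\ell \flowsto \ell_{\varepsilon_2}$) are exactly right.

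The genuine gap is in the one step you yourself flag as delicate: the translation of arrow types. The paper's conversion (Appendix~\ref{sec:full-type-transl}) is \emph{local} and type-directed: $\tau_1 \pcto[\pc'] \tau_2$ is sent to $\effTypeTrans{\tau_1} \effto[\varepsilon'] \effTypeTrans{\tau_2}$ where $\varepsilon'$ is the \emph{largest} effect with $\pc' \flowsto \ell_{\varepsilon'}$, i.e.\ $\gamma(\pc')$ ($\{\REff\}$, $\{\REff,\WEff\}$, or $\{\REff,\WEff,\ExnEff\}$ according to how $\pc'$ relates to \StateLabel and \ExnLabel). Your reason for rejecting this---``the effect system has no arrow subtyping''---is mistaken: no arrow subtyping is needed, because in the \textsc{Lam} case the induction hypothesis gives the body some $\varepsilon_0$ with $\pc' \flowsto \ell_{\varepsilon_0}$, hence $\varepsilon_0 \subseteq \gamma(\pc')$, and the effect system's \textsc{Variance} rule subsumes the body up to exactly $\gamma(\pc')$ before \textsc{Lam} is applied; two $\lambda$-abstractions sharing a \pc-type but having different body effects are simply both subsumed to the same $\gamma(\pc')$. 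Your replacement---per arrow type, the union of the body effects of all $\lambda$-abstractions of that type occurring in $e$---has three concrete problems. First, it is circular as stated: a body's effect includes the latent effects of the arrows applied inside it, which can include the very arrow type being defined (e.g.\ a $\lambda$ of type $T$ whose body applies another $\lambda$ of type $T$), and this dependency is not along the subterm order on types, so ``a suitable induction on types'' does not ground the definition (a fixpoint over the finite effect lattice would, but then one may as well take the maximum $\gamma(\pc')$ directly). Second, it assigns no canonical effect to arrow types that occur only in $\Gamma$ or $\tau$ with no corresponding abstraction in $e$. Third, it makes the translation of $\Gamma$ and $\tau$ depend on $e$, whereas the lemma---and its downstream use, where two programs $e_1$ and $e_2$ share the same $\Gamma$ and $\tau$---requires a program-independent conversion. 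Replacing your global choice with the paper's canonical $\gamma(\pc')$ translation, the rest of your case analysis goes through unchanged.
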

Since our type-and-effect system restricts to secure effects, the first part of Lemma~\ref{lem:pc-effects-state-exn} is non-trivial.
Also note that the two type systems use slightly different type constructors for functions---the \pc system includes labels while the type-and-effect system includes effects.
We implicitly convert between the two here, as the label associated with each effect makes the conversion simple.
See Appendix~\ref{sec:full-type-transl} for the full details of the conversion.

Lemma~\ref{lem:pc-effects-state-exn} provides a direct connection between program-counter labels and effects.
From the point-of-view we have been advocating---that program-counter labels limit programs to secure effects---this is the semantics of a program-counter label.
We will discuss this semantics for program-counter labels in more depth in Section~\ref{sec:prod-effects-ifc}.

\subsection{Effectful Noninterference Half-Off}
\label{sec:effectf-nonint-free}

We now aim to use our effect tracking to prove noninterference for our extended language.
We use a strategy from work on the semantics of type-and-effect systems and give meaning to effects via translation into pure programs.
Importantly, if we do this in such a way that our notions of ``low-equivalent'' match up, we can get noninterference automatically.

Our translation uses monads to represent effects, as is common.
In fact, we have one monad per (set of) effect(s), defined as follows.
For technical reasons, we use the same monad for $\{\WEff\}$ and $\{\REff,\WEff\}$, as well as for $\{\WEff,\ExnEff\}$ and $\{\REff,\WEff,\ExnEff\}$.
\footnote{This is because the \emph{writer} monad, for the write effect without read, is not a monad unless $\sigma$ is a monoid.
  However, the \emph{state} monad, for read and write effects, is a cartesian monad no matter the type of $\sigma$.}
We also assume for simplicity that $\sigma$, the type of our state cell, already has at least $\StateLabel$ sensitivity---%
that is, $\StateLabel \prot \sigma$---allowing us to omit explicit use of~$\StateLabel$.
If this were not the case, we would use $\LabeledType{\StateLabel}{\sigma}$ instead of~$\sigma$.
\begin{center}
\begin{tabular}{c|l}
  $\e$ & \multicolumn{1}{c}{$\Monad{\varepsilon}(\tau)$}\\
  \hline
  $\varnothing$ & $\tau$\\
  $\{\REff\}$ & $\sigma \to \tau$\\
  $\{\ExnEff\}$ & $\LabeledType{\ExnLabel}{\Unit + \tau}$\\
  $\{\WEff\}$ and $\{\REff,\WEff\}$ & $\sigma \to (\tau \times \sigma)$\\
  $\{\REff,\ExnEff\}$ & $\sigma \to \LabeledType{\ExnLabel}{\Unit + \tau}$\\
  $\{\WEff,\ExnEff\}$ and $\{\REff,\WEff,\ExnEff\}$ & $\sigma \to (\LabeledType{\ExnLabel}{\Unit + \tau} \times \sigma)$
\end{tabular}
\end{center}
We refer to the monad for a (set of) effect(s)~$\varepsilon$ as $\Monad{\varepsilon}$.
We use this notation instead of the more-traditional $M_{-}$ as we will generalize to a productor in Section~\ref{sec:prod-effects-ifc}.
These monads are standard, but they are not automatic.
Instead, they reflect some choices about the semantics of programs.
For instance, the fact that the monad for the set $\{\REff, \WEff{},\ExnEff{}\}$ is $\sigma \to (\Unit + \tau) \times \sigma$ instead of $\Unit + (\sigma \to \tau \times \sigma)$ reflects the fact that state persists even when an exception is thrown.

Note that we can define three special kinds of programs:
\begin{itemize}
\item For any set $\varepsilon$ and program $\Gamma, x \ty \tau_1 \proves p : \Monad{\varepsilon}(\tau_2)$, we can define $\Gamma, x \ty \Monad{\varepsilon}(\tau_1) \proves \BindM{\varepsilon}(p) : \Monad{\varepsilon}(\tau_2)$.
\item For any type $\tau$ and set $\varepsilon$, we can define $\eta_\varepsilon : \tau \to \Monad{\varepsilon}(\tau)$.
\item For any type $\tau$ and pair of sets $\varepsilon_1$ and $\varepsilon_2$ such that $\varepsilon_1 \subseteq \varepsilon_2$, we can define a program\linebreak \mbox{$\CoerceE{\varepsilon_1}{\varepsilon_2} : \Monad{\varepsilon_1}(\tau) \to \Monad{\varepsilon_2}(\tau)$}.
\end{itemize}
This (along with some easily-proven properties of these programs) makes $\Monad{-}$ an \emph{indexed monad}~\citep{WadlerT98,OrchardPM14}, which is a mathematical object that gives semantics to systems of effects.
Note that this requires $L_{-}$ itself to be an indexed monad, which was proven by \citet{AbadiBJHR99}.
As the name suggests, indexed monads are a generalization of monads.
Indexed monads also give a standard way to translate effectful programs into pure programs, transforming a derivation of $\Gamma \proves e : \tau \ctxsep \varepsilon$ into a program~$e'$ such that $\Gamma \proves e' : \Monad{\varepsilon}(\tau)$.
See Appendix~\ref{sec:full-transl-example} for the full translation.

This translation generates an important security result.
Because all well-typed pure programs guarantee noninterference, we can extend that result to our effectful language if our translation has two specific properties.
First, it must be sound.
Indeed, if we omit any of the careful reasoning about exceptions from Section~\ref{sec:tracking-effects-state-exns} our translation would be unsound.
The point at which the soundness proof breaks down is, however, often informative.
For example, when translating Program~\ref{eqn:exn-write-leak},
our translation would need to return a value of type $\LabeledType{\ExnLabel}{\Unit + \Unit} \times \sigma$ after unlabeling a value of type $\LabeledType{\ExnLabel}{\Unit + \Unit}$.
Satisfying the premise of the \Unlabel rule that the removed label protects the output type forces exactly the $\ExnLabel \flowsto \StateLabel$ assumption we made above.

The second requirement to obtain effectful noninterference is that our monadic translation faithfully translates our effectful notion of equivalence to our pure one.
In this case it does because monadic programs simulate effectful programs.
Without labels, this is a well-known theorem of \citet{WadlerT98}, adding labels does not significantly change the proof. 
We can therefore use our type-and-effect system and monadic translation to make a strong claim of noninterference for the \pc system.
\begin{thm}[Noninterference for State and Exceptions]
  \label{thm:state-exn-ni}
  For all expressions $e_1$ and $e_2$, and for all $\ell \in \Labs$, if $\Gamma \ctxsep \pc \proves e_1 : \tau$ and $\Gamma \ctxsep \pc \proves e_2 : \tau$,
  and $\ell \prot \tau$ and $\ell \flowsto \pc$, then for all labels $\AtkLabel \in \Labs$, either $\ell \flowsto \AtkLabel$ or $e_1 \StExnLequiv[\AtkLabel] e_2$.
\end{thm}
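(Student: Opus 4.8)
The plan is to obtain this theorem for the \pc{}-system ``for half off'' by reducing it to noninterference for pure DCC (Theorem~\ref{thm:dcc-ni}) through three bridges: Lemma~\ref{lem:pc-effects-state-exn}, which re-types the two programs in the type-and-effect system; the monadic translation $\Monad{-}$, which turns an effectful derivation $\Gamma \proves e : \tau \ctxsep \e$ into a pure DCC derivation $\Gamma \proves e' : \Monad{\e}(\tau)$; and the adequacy of that translation, which transports $\AtkLabel$-equivalence of the translated programs back to $\StExnLequiv[\AtkLabel]$-equivalence of the originals. Concretely, from $\Gamma \ctxsep \pc \proves e_i : \tau$ I would first invoke Lemma~\ref{lem:pc-effects-state-exn} to get effect typings $\Gamma \proves e_i : \tau \ctxsep \e_i$, choosing them (via the proof of that lemma) so that $\pc \nflowsto \StateLabel$ forces $\WEff \notin \e_i$ and $\pc \nflowsto \ExnLabel$ forces $\ExnEff \notin \e_i$. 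Setting $\e = \e_1 \cup \e_2$, subsumption gives $\Gamma \proves e_i : \tau \ctxsep \e$ for both $i$, and $\e$ still satisfies the two non-containment conditions. Translating yields pure DCC programs $\Gamma \proves e_i' : \Monad{\e}(\tau)$ (this is where soundness of the translation is used; see below). If $\ell \prot \Monad{\e}(\tau)$, then Theorem~\ref{thm:dcc-ni} gives, for every $\AtkLabel$, either $\ell \flowsto \AtkLabel$ --- already the first disjunct of the goal --- or $e_1' \lequiv[\AtkLabel] e_2'$, and adequacy then yields $e_1 \StExnLequiv[\AtkLabel] e_2$.

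The routine-but-load-bearing step is $\ell \prot \Monad{\e}(\tau)$. I would first prove that protection is downward-closed in its label: if $\ell \flowsto \ell'$ and $\ell' \prot \tau$ then $\ell \prot \tau$, by an easy induction on the protection derivation. Now inspect the table of monads. Every $\Monad{\e}(\tau)$ is built from $\tau$, $\Unit$, $\sigma$, sums, products and function arrows, and only two features can obstruct protection: an occurrence of $\LabeledType{\ExnLabel}{\cdot}$, present exactly when $\ExnEff \in \e$, and an occurrence of $\sigma$ in result position (as in $\tau \times \sigma$), present exactly when $\WEff \in \e$. By the contrapositives of Lemma~\ref{lem:pc-effects-state-exn}'s bullets, $\ExnEff \in \e$ implies $\pc \flowsto \ExnLabel$ and $\WEff \in \e$ implies $\pc \flowsto \StateLabel$; together with the hypothesis $\ell \flowsto \pc$ this gives $\ell \flowsto \ExnLabel$ in the first case and $\ell \flowsto \StateLabel$ in the second. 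The first protection rule then discharges every $\LabeledType{\ExnLabel}{\cdot}$ outright (so the $\Unit$ and sum subterms it shields need no protection rule of their own, which is fortunate since DCC has none for sums), while $\ell \flowsto \StateLabel$, the standing assumption $\StateLabel \prot \sigma$, and downward closure give $\ell \prot \sigma$ for the result-position $\sigma$; $\ell \prot \tau$ is the hypothesis, argument-position $\sigma$ is irrelevant to the arrow rule, and the product and arrow rules propagate these facts through the rest of the type. Hence $\ell \prot \Monad{\e}(\tau)$.

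The substance of the proof is the adequacy of the monadic translation, namely that $e_1' \lequiv[\AtkLabel] e_2'$ implies $e_1 \StExnLequiv[\AtkLabel] e_2$. This rests on two pieces. Soundness of the translation --- each effectful derivation yields a well-typed pure term --- is exactly where the assumption $\ExnLabel \flowsto \StateLabel$ is indispensable: it is what makes the \Unlabel premise go through when translating programs like Program~\ref{eqn:exn-write-leak}, where a write follows a possibly-throwing unlabeling. Simulation --- the translation tracks the effectful operational semantics step for step, following \citet{WadlerT98}, with labels added changing nothing essential --- lets me factor the translation of any closing \pc{}-context $C$ around its hole as a pure DCC context $\hat C$ applied to $e_i'$, and tells me that $\langle C[e_i], s \rangle \stepsmany \langle v_i, s_i \rangle$ corresponds to $\hat C[e_i']$ reducing, once the state part of the monad is fed the translated initial state, to the monadic encoding of $\langle v_i, s_i \rangle$. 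Given a hypothetical distinguishing witness $(s, C)$ for $\StExnLequiv[\AtkLabel]$ --- so $v_1 \neq v_2$ with $\ExnLabel \flowsto \AtkLabel$, or $s_1 \neq s_2$ with $\StateLabel \flowsto \AtkLabel$ --- I would extend $\hat C$ with a small pure postprocessor: feed in the closed $\sigma$-value $s$, project out the relevant component (the $\LabeledType{\ExnLabel}{\Unit + \tau}$ part when comparing outputs --- using $\ExnLabel \flowsto \AtkLabel$ to unlabel it, and the fact that both runs terminate normally so its $\Inl$ case is vacuous --- or the bare $\sigma$ part, which sits outside the $\ExnLabel$ wrapper, when comparing states), and apply a decision procedure. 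For states this is a DCC term $\sigma \to \LabeledType{\AtkLabel}{\Unit + \Unit}$ separating $s_1$ from $s_2$, which exists precisely because $\sigma$ has no function subterms, so it can be built by nested matches. The result is a closing DCC context $D$ with $\proves D[e_i'] : \LabeledType{\AtkLabel}{\Unit + \Unit}$, $D[e_1'] \stepsmany w_1$, $D[e_2'] \stepsmany w_2$, and $w_1 \neq w_2$ --- contradicting $e_1' \lequiv[\AtkLabel] e_2'$. So no such witness exists and $e_1 \StExnLequiv[\AtkLabel] e_2$.

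The main obstacle is this adequacy argument, and within it the genuinely delicate part is making the single-output DCC equivalence faithfully mirror the state/exception split of Definition~\ref{defn:state-exn-low-equiv}: deciding which monadic component to interrogate depending on how $\AtkLabel$ relates to $\ExnLabel$ and $\StateLabel$, handling the vacuous exceptional case, and leaning on the first-order structure of $\sigma$ to build the state-comparison procedure. Everything else --- the effect re-typing, the protection check $\ell \prot \Monad{\e}(\tau)$, and the appeal to Theorem~\ref{thm:dcc-ni} --- is either cited or a short structural induction.
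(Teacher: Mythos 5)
Your argument is correct, but it takes a more elementary route than the paper, which proves this theorem in two lines by citing the general framework: the paper's proof is literally ``special case of the Noninterference Half-Off Theorem (Theorem~\ref{thm:niho}),'' which in turn rests on Theorem~\ref{thm:type-and-effect-ni} and the abstract rules of Figure~\ref{fig:framework-rules}. What you have written is, in effect, the concrete unrolling of that appeal: your use of Lemma~\ref{lem:pc-effects-state-exn} is the instantiation of \textsc{PcEff}; your case analysis showing $\ell \prot \Monad{\e}(\tau)$ (via $\ell \flowsto \pc$, the contrapositive of the lemma's bullets, $\StateLabel \prot \sigma$, and downward closure of protection) is exactly how the paper discharges \textsc{ProtectC} for this language; your factoring of a closing context's translation as $\hat C[e_i']$ plays the role of \textsc{CapturedSeq}; and your simulation-plus-contrapositive adequacy argument is the verification of \textsc{EquivCap}, which the paper dispatches by citing the Wadler--Thibault simulation result and noting that labels do not change it. You also correctly identify that the hypothesis $\ell \flowsto \pc$ is what supplies the framework's side condition on the inputs' effects ($\ell \flowsto \ell_{\e}$), a point the paper only explains informally after the theorem statement. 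The trade-off: the paper's route buys reusability---the identical citation also proves the termination-sensitive theorem in Section~4---at the price of having set up productors, while your route is self-contained and makes visible exactly where $\ExnLabel \flowsto \StateLabel$, $\StateLabel \prot \sigma$, and the first-order restriction on $\sigma$ are consumed. Two small cautions, neither a gap relative to the paper: when both $\pc \nflowsto \StateLabel$ and $\pc \nflowsto \ExnLabel$ you need a \emph{single} effect typing avoiding both $\WEff$ and $\ExnEff$, which requires appealing to the minimal-effect construction inside the proof of Lemma~\ref{lem:pc-effects-state-exn} rather than its statement (you gesture at this); and your nested-match state comparator implicitly assumes any labeled subcomponents of $\sigma$ are visible at $\AtkLabel$, the same implicit assumption the paper makes when it declares structural equality on states adequate after Definition~\ref{defn:state-exn-low-equiv}.
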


\begin{proof}
  This is a special case of the Noninterference Half-Off Theorem (Theorem~\ref{thm:niho}) which uses the fact that effectful programs are equivalent if their monadic translations are equivalent.
  Theorem~\ref{thm:niho} also relies on the fact that pure programs are noninterfering (Theorem~\ref{thm:dcc-ni}).
\end{proof}

The requirement that $\ell \flowsto \pc$ may appear odd next to classic definitions of noninterference.
We require such a flow because our contextual notion of equivalence treats $e_1$ and $e_2$ as program inputs, but they may have effects.
This requirement constrains those effects so that if $\ell \nflowsto \AtkLabel$, then $\AtkLabel$ will be unable to see them.
In most classic noninterference statements, the inputs are values, meaning this restriction is unnecessary as any well-typed value type-checks with $\pc = \top$.

\section{Example: Termination-Sensitive Noninterference}
\label{sec:tsni}

Type-and-effect systems can tell us more about programs than whether they access state or throw exceptions.
One classic application is checking termination by considering possible nontermination to be an effect.
We show that we can treat possible nontermination as a \emph{secure} effect.
This explains the role of program-counter labels in ruling out termination leaks.
That is, we show how \emph{termination-sensitive noninterference} falls out of our framework.

The fragment of DCC we use in Section~\ref{sec:pure-ifc-lang} is strongly-normalizing; thus, all programs terminate.
Even the extensions in Section~\ref{sec:state-and-exns} did not allow for nonterminating behavior, since we require that the type of the state cell is first order.
We can, however, easily add a standard fixpoint operator:
\begin{syntax}
  \category[{Expressions}]{e} \alternative{\cdots} \alternative{\Fixpoint{f}{\tau}{e}}
\end{syntax}
\begin{mathpar}
  \infer{\Gamma, f \ty \tau \proves e : \tau}{\Gamma \proves \Fixpoint{f}{\tau}{e} : \tau}
  \and
  \Fixpoint{f}{\tau}{e} \stepsone \subst{e}{f}{\Fixpoint{f}{\tau}{e}}
\end{mathpar}

Because programs may not terminate in this extended language, the fact that our definition of $\ell$-equivalence (Definition~\ref{defn:ell-equiv}) allows for different termination behavior matters.
In particular, our previous noninterference theorem (Theorem~\ref{thm:dcc-ni}) now says that if both programs terminate, they must produce the same value.
If \emph{either} program diverges, however, it makes no guarantees.

This guarantee models an attacker who cannot tell if a program has failed to terminate, or if it will produce an output on the next step.
That is, the attacker is \emph{insensitive} to termination behavior.
This notion of noninterference is therefore called \emph{termination-insensitive} noninterference.

While DCC with the call-by-value semantics we are using enforces termination-insensitive noninterference~\citep{AbadiBJHR99,HeintzeR98,BowmanA15,AlgehedB19},
termination channels can leak arbitrary amounts of data~\citep{AskarovHSS08}.
We would therefore like to remove the strong assumption that attackers cannot use those channels.
To do so, we define a stronger form of equivalence, \emph{termination-sensitive $\ell$-equivalence}, and use it to analyze security.
\begin{defn}[Termination-Sensitive $\ell$-Equivalence]
  \label{defn:ts-ell-equiv}
  We say $e_1$ is \emph{termination-sensitive $\ell$-equivalent} to $e_2$, denoted \mbox{$e_1 \tslequiv e_2$}, if for all expression contexts $C$ such that
  $\proves C[e_i] : \LabeledType{\ell}{\Unit + \Unit}$ for both $i = 1, 2$, then $C[e_1] \stepsmany v$ if and only if $C[e_2] \stepsmany v$.
\end{defn}

We follow the same approach as in Section~\ref{sec:state-and-exns} to ensure noninterference with respect to this stronger definition:
we restrict effects with a \pc~label, build a corresponding type-and-effect system, and prove security by translating to DCC with its termination-insensitive guarantee.
Because nontermination from $\Fix$ is the only effect, this process is considerably simpler than in Section~\ref{sec:state-and-exns}.

Traditionally, termination-sensitive noninterference assumes all attackers can see whether or not a program terminates.
We take a more general approach and assume that some attackers are termination-sensitive, while others may not be.
Specifically, we imagine there is a label $\TermLabel \in \Labs$ such that any attacker who can read $\TermLabel$ will eventually infer information from nontermination, but others will not.
This gives rise to the following rule:
$$\infer{\Gamma, f \ty \tau \ctxsep \pc \proves e : \tau\\ \pc \flowsto \TermLabel}{\Gamma \ctxsep \pc \proves \Fixpoint{f}{\tau}{e} : \tau}$$
The rule ensures that only data available at label $\TermLabel$---visible to any termination-sensitive attacker---can influence the program's termination behavior.
This single label-based rule allows us to model termination-insensitivity by setting $\TermLabel = \top$,
model traditional termination-sensitivity using a bottom label~$\bot$ by setting $\TermLabel = \bot$,
or express policies about other levels of termination visibility.

We can now move on to the type-and-effect system.
This time it has only two possible effects: $\varnothing$~or $\PntEff$ with labels $\top$ and $\TermLabel$, respectively.
The typing rule for the fixed-point operator is then:
$$\infer{\Gamma, f \ty \tau \proves e : \tau \ctxsep \varepsilon}{\Gamma \proves \Fixpoint{f}{\tau}{e} : \tau \ctxsep \PntEff}$$
The $\Unlabel$ rule constrains effects in the same way as in Section~\ref{sec:state-and-exns}, and no other typing rules change or constrain effects.
We also change function types as in Section~\ref{sec:state-and-exns}.

Finally, we translate this type-and-effect system into pure DCC using a monadic translation.
Unfortunately, this time there is no such monad definable in the language of Section~\ref{sec:pure-ifc-lang}.
Luckily, the original definition of DCC~\citep{AbadiBJHR99} allowed for nontermination with a fixed-point operator, but only if the result was a \emph{pointed} type.
We thus add fixed points and pointed types---the parts of DCC we omitted from Section~\ref{sec:pure-ifc-lang}---and use pointed types to define our monad.
Pointed types are also used in defining the denotational semantics of DCC.
\citet{AbadiBJHR99} define the denotational semantics of DCC using Scott~domains, and give pointed types semantics using domains with a bottom element.
Then, fixed points of computations in a pointed types can be found as usual.
\begin{syntax}
  \category[{Types}]{\tau} \alternative{\cdots} \alternative{\LiftType{\tau}}
  \category[{Expressions}]{e} \alternative{\cdots} \alternative{\LiftProgram{e}} \alternative{\SeqTerm*{x}{e_1}{e_2}} \alternative{\Fixpoint{f}{\tau}{e}}
  \category[Evaluation Contexts]{E} \alternative{\cdots} \alternative{\LiftProgram{E}} \alternative{\SeqTerm*{x}{E}{e}}
\end{syntax}
The type~$\LiftType{\tau}$ represents a version of~$\tau$ that supports fixed points, while the expression~$\Lift(e)$ lifts the expression~$e$ from type $\tau$~to~$\LiftType{\tau}$.
The expression $\SeqTerm*{x}{e_1}{e_2}$ waits for $e_1$ to terminate and, if it does, binds the result to $x$ in $e_2$.
Finally, the term $\Fixpoint{f}{\tau}{e}$ defines a fixpoint.

We then define a judgment determining when a type is a pointed type as follows:
\begin{mathpar}
  \infer{ }{\proves \LiftType{\tau}~\Ptd} \and
  \infer{\proves \tau_1~\Ptd\\ \proves \tau_2~\Ptd}{\proves \tau_1 \times \tau_2~\Ptd}
  \and
  \infer{\proves \tau~\Ptd}{\proves \LabeledType{\ell}{\tau}~\Ptd} \and
  \infer{\proves \tau_2~\Ptd}{\proves \tau_1 \to \tau_2~\Ptd}
\end{mathpar}
This allows us to state the following typing and semantic rules for the newly-added terms:
\begin{mathpar}
  \infer{\Gamma \proves e : \tau}{\Gamma \proves \LiftProgram{e} : \LiftType{\tau}} \and
  \infer{\Gamma \proves e_1 : \LiftType{\tau_1}\\\\\Gamma, x : \tau_1 \proves e_2 : \tau_2\\ \proves \tau_2~\Ptd}{\Gamma \proves \SeqTerm*{x}{e_1}{e_2} : \tau_2} \and
  \infer{\Gamma, f \ty \tau \proves e : \tau\\ \proves \tau~\Ptd}{\Gamma \proves \Fixpoint{f}{\tau}{e} : \tau}\\
  \SeqTerm*{x}{\LiftProgram{v}}{e} \stepsone \subst{e}{x}{v} \and
\end{mathpar}

Because the possibility of nontermination is limited to pointed types, we consider the full DCC to be pure for our purposes.
That is, we consider programs which are nonterminating, but which have a pointed type, pure.
To see why this is justified, we have to ask what we consider an effect.
We can probe this by considering the example from Section~\ref{sec:state-and-exns}: why do we consider a program which returns a value of type~$\sigma \to (\tau \times \sigma)$ to be pure, but not a program which accesses state and returns a value of type~$\tau$?
After all, they can encode the same computations.
Intuitively, though, we have translated accesses to state to operations provided by the more-complex type; namely, reads have been replaced by usage of the parameter of type~$\sigma$, and writes by returning an appropriate result of type~$\sigma$.
In the current case, we have translated fixpoint computations which can take place anywhere with a fixpoint operation provided by a more-complex type.
This eases reasoning in many settings.
For instance, when attempting a proof by logical relations, all reasoning about nontermination can now be located in pointed types.

Some readers may remain skeptical of the application of the word ``pure'' to DCC extended with pointed types.
It is therefore worth noting that to use noninterference half-off, we only need a monad which can represent the effect that we want to reason about in a language where it is easier to prove noninterference.
Proving noninterference in full (call-by-value) DCC with pointed types is easier than in our \pc system for two reasons.
First, possible nontermination is located by types, which makes many proof techniques easier, as noted above.
Second, call-by-value DCC with pointed types enforces \emph{termination-insensitive} noninterference, which is generally simpler to prove than termination-sensitive noninterference, which our \pc system enforces.

While pointed types give access to fixpoint operators, the translation of the \Unlabel rule fails whenever a program returns a $\LiftType{\tau}$, because $\LiftType{\tau}$ protects no labels.
This was \citeposessive{AbadiBJHR99} original design, to ensure that labeled data can never determine the termination behavior of a program.
This is too restrictive for us, since we want to allow data up to label $\TermLabel$ to influence a program's termination behavior.
Of course, the label must also be allowed to influence any output the program produces if it does terminate.
This leads to the following rule:
$$\infer{\ell \flowsto \TermLabel\\ \ell \prot \tau}{\ell \prot \LiftType{\tau}}$$
Note that, even when setting $\TermLabel = \bot$, this rule allows \emph{public} data to influence termination behavior.
We therefore differ slightly from \citeposessive{AbadiBJHR99} definition in systems that distinguish public data from unlabeled data.

With this rule, the traditional monadic translation works, which tells us that the type-and-effect system indeed enforces noninterference.
Now we need to connect the \pc system to the type-and-effect system in an analogous way to Lemma~\ref{lem:pc-effects-state-exn}.
If the \pc cannot influence $\TermLabel$, then the program \emph{must} terminate.
We can formalize this as follows:
\footnote{As with Lemma~\ref{lem:pc-effects-state-exn}, we implicitly assume a simple translation between the two sets of type constructors (see Appendix~\ref{sec:full-type-transl}).}
\begin{lem}
  \label{lem:pc-tsni}
  If $\Gamma \ctxsep \pc \proves e : \tau$, then $\Gamma \proves e : \tau \ctxsep \PntEff$.
  Moreover, if $\pc \nflowsto \TermLabel$, then $\Gamma \proves e : \tau \ctxsep \varnothing$.
\end{lem}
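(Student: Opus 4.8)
The plan is to prove both claims simultaneously by induction on the derivation of $\Gamma \ctxsep \pc \proves e : \tau$, strengthening the statement to track how the final effect $\e$ relates to $\pc$. Concretely, I would prove: if $\Gamma \ctxsep \pc \proves e : \tau$ then there is an $\e$ with $\Gamma \proves e : \tau \ctxsep \e$, and moreover $\pc \flowsto \TermLabel$ whenever $\PntEff \in \e$ --- equivalently, if $\pc \nflowsto \TermLabel$ then $\e$ can be taken to be $\varnothing$. (Using the effect-subsumption rule, once we know $\PntEff \notin \e$ we may take $\e = \varnothing$, and when $\PntEff \in \e$ we may take $\e = \PntEff$, so in all cases $\e$ is one of the two possible effects.) The implicit type translation between $\pc$-annotated function types $\tau_1 \pcto[\pc'] \tau_2$ and effect-annotated function types $\tau_1 \effto[\e'] \tau_2$ is the one described in Appendix~\ref{sec:full-type-transl}: a $\pc'$ on the arrow corresponds to effect $\PntEff$ if $\pc' \flowsto \TermLabel$ is consistent and $\varnothing$ otherwise; I would fix this correspondence at the outset so the induction hypothesis can be stated on already-translated contexts and types.

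The induction is routine for most cases because, as the text notes, most typing rules neither change nor constrain the $\pc$ or the effect: variables, $()$, pairs, projections, injections, $\LabeledProgram{\ell}{-}$, $\Match$, application, and $\lambda$-abstraction all just thread through (for application one takes the union of the subderivations' effects; for $\lambda$ one records the body's effect on the arrow, and the translation of the $\pc$-arrow is set up to match). The two cases that do real work are $\Unlabel$ and $\Fix$. For $\Unlabel$, the $\pc$-rule raises the context to $\pc \JoinL \ell$ in the premise for $e_2$; applying the (strengthened) IH to that premise gives a derivation $\Gamma, x\ty\tau_1 \proves e_2 : \tau_2 \ctxsep \e_2$, and the effect-system's $\Unlabel$ rule requires $\ell \flowsto \ell_{\e_2}$. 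Since $\ell_\varnothing = \top$ and $\ell_{\PntEff} = \TermLabel$, this premise is free when $\e_2 = \varnothing$ and amounts to $\ell \flowsto \TermLabel$ when $\e_2 = \PntEff$; I must check that the IH delivers exactly the needed bound, namely that $\pc \JoinL \ell \flowsto \TermLabel$ forces $\e_2 = \varnothing$, and that when $\PntEff \in \e_2$ we get $\pc \JoinL \ell \flowsto \TermLabel$, hence both $\ell \flowsto \TermLabel$ (discharging the effect-rule premise) and $\pc \flowsto \TermLabel$ (maintaining the invariant for the conclusion $\e_1 \cup \e_2$, using that $\e_1$ inherits its bound from the IH on $e_1$). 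For $\Fix$, the $\pc$-rule's premise $\pc \flowsto \TermLabel$ is precisely what is needed: the effect-system assigns $\PntEff$, and the invariant $\PntEff \in \e \Rightarrow \pc \flowsto \TermLabel$ holds by that very premise.

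I expect the main obstacle to be bookkeeping around the type translation rather than anything deep: one must confirm that the $\Ptd$ side-conditions appearing in the effect-system's $\Seq$ and $\Fix$ rules are automatically satisfied under the translation (the monad $\Monad{\PntEff}(\tau) = \LiftType{\tau}$ is always pointed, so this is immediate), and that the function-protection and $\Unlabel$-protection premises translate correctly under the $\ell_\e$ vs. $\pc$ correspondence --- in particular that $\ell \flowsto \pc'$ on a $\pc$-arrow translates to $\ell \flowsto \ell_{\e'}$ on the corresponding effect-arrow, which holds because $\ell_{\e'}$ is chosen to be (a lower bound behaving like) $\TermLabel$ exactly when the arrow's $\pc'$ must flow to $\TermLabel$. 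Once that correspondence is pinned down, the proof is a direct structural induction with the two genuinely informative cases being $\Unlabel$ and $\Fix$ as above.
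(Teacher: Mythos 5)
Your proposal is correct and takes the approach the paper intends (the paper states Lemma~\ref{lem:pc-tsni} without writing out a proof): a structural induction on the \pc-typing derivation, with the Appendix~\ref{sec:full-type-transl} type translation fixed up front and the strengthened invariant that the chosen effect contains $\PntEff$ only when $\pc \flowsto \TermLabel$. One small correction: application also does real work, since the latent effect on the translated arrow enters the resulting effect union and you need the premise $\pc_1 \flowsto \pc_2$ (together with your arrow-annotation correspondence) to maintain the invariant there, just as the \textsc{Variance} case needs transitivity of $\flowsto$; with those noted, the induction goes through exactly as you describe.
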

\noindent
The first statment is again non-trivial since our type-and-effect system restricts to secure effects.

Lemma~\ref{lem:pc-tsni} is a very powerful guarantee.
Combined with a monadic translation based on pointed types (which can be found in Appendix~\ref{sec:full-transl-example}), it tells us that if the $\pc$ is too high, $e$ will terminate.
Thus, if some data determines whether or not a program terminates, it must be visible to any attacker who can see termination behavior.
Formalizing this, we get the following guarantee:
\begin{thm}[Termination-Sensitive Noninterference]
  For $\ell \in \Labs$ and expressions $e_1$ and $e_2$, if $\Gamma \ctxsep \pc \proves e_1 : \tau$ and $\Gamma \ctxsep \pc \proves e_2 : \tau$,
  and $\ell \prot \tau$ and $\ell \flowsto \pc$, then for all labels $\AtkLabel \in \Labs$ where $\TermLabel \flowsto \AtkLabel$, either $\ell \flowsto \AtkLabel$ or $e_1 \tslequiv[\AtkLabel] e_2$.
\end{thm}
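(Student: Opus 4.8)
The plan is to obtain this theorem as a special case of the Noninterference Half-Off Theorem (Theorem~\ref{thm:niho}), exactly as Theorem~\ref{thm:state-exn-ni} was, instantiating the framework with the type-and-effect system of this section (effects $\varnothing$ and $\PntEff$, with $\ell_{\PntEff} = \TermLabel$), the pointed-type monad $\Monad{-}$, and Lemma~\ref{lem:pc-tsni} as the bridge from \pc-typing to effect-typing. The side condition ``$\TermLabel \flowsto \AtkLabel$'' is precisely the requirement that the attacker observe the single secure effect $\PntEff$; an attacker that cannot read $\TermLabel$ does not see nontermination, and for such an attacker the claim degenerates to the termination-insensitive guarantee already available from Theorem~\ref{thm:dcc-ni}.

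Before invoking the general theorem I would check its hypotheses in this instance. First, the monadic translation must be type-sound: a derivation of $\Gamma \ctxsep \pc \proves e : \tau$ is converted (via Lemma~\ref{lem:pc-tsni} and the type-constructor translation of Appendix~\ref{sec:full-type-transl}) into a derivation of $\Gamma \proves e : \tau \ctxsep \e$ with $\e \in \{\varnothing, \PntEff\}$, and then into a full-DCC program of type $\Monad{\e}(\tau)$; the only delicate case is \Unlabel, where the new protection rule $\ell \flowsto \TermLabel \Rightarrow \ell \prot \LiftType{\tau}$ is exactly what makes the translated premise $\ell \prot \Monad{\e}(\tau_2)$ discharge. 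Second, the translation must be \emph{faithful}: a monadic program-simulation theorem, refined to also preserve and reflect divergence, must show that $\tslequiv[\AtkLabel]$ on source programs matches ordinary (termination-insensitive) $\lequiv[\AtkLabel]$ on translations; this works because the pointed-type monad threads possible nontermination through the $\LiftType{-}$ layer, and in full DCC a program of non-pointed type cannot itself diverge, so divergence in a translation is always localized in an inspectable $\LiftType{-}$ subterm. Third, full DCC with pointed types is termination-insensitively noninterfering, which is Theorem~\ref{thm:dcc-ni} applied to the pure language used by our framework.

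Granting the framework, the label arithmetic is short. Fix $\AtkLabel$ with $\TermLabel \flowsto \AtkLabel$; if $\ell \flowsto \AtkLabel$ we are done, so assume $\ell \nflowsto \AtkLabel$. From $\ell \flowsto \pc$ and transitivity, $\pc \nflowsto \AtkLabel$, and hence $\pc \nflowsto \TermLabel$ (else $\pc \flowsto \TermLabel \flowsto \AtkLabel$). By Lemma~\ref{lem:pc-tsni}, $\Gamma \proves e_i : \tau \ctxsep \varnothing$ for $i = 1,2$, so the monadic translation leaves both $e_i$ in the identity monad $\Monad{\varnothing}$, i.e.\ as pure full-DCC programs of type $\tau$. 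Since $\ell \prot \tau$ and $\ell \nflowsto \AtkLabel$, DCC noninterference (Theorem~\ref{thm:dcc-ni}) yields $e_1 \lequiv[\AtkLabel] e_2$ in full DCC; faithfulness (the step that actually uses $\TermLabel \flowsto \AtkLabel$, through the new $\prot$ rule for $\LiftType{-}$) then upgrades this to $e_1 \tslequiv[\AtkLabel] e_2$.

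I expect the main obstacle to be the faithfulness hypothesis: establishing that the pointed-type monadic translation is faithful \emph{with respect to termination}, so that a termination-sensitive $\AtkLabel$-attacker on \pc-language programs is mirrored exactly by an ordinary $\AtkLabel$-attacker on the full-DCC translations. Concretely this means extending the program-simulation theorem to track divergence through $\Lift$/$\Seq$, reconciling the two function-type constructors under $\prot$, and verifying the ``converges iff converges'' direction of Definition~\ref{defn:ts-ell-equiv} using the rule $\ell \flowsto \TermLabel \Rightarrow \ell \prot \LiftType{\tau}$. By comparison, the effect-bookkeeping in Lemma~\ref{lem:pc-tsni} and the label reasoning above are routine.
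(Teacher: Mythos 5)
Your proposal takes essentially the same route as the paper: the paper's entire proof is that the theorem is a special case of the Noninterference Half-Off Theorem (Theorem~\ref{thm:niho}), instantiated with the $\{\varnothing,\PntEff\}$ type-and-effect system, the pointed-type monad, and Lemma~\ref{lem:pc-tsni}, with the divergence case handled exactly by the observation you make (a possibly-diverging program forces $\pc \flowsto \TermLabel$, hence $\ell \flowsto \AtkLabel$). Your additional spelling-out of the instantiation checks (translation soundness via the new $\prot$ rule for $\LiftType{\tau}$, faithfulness/\textsc{EquivCap}, and the label arithmetic) is consistent with, and fills in, what the paper leaves implicit.
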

This is a special case of Theorem~\ref{thm:niho}.
As with both previous noninterference theorems (Theorems~\ref{thm:dcc-ni} and~\ref{thm:state-exn-ni}), this theorem says data at label~$\ell$ cannot leak to an attacker who cannot distinguish values at level~$\ell$.
This time, however, the attacker can glean information from nontermination.
Moreover, if we set $\TermLabel = \bot$, then $\bot \flowsto \AtkLabel$ for all $\AtkLabel$, so if $\ell \neq \bot$ we get classic termination-sensitive noninterference.

Note that if either $e_1$ or $e_2$ may diverge, the theorem always allows a termination-sensitive attacker to distinguish them.
In this case, Lemma~\ref{lem:pc-tsni} ensures $\pc \flowsto \TermLabel$, which then guarantees $\ell \flowsto \AtkLabel$ by transitivity.

\section{A Framework for Effectful Labeled Languages}
\label{sec:prod-effects-ifc}

We have now twice given semantics to \pc systems using type-and-effect systems and monadic translations.
The ability to give semantics to not just traditional effects like state, but combinations of effects and more unusual effects, like nontermination, demonstrates the power of this technique.
We now generalize these ideas by moving to a semantic framework that does not lock us into a single language.
Instead, we provide a set of typing rules and equations specifying the language features that our semantics require.

This approach allows us to describe the semantics of a large class of languages at once.
By making our framework as general as possible, we learn about what features a language needs to make our semantics work.
More importantly, we can also be sure that we do not rely on a \emph{lack} of other language features.
Thus, we can make strong semantic and security guarantees about any language that admits the rules of our framework, regardless of what other features may be present.
We develop our framework by looking at the commonalities in our examples and determining which properties are necessary to obtain the security results.

In our examples in Sections~\ref{sec:state-and-exns} and \ref{sec:tsni}, we developed a semantics for \pc labels via two layers of translation.
We first translated from a \pc system to a type-and-effect system, and then to a pure language via monadic translation.
Our general framework makes the same division.
We begin by focusing on the monadic translation and discuss the first layer in Section~\ref{sec:niho}.

The monadic translation required two languages, one effectful and one pure, a set of effects~$\Eff$, and a translation capturing effectful programs as monadic pure ones.
The effectful language used a type-and-effect system with judgements of the form $\Gamma \proves e : \tau \ctxsep \e$ with $\e \in \Eff$, while the pure language had judgements of the form $\Gamma \proves e : \tau$.
We used an indexed monad to provide a type transformer $\Monad{\e}(-)$ for each effect $\e \in \Eff$, so that $\Monad{\e}(\tau)$ was the pure type resulting from from translating an effectful program with type $\tau$ and effect $\e$.

Our framework generalizes this approach.
We again have two languages, one effectful and one pure.
Here the type-and-effect judgements in the effectful language take the form $\effColor{t \provesEff p \dashv t' \ctxsep \e}$ and the typing judgements of the pure language take the form $\pureColor{\tau \provesPure \rho \dashv \tau'}$.
For clarity, we annotate the turnstyles and color judgments in the two languages differently.
We also use Roman letters to refer to types and programs in the effectful language and Greek letters in the pure language.

An eagle-eyed reader will have noticed that our examples have a context on the left, while our framework has only a single~$\effColor{t}$ or~$\pureColor{\tau}$.
We make this choice due to the categorical nature of monadic semantics, which makes it easiest to talk about \emph{single-input, single-output} systems.
This structure might appear restrictive, but actually allows a great deal of generality.
A common trick, which we use here, is to have $\effColor{t}$~and~$\pureColor{\tau}$ represent \emph{contexts}, rather than types.
However, the multiple-input, single-output nature of our example languages actually gives more complex structure.
That structure allows us to interpret the categorical operations as applying to a single input of programs, which we do freely in our examples.
We formalize the technical details of this transformation in Appendix~\ref{sec:multi-input}.

In both of our examples, our set of effects $\Eff$ were the power set of the individual effects in the language.
The result was a lattice structure that we leveraged to define the effect of sequentially composing effectful programs.
Sequential composition took the form of function application.
If we wrap $e_2$ in a lambda and provide $e_1$ as an argument, we first execute $e_1$ and then $e_2$.
In particular, the abstraction, application, and effect variance rules in Section~\ref{sec:tracking-effects-state-exns}
combined prove that the following rule is admissible.
$$\infer{
  \Gamma \proves e_1 : \tau_1 \ctxsep \e_1 \\
  \Gamma, x\ty\tau_1 \proves e_2 : \tau_2 \ctxsep \e_2 \\
  \e_1 \cup \e_2 \subseteq \e
}{\Gamma \proves (\lambda x.\, e_2)~e_1 : \tau_2 \ctxsep \e}$$
For our general framework, we also require both a set of effects~$\Eff$ and a sequential composition operation that we denote $p_1 \seq p_2$.
Note that, as in our examples, the language need not have an explicit sequential composition operation;
one merely needs to be macro-expressible~\mbox{\citep{Felleisen1990}}.
In Sections~\ref{sec:state-and-exns} and~\ref{sec:tsni}, the lattice structure of $\Eff$ allowed us to compose any pair of effectful programs.
In general, however, this requirement is not only unnecessary, it is overly restrictive.
It is sometimes useful for the composition operation to be \emph{partial}, allowing only certain sequences of effects---and certain sequences of effectful programs---to compose.
In particular, we discuss in Section~\ref{sec:bett-state-except} how using partiality, we can lift a seemingly-arbitrary restriction in our treatment of exceptions and state.
For our framework, we therefore turn to \citeposessive{Tate13} \emph{effector}, which was designed as the minimal structure required to give meaning to such compositions.

An effector is a set \Eff with a relation $\ComposeE{-, \dotsc, -}{-}$ defining how the effects can compose.
Intuitively, $\ComposeE{\e_1, \dotsc, \e_n}{\e}$ means that sequentially composing $n$ programs with effects $\e_1$ through $\e_n$, respectively, can result in a program with effect $\e$.
Note that $n$ may in particular be zero or one, where $\ComposeE{}{\e}$ means that a program judged to have effect $\e$ may be pure, and $\ComposeE{\e}{\e'}$ means that a program judged to have effect $\e'$ may also have effect $\e$.
The relation must follow appropriate versions of identity and associativity laws, reflecting these intuitions~\citep[Section 5]{Tate13}.
For the power-set lattices from Sections~\ref{sec:state-and-exns} and~\ref{sec:tsni}, we can simply define $\ComposeE{\e_1, \dotsc, \e_n}{\e}$ as $\e_1 \cup \dotsb \cup \e_n \subseteq \e$.

\begin{figure*}
  \small
  \begin{mathpar}
    \infer*[left=Seq$_\leq$]{
      \effColor{t_0 \provesEff p_1 \dashv t_1 \ctxsep \e_1} \\
      \cdots \\
      \effColor{t_{n-1} \provesEff p_n \dashv t_n \ctxsep \e_n} \\\\
      \ComposeE{\e_1, \dotsc, \e_n}{\e}
    }{\effColor{t_0 \provesEff p_1 \seq \dotsb \seq p_n \dashv t_n \ctxsep \e}}
    \and
    \infer*[left=Seq]{
      \pureColor{\tau_0 \provesPure \rho_1 \dashv \tau_1} \\
      \cdots \\
      \pureColor{\tau_{n-1} \provesPure \rho_n \dashv \tau_n}
    }{\pureColor{\tau_0 \provesPure \rho_1 \seq \dotsb \seq \rho_n \dashv \tau_n}}
    \\
    \infer*[left=Capture]{\effColor{t \provesEff p \dashv t' \ctxsep \e}}{\pureColor{\pureTypeTrans{t} \provesPure \Captured{\e}{p} \dashv \CapturedPureTypeTrans{\e}{t'}}}
    \\
    \infer*[left=Map]{\pureColor{\tau \provesPure \rho \dashv \tau'}}{\pureColor{\CapturedType{\varepsilon}{\tau} \provesPure \MapE{\varepsilon}(\rho) \dashv \CapturedType{\varepsilon}{\tau'}}}
    \and
    \infer*[left=Join]{\ComposeE{\varepsilon_1, \ldots, \varepsilon_n}{\varepsilon}}{\pureColor{\CapturedType{\varepsilon_1}{\cdots\CapturedType{\varepsilon_n}{\tau}\cdots} \provesPure \JoinE{\varepsilon_1, \ldots, \varepsilon_n}{\varepsilon} \dashv \CapturedType{\varepsilon}{\tau}}}
    \\
    \infer*[left=CapturedSeq]{
      \effColor{t_0 \provesEff p_1 \dashv t_1 \ctxsep \e_1} \\
      \cdots \\
      \effColor{t_{n-1} \provesEff p_n \dashv t_n \ctxsep \e_n} \\
      \ComposeE{\e_1, \dotsc, \e_n}{\e}
    }{\pureColor{\Captured{\e}{p_1 \seq \dotsb \seq p_n}
      \progEq{\pureTypeTrans{t_0}, \CapturedPureTypeTrans{\e}{t_n}}
      \Captured{\e_1}{p_1} \seq \MapE{\e_1}(\Captured{\e_2}{p_2} \seq \MapE{\e_2}(\dotsb (\Captured{\e_n}{p_n}))) \seq \JoinE{\e_1, \dotsb, \e_n}{\e}}}
    \\
    \infer*[left=ProtectC]{
      \ell \flowsto \ell_\e \\
      \effColor{\ell \prot t}
    }{\pureColor{\ell \prot \CapturedPureTypeTrans{\e}{t}}}
    \and
    \infer*[left=EquivCap]{
      \pureColor{\Captured{\e}{p} \lequiv \Captured{\e}{q}}
    }{\effColor{p \lequivEff{\e} q}}
  \end{mathpar}
  \caption[Main Rules for our Semantic Framework]{Main Rules for Semantic Framework of Labeled Pure and Effectful Programming Languages}
  \label{fig:framework-rules}
\end{figure*}

An effector allows us to state that composition of effectful programs is only required when their types match and their effects compose.
We formalize this rule as \textsc{Seq$_\leq$} in Figure~\ref{fig:framework-rules},
which also requires the larger program's effects to be a valid composition of the individual effects.
Note that, as with composing effects, \textsc{Seq$_\leq$} requires composition of zero or more programs.
The above typing rule only demonstrates composition in our example languages for pairs of programs.
Using pairwise composition, we can inductively define composition of any larger number of programs.
Composing a single program is just that program unmodified, and nullary composition is the identity~$\lambda x.\, x$.

Our pure language, DCC, also used function application for sequential composition.
Our general framework requires sequential composition for the pure language in the \textsc{Seq} rule.
Because the language is pure, there is no concern about when effects may compose, so we
require pure programs to compose whenever the output type of the one matches the input type of the next.

We now turn to the monadic translation itself.
In both examples, we handled multiple possible sets of effects by using a monad $\Monad{\e}(-)$ indexed on $\e \in \Eff$.
We then required a translation that took a well-typed effectful program $e$ where $\Gamma \proves e : \tau \ctxsep \e$
to a well-typed pure program $e'$ where $\Gamma \proves e' : \Monad{\e}(\tau)$.
The \textsc{Capture} rule incorporates this translation, which we denote $\Captured{-}{-}$, into our framework.
\footnote{\citet{Tate13}~calls capture \emph{thunking}, to bring attention to the similarity with the familiar concept in functional languages.
  We use the word ``capture'' here because it better-fits how we use the concept.}
Note that the pure and effectful languages may use different type constructors.
For instance, our example effectful languages annotated function types with effects, but DCC does not.
While we implicitly translated $\tau_1 \effto \tau_2$ to $\tau_1 \to \Monad{\e}(\tau_2)$ in Sections~\ref{sec:state-and-exns} and~\ref{sec:tsni},
our general framework makes this translation explicit and denotes it $\pureTypeTrans{-}$.

The lattice structure of $\Eff$ coupled with the $\eta$, $\BindM{}$, and $\CoerceName$ operations meant that $\Monad{\e}(-)$ formed an indexed monad~\citep{WadlerT98,OrchardPM14}.
As our framework generalizes $\Eff$ to an effector, it correspondingly generalizes $\Monad{\e}$ to a \emph{productor}~\citep{Tate13},
a generalization of both indexed monads and graded monads~\citep{Katsumata14,FujiiKM16}, which require $\Eff$ to be an ordered monoid.

Unsurprisingly, productors require structure similar to $\eta$, $\BindM{}$, and $\CoerceName$, which we specify in the \textsc{Map} and \textsc{Join} rules.
\textsc{Map} requires, for each effect~$\e$, a pure program transformer $\MapE{\e}$
that takes a program from $\pureColor{\tau}$ to $\pureColor{\tau'}$ and produces a program from $\pureColor{\CapturedType{\e}{\tau}}$ to $\pureColor{\CapturedType{\e}{\tau'}}$.
\textsc{Join} requires a program $\JoinE{\e_1, \dotsc, \e_n}{\e}$ whenever $\ComposeE{\e_1, \dotsc, \e_n}{\e}$
that translates a pure program capturing effects $\e_1$ through $\e_n$ in order into a pure program that captures effect $\e$.
Notably, every indexed monad is also a productor and any productor whose effector $\Eff$ forms a lattice is also an indexed monad~\citep{Tate13}.

\textsc{CapturedSeq} says that we can define the capture of the sequential composition of effectful programs by capturing each program individually and using pure composition, $\MapE{}$, and $\JoinName$ appropriately.
Here $\pureColor{\rho_1 \progEq{\tau_1, \tau_2} \rho_2}$ means ``$\rho_1$ and $\rho_2$ are equal as pure programs from $\tau_1$ to $\tau_2$.''
Those familiar with monads will recognize this as a generalization of monadic bind.
Indeed, a monad (or indexed monad) requires $\ComposeE{\e,\e}{\e}$ for all $\e \in \Eff$, allowing us to define $\BindM{\e}(\rho) = \MapE{\e}(\rho) \seq \JoinE{\e,\e}{\e}$.
Those familiar with category theory will also recognize this as Klesili composition.

Our framework up to this point is developed directly from \citet{Tate13} and gives rules defining the semantics of a type-and-effect system using a productor.
However, giving semantics to program-counter labels also requires our effects to be \emph{secure}.
For that, we need two more laws.

In Sections~\ref{sec:state-and-exns} and~\ref{sec:tsni} we went to great pains to ensure that the labels on our pure monads properly represented the visibility of our effects.
In Section~\ref{sec:tsni} we did this directly with the protection relation on a pointed type $\LiftType{\tau}$.
In Section~\ref{sec:state-and-exns} it was a bit more complicated.
We required that $\StateLabel \prot \sigma$, the type of the state cell, and $\Monad{\ExnEff}(-)$ explicitly wrapped its output with $\ExnLabel$ to constrain visibility.
The end goal was that, if both the output and effects of the original program had sensitivity at least~$\ell$, then the output of the translated pure program must also have sensitivity at least~$\ell$.
In order to codify that goal in our framework, we assume both the pure and effectful languages have a protection relation defining sensitivity and relate them in exactly this way.
We then require \textsc{ProtectC} to hold in our framework, which is exactly the rule that we used in Section~\ref{sec:tsni}.
We can furthermore show that \textsc{ProtectC} holds in the example from Section~\ref{sec:state-and-exns} by case analysis on the effects $\varepsilon$, using the fact that $\ExnLabel \flowsto \StateLabel$ and that $\StateLabel \prot \sigma$.

Note that, while we assume that both languages define a protection relation, it does not have to be the main security mechanism of the language.
Instead, protection can be defined in terms of that security mechanism.
For instance, a fine-grained system can define a protection relation using the fact that every type has an associated label.
\citet{RajaniG18}~give an example of this sort of definition in their translation of a fine-grained system into a coarse-grained one.

In Section~\ref{sec:state-and-exns} we used a well-known monadic translation that creates a simulation.
That is, if $\langle p, s \rangle \stepsone \langle p', s' \rangle$, then $(\Captured{\e}{p}~s) \stepsmany (\Captured{\e}{p'}~s')$.
This property tells us that any context that distinguishes $p$~from~$q$
translates to a context that distinguishes $\Captured{\varepsilon}{p}$~from~$\Captured{\varepsilon}{q}$.
In other words, effectful programs must be indistinguishable whenever their captured pure counterparts are.
Because noninterference specifies that an attacker \emph{cannot} distinguish programs whose outputs (including effects) are highly sensitive,
this property allows us to lift noninterference from the pure language to the effectful one.
While encoding operational semantics, contextual equivalence, and simulation into our framework would require a lot of work, we can directly encode this last insight.
We do so with \textsc{EquivCap}.
We require two equivalence relations parameterized on a label~$\ell$:~one for pure programs and one for effectful ones.
Intuitively, two programs are equivalent at~$\ell$ if they are indistinguishable to attacker who can read values only up to level~$\ell$.
The equivalence of effectful programs also takes an effect parameter, allowing the definition to account for visible effects.
\textsc{EquivCap} demands only that they capture the correspondence we relied on above.

\textsc{EquivCap} is also the rule that requires the monadic translation to capture the semantics of the effect.
For instance, if we replace the monadic translation in Section~\ref{sec:state-and-exns} with a faulty one, we will not enjoy \textsc{EquivCap} with contextual equivalences.
Imagine in particular a translation that treated the state as a constant and discarded writes.
The effectful programs $\Write(3)$ and $\Write(4)$ would then be distinguishable by a context that reads the state and compares it to $3$,
but they would translate to identical---and therefore indistinguishable---pure programs.
Of course, we could change the notion of effectful equivalence as well, essentially changing the semantics of our effectful language.
However, as we will see in Section~\ref{sec:type-and-effect-ni}, that equivalence defines the meaning of noninterference in our framework,
so the guarantees and semantics given by our framework will reflect this new equational semantics.

With these rules in place, the framework allows us to prove a strong and general security theorem for effectful programming languages.
Moreover, it connects to the category theory, giving us powerful tools for reasoning about effects.

\subsection{Better State and Exceptions via Partialiaty}
\label{sec:bett-state-except}

In Section~\ref{sec:state-and-exns} we avoided a concern about state and exceptions combining to leak data by assuming that $\ExnLabel \flowsto \StateLabel$.
Recall that when we compose a program~$p$ that may throw an exception with a program~$q$ that may write state, the write---or lack thereof---can leak whether $q$ executed and thus whether $p$ threw an exception.
We therefore restricted our system to require that anyone who could observe the state could also see any exceptions.

We now modify our rules from Section~\ref{sec:tracking-effects-state-exns} to remove this restriction while still ensuring that state and exceptions interact securely.
We use the translation from Section~\ref{sec:effectf-nonint-free} as a guide.
When composing programs~$p$ and~$q$ where $p$ may throw an exception and $q$ has effect $\e$, $\Captured{}{p}$~wraps its output in $\ExnLabel$.
This represents the fact that, to read that data, an attacker must be able to see whether an exception has been thrown (since if one had, that data would not exist).
To compose these programs, then, anyone who can see the results of any effects in~$q$ must be able to see whether an exception occurred.
In other words, for the translation to be well-typed, we require $\ExnLabel \flowsto \ell_\e$.

By assuming $\ExnLabel \flowsto \StateLabel$, we assumed an adversary who could see the results of any effect could see if an exception was thrown.
This made the requirement above trivial,
but it prevented the language from representing e.g., writes that were more secret than exceptions.
We could alternatively enforce noninterference by restricting composition of effects as follows:
\begin{mathpar}
  \infer*{
    \e_1 \cup \e_2 \subseteq \e_3 \\
    E \notin \e_1
  }{\ComposeE{\e_1, \e_2}{\e_3}}
  \and
  \infer*{
    \e_1 \cup \e_2 \subseteq \e_3 \\
    \ExnLabel \flowsto \ell_{\e_2}
  }{\ComposeE{\e_1, \e_2}{\e_3}}
\end{mathpar}

Not that this makes the composition relation partial, since not all pairs of effects compose.
This means that $\Eff$ is no longer a lattice or even an ordered monoid.
As a result, the translation we defined in Section~\ref{sec:effectf-nonint-free} is no longer an indexed monad, or even a graded monad.
This solution relies on the generality of effectors and productors.

This precise modification to the composition rules---and the partiality that results---is critically important for more realistic languages.
For example, FlowCaml~\citep{PottierS02} includes multiple exception types and general mutable reference cells with different types.
In that setting, a program~$p$ that may throw an exception with label~$\ell$ can safely compose with a program~$q$ that writes data exclusively at or above label~$\ell$.

\section{The Noninterference Half-Off Theorem}
\label{sec:niho}

Our main aim is to show how, given a \pc system, we can give semantics to the \pc label and prove noninterference for that system.
We do this by translation to a type-and-effect system, which we give semantics using the framework from Section~\ref{sec:prod-effects-ifc}.
This is a powerful and general result, but it requires us to first demonstrate the security of the type-and-effect system itself.

\subsection{Type-and-Effect Noninterference}
\label{sec:type-and-effect-ni}

In Sections~\ref{sec:state-and-exns}~and~\ref{sec:tsni} we leveraged the ability to translate effects into a pure language to simplify reasoning about noninterference.
This allowed us to prove noninterference for the effectful language while only proving it directly for the pure part of the language.
Since the framework of Section~\ref{sec:prod-effects-ifc} specifies when this is possible, we would like to prove noninterference for any languages which admit the rules in Figure~\ref{fig:framework-rules}.

As in our examples, noninterference formalizes the intuition that adversary at label $\AtkLabel$ can only distinguish data and effects at or below label $\AtkLabel$.
We again define the label of data using a protection relation, though this time we leave the details of that relation abstract.
We also assign a label $\ell_\e$ to the effect $\e$ to represent $\e$'s sensitivity.
Finally, as each language has a different notion of equivalence, use an abstract notion of equivalence $\equiv_{-}$ parameterized on labels.
\begin{defn}[Abstract Noninterference]
  \label{defn:abstract-ni}
  Let~$r$ be a program such that $\effColor{t_1 \provesEff r \dashv t_2 \ctxsep \e_1}$.
  We say that~$r$ is \emph{noninterfering with respect to $\equiv_{-}$} if, for all labels $\ell \in \Labs$ and programs $p$ and $q$ such that
  \begin{enumerate}
    \item \label{absni:li:inputs} $\effColor{t_3 \provesEff p \dashv t_1 \ctxsep \e_2}$ and $\effColor{t_3 \provesEff q \dashv t_1 \ctxsep \e_2}$ with $\ComposeE{\e_2, \e_1}{\e}$
    \item \label{absni:li:protection} $\effColor{\ell \prot t_1}$ and $\ell \flowsto \ell_{\e_2}$
  \end{enumerate}
  then for all labels $\AtkLabel \in \Labs$, either $\ell \flowsto \AtkLabel$ or $\effColor{p \seq r \equiv_{\AtkLabel} q \seq r}$.
\end{defn}
In the above definition, condition~\ref{absni:li:inputs} requires the sequential compositions $p \seq r$ and $q \seq r$ to be well-typed, while $p$ and $q$ each produce effects (at most) $\e_2$.
This sequential composition represents providing two different inputs to the program~$r$, abstracting the contextual equivalence we used in Sections~\ref{sec:pure-ifc-lang}, \ref{sec:state-and-exns}, and~\ref{sec:tsni}.
Condition~\ref{absni:li:protection} requires that the sensitivity of both the type, $t_1$, and the effects, $\e_2$, of the input programs be at least $\ell$.
Intuitively, the conclusion says that an attacker can only use $r$ to distinguish between $p$ and $q$ if they could already see the effects or outputs of $p$ and $q$, and thus distinguish them without $r$.

We also allow the definition to apply to pure programs, replacing type-and-effect judgements with pure judgements and disregarding other references to effects.
Our framework's rules are then sufficient to transfer a noninterference result from pure programs to effectful ones.
\begin{thm}[Type-and-Effect Noninterference]
  \label{thm:type-and-effect-ni}
  For any system satisfying all rules in Figure~\ref{fig:framework-rules} where every well-typed pure program is noninterfering with respect to $\lequiv[-]$,
  then every program well-typed in the type-and-effect system is noninterfering with respect to $\lequivEff[-]{\e}$.
\end{thm}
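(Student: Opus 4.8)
The plan is to instantiate the effectful form of Definition~\ref{defn:abstract-ni} and reduce it, via \textsc{EquivCap} and \textsc{CapturedSeq}, to the pure form of the same definition applied to a single pure program capturing the ``meaning'' of the given effectful program. Fix $r$ with $\effColor{t_1 \provesEff r \dashv t_2 \ctxsep \e_1}$, and fix a label $\ell \in \Labs$, programs $p$ and $q$, effects $\e_2$ and $\e$, and an attacker label $\AtkLabel$ as in Definition~\ref{defn:abstract-ni}; in particular $\effColor{t_3 \provesEff p \dashv t_1 \ctxsep \e_2}$, $\effColor{t_3 \provesEff q \dashv t_1 \ctxsep \e_2}$, $\ComposeE{\e_2, \e_1}{\e}$, $\effColor{\ell \prot t_1}$, and $\ell \flowsto \ell_{\e_2}$. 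By \textsc{Seq$_\leq$} both $p \seq r$ and $q \seq r$ are well-typed with effect $\e$, so the goal $\effColor{p \seq r \lequivEff[\AtkLabel]{\e} q \seq r}$ is well-formed. If $\ell \flowsto \AtkLabel$ we are done; so assume $\ell \nflowsto \AtkLabel$, and note that by \textsc{EquivCap} it now suffices to establish the pure statement $\pureColor{\Captured{\e}{p \seq r} \lequiv[\AtkLabel] \Captured{\e}{q \seq r}}$.

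First I would build the pure program that will play the role of $r$. Applying \textsc{Capture} to $r$, then \textsc{Map} at effect $\e_2$, then \textsc{Join} at $\ComposeE{\e_2, \e_1}{\e}$, and composing with \textsc{Seq}, set $\rho := \MapE{\e_2}(\Captured{\e_1}{r}) \seq \JoinE{\e_2, \e_1}{\e}$; this is a pure program from $\CapturedPureTypeTrans{\e_2}{t_1}$ to $\CapturedPureTypeTrans{\e}{t_2}$, using the identification $\CapturedType{\e_2}{\pureTypeTrans{t_1}} = \CapturedPureTypeTrans{\e_2}{t_1}$ (and its analogues) to line up the \textsc{Map} and \textsc{Join} domains and codomains. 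Applying \textsc{Capture} to $p$ and $q$ gives pure programs from $\pureTypeTrans{t_3}$ to $\CapturedPureTypeTrans{\e_2}{t_1}$, namely $\Captured{\e_2}{p}$ and $\Captured{\e_2}{q}$, whose output type is exactly the input type of $\rho$. Then \textsc{CapturedSeq} for the composition $p \seq r$ (applicable since $\ComposeE{\e_2, \e_1}{\e}$), together with associativity of $\seq$, shows that $\Captured{\e}{p \seq r}$ and $\Captured{\e_2}{p} \seq \rho$ are the same pure program from $\pureTypeTrans{t_3}$ to $\CapturedPureTypeTrans{\e}{t_2}$, and similarly for $q$; so it suffices to prove $\pureColor{\Captured{\e_2}{p} \seq \rho \lequiv[\AtkLabel] \Captured{\e_2}{q} \seq \rho}$.

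Next I would invoke the hypothesis. Since $\rho$ is a well-typed pure program, it is noninterfering with respect to $\lequiv[-]$; that is, it satisfies the pure form of Definition~\ref{defn:abstract-ni} with $\rho$ as the program. Instantiate that with input programs $\Captured{\e_2}{p}$ and $\Captured{\e_2}{q}$ (common input type $\pureTypeTrans{t_3}$, output type equal to the input type of $\rho$) and with the label $\ell$. Its protection hypothesis asks that $\ell$ protect the input type $\CapturedPureTypeTrans{\e_2}{t_1}$ of $\rho$ in the pure language --- which is exactly the conclusion of \textsc{ProtectC} applied to the premises $\ell \flowsto \ell_{\e_2}$ and $\effColor{\ell \prot t_1}$ that we have. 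Hence pure noninterference of $\rho$ yields that either $\ell \flowsto \AtkLabel$ --- impossible by our assumption --- or $\pureColor{\Captured{\e_2}{p} \seq \rho \lequiv[\AtkLabel] \Captured{\e_2}{q} \seq \rho}$. Rewriting along the two \textsc{CapturedSeq} equalities turns the latter into $\pureColor{\Captured{\e}{p \seq r} \lequiv[\AtkLabel] \Captured{\e}{q \seq r}}$, which is exactly the pure statement we reduced to; \textsc{EquivCap} then gives $\effColor{p \seq r \lequivEff[\AtkLabel]{\e} q \seq r}$, completing the argument.

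Most of this is bookkeeping: keeping track of the various capture type-constructors and checking that all sequential compositions are well-typed (so that, e.g., the goal is even well-formed). I expect two steps to need genuine care. First, reassociating the threefold pure composition produced by \textsc{CapturedSeq} so that its last two factors form precisely $\rho$; this follows from the associativity coherence built into \textsc{Seq} and the effector laws, but should be made explicit. Second --- and I expect this to be the main obstacle --- rewriting under $\lequiv[-]$ along the \textsc{CapturedSeq} equalities requires that $\lequiv[-]$ respect equality of pure programs. This should hold because that equality is the pure language's own notion of program equality and $\lequiv[-]$ is a relation on pure programs taken up to it; if one is uneasy about it, one can instead formulate \textsc{EquivCap} and both equivalence relations on equivalence classes of programs from the outset, so that the rewriting becomes a no-op.
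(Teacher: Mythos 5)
Your proposal is correct and follows essentially the same route as the paper's proof: it defines the same pure program $\rho = \MapE{\e_2}(\Captured{\e_1}{r}) \seq \JoinE{\e_2,\e_1}{\e}$, uses \textsc{CapturedSeq} to identify $\Captured{\e}{p \seq r}$ with $\Captured{\e_2}{p} \seq \rho$, discharges the protection premise via \textsc{ProtectC}, invokes pure noninterference of $\rho$, and transfers the result back with \textsc{EquivCap}. The extra bookkeeping you flag (reassociating the threefold composition and the congruence of $\lequiv[-]$ with pure program equality) is handled implicitly in the paper and does not change the argument.
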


\begin{proof}
  Unfolding Definition~\ref{defn:abstract-ni}, we have programs $p$, $q$, and $r$ and a label $\ell$ such that
  \begin{enumerate}
    \item $\effColor{t_1 \provesEff r \dashv t_2 \ctxsep \e_1}$,
    \item $\effColor{t_3 \provesEff p \dashv t_1 \ctxsep \e_2}$ and $\effColor{t_3 \provesEff q \dashv t_1 \ctxsep \e_2}$ with $\ComposeE{\e_2, \e_1}{\e}$,
    \item $\effColor{\ell \prot t_1}$ and $\ell \flowsto \ell_{\e_2}$,
  \end{enumerate}
  and we aim to show that for all labels~$\AtkLabel \in \Labs$, either $\ell \flowsto \AtkLabel$ or $\effColor{p \seq r \lequivEff[\AtkLabel]{\e} q \seq r}$.

  Let $\rho = \MapE{\e_2}(\Captured{\e_1}{r})\mkern-2mu \seq \JoinE{\e_2, \e_1}{\e}$.
  The rules in Figure~\ref{fig:framework-rules} guarantee $\pureColor{\CapTypePref{\e_2} \pureTypeTrans{t_1} \provesPure \rho \dashv \CapTypePref{\e} \pureTypeTrans{t_2}}$
  and \textsc{CapturedSeq} requires \mbox{$\pureColor{\Captured{\e}{p \seq r} \progEq{\pureTypeTrans{t_3}, \CapTypePref{\e} \pureTypeTrans{t_2}} \Captured{\e_2}{p}\mkern-2mu \seq \rho}$}, and similarly for $q$.
  All well-typed pure programs are noninterfering with respect to $\lequiv[-]$, $\rho$ is pure, and $\pureColor{\ell \prot \CapTypePref{\e_2}\pureTypeTrans{t_1}}$ by \textsc{ProtectC}.
  Thus, for any label~$\AtkLabel$, either $\ell \flowsto \AtkLabel$ or $\pureColor{\Captured{\e_2}{p} \seq \rho \lequiv[\AtkLabel] \Captured{\e_2}{q} \seq \rho}$.
  In the first case, we are finished.
  In the second case, the program equality above and \textsc{EquivCap} give us
  \[
    \pureColor{\Captured{\e_2}{p}\mkern-2mu \seq \rho \lequiv[\AtkLabel] \Captured{\e_2}{q}\mkern-2mu \seq \rho}
    \iff
    \pureColor{\Captured{\e}{p \seq r} \lequiv[\AtkLabel] \Captured{\e}{q \seq r}}
    ~\Longrightarrow~
    \effColor{p \seq r \lequivEff[\AtkLabel]{\e} q \seq r}.
    \qedhere
  \]
\end{proof}

Theorem~\ref{thm:type-and-effect-ni} lifts noninterference of pure programs to effectful programs when the corresponding notions equivalence satisfy \textsc{EquivCap}.
We can now see what happens if these equivalences do not match expectations.
Recall our example from Section~\ref{sec:prod-effects-ifc}: we translate state by discarding writes and returning a constant for all reads.
\textsc{EquivCap} no longer holds for contextual equivalence, but it does hold for other equivalences.
There may be many such equivalences, but one simple option is the trivial effectful equivalence that is always true.
Using this equivalence, our example now admits all rules in Figure~\ref{fig:framework-rules}, so Theorem~\ref{thm:type-and-effect-ni} applies.
However, we are now giving trivial semantics to the type-and-effect system.
Abstract noninterference with respect to this semantics merely says that an attacker who cannot distinguish anything cannot distinguish sensitive programs.
This result is both intuitively and technically trivial.
The instantiation of the framework, while allowed, is therefore probably uninteresting.

\subsection{Semantics of Program Counter Labels}

\begin{figure}
  \begin{mathpar}
    \infer*[left=PcEff]{\pcColor{t \ctxsep \pc \provesPc p \dashv t'}}{\exists \varepsilon. \effColor{\effTypeTrans{t} \provesEff p \dashv \effTypeTrans{t'} \ctxsep \varepsilon}}
    \and
    \infer*[left=ProtectTrans]{\pcColor{\ell \prot t}}{\effColor{\ell \prot \effTypeTrans{t}}}
  \end{mathpar}
  \caption{Additional Rules for \pc systems}
  \label{fig:pc-system-rules}
\end{figure}

We can now use the semantic framework we have developed for effectful labeled programs and noninterference for type-and-effect systems to talk about the semantics and security of the $\pc$ label.
We extend the framework to include a \pc system with judgments of the form $\pcColor{t \ctxsep \pc \provesPc p \dashv t'}$.
(We still use Roman letters for types and programs in the \pc~system, but we color them in \pcColor{green}.)
Figure~\ref{fig:pc-system-rules} shows the rules we require for this extended framework.

We give semantics to the \pc by formalizing the intuition that it constrains programs to only use secure effects.
Specifically, we define the semantics by requiring a translation of typing proofs in the \pc system to typing proofs in the type-and-effect system, which guarantees security by Theorem~\ref{thm:type-and-effect-ni}.
\textsc{PcEff} formalizes this requirement.

For this semantics to make sense, we would like it to preserve types.
Unfortunately, in our examples, the \pc systems and type-and-effect systems had different function types.
The \pc system included a label on its functions ($\tau_1 \pcto \tau_2$), while the type-and-effect system included an effect ($\tau_1 \effto \tau_2$).
We therefore allow the \pc system to have different types, but the same programs, and require there to be a translation $\effTypeTrans{-}$ from the \pc types to the type-and-effect types.
This translation must preserve the sensitivity of the data, represented as the protection level, a requirement we formalize as rule~\textsc{ProtectTrans}.

These rules complete the requirements for our core theorem.
\begin{thm}[The Noninterference Half-Off Theorem]
  \label{thm:niho}
  For any system satisfying all rules in Figures~\ref{fig:framework-rules} and~\ref{fig:pc-system-rules}, if every well-typed pure program is noninterfering with respect to $\lequiv[-]$,
  then every effectful program well-typed in the \pc system is noninterfering with respect to $\lequivEff[-]{\e}$.
\end{thm}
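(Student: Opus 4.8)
The plan is to obtain Theorem~\ref{thm:niho} as a short corollary of the Type-and-Effect Noninterference theorem (Theorem~\ref{thm:type-and-effect-ni}), using the two new rules of Figure~\ref{fig:pc-system-rules} purely to bridge between the \pc system and the type-and-effect system. The first observation is that the hypotheses of Theorem~\ref{thm:niho} already subsume those of Theorem~\ref{thm:type-and-effect-ni}: we are given that every rule of Figure~\ref{fig:framework-rules} holds and that every well-typed pure program is noninterfering with respect to $\lequiv[-]$. Hence Theorem~\ref{thm:type-and-effect-ni} applies verbatim and tells us that every program well-typed in the type-and-effect system is noninterfering with respect to $\lequivEff[-]{\e}$. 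All the substantive work—the use of \textsc{CapturedSeq}, \textsc{ProtectC}, and \textsc{EquivCap} to reflect pure noninterference onto effectful programs—is already discharged there.

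What remains is to transport this conclusion from a type-and-effect judgement to a \pc judgement. Given an effectful program $r$ with $\pcColor{t_1 \ctxsep \pc \provesPc r \dashv t_2}$, rule \textsc{PcEff} supplies an effect $\varepsilon_1$ with $\effColor{\effTypeTrans{t_1} \provesEff r \dashv \effTypeTrans{t_2} \ctxsep \varepsilon_1}$, and the same rule applies to the probing inputs $p$ and $q$ appearing in Definition~\ref{defn:abstract-ni}. Since \textsc{PcEff} leaves the program text untouched and only reinterprets types through $\effTypeTrans{-}$, the sequential compositions $p \seq r$ and $q \seq r$ are literally unchanged, so the conclusion $\effColor{p \seq r \lequivEff[\AtkLabel]{\e} q \seq r}$ delivered by Theorem~\ref{thm:type-and-effect-ni} is exactly the conclusion we want. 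The protection premise is the one place where the \pc system's own data appears: from $\pcColor{\ell \prot t_1}$ I would invoke \textsc{ProtectTrans} to obtain $\effColor{\ell \prot \effTypeTrans{t_1}}$, which is what the type-and-effect instance of Definition~\ref{defn:abstract-ni} requires. The side condition $\ell \flowsto \ell_{\e}$ refers only to labels and effects and so carries over unchanged. Assembling: fix $\ell$, fix \pc-well-typed $p$ and $q$ satisfying the premises, translate through \textsc{PcEff} and \textsc{ProtectTrans}, apply Theorem~\ref{thm:type-and-effect-ni} to the translated data, and read off the result.

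I do not expect a genuine obstacle here; the difficulty has been front-loaded into Theorem~\ref{thm:type-and-effect-ni} and into the design of Figure~\ref{fig:pc-system-rules}. The one point that needs care is the bookkeeping at the interface between the two systems: checking that \textsc{PcEff} is applied uniformly to the main program and to the inputs; that the effect $\varepsilon_2$ it yields for the inputs still satisfies $\ComposeE{\varepsilon_2, \varepsilon_1}{\e}$ so that the compositions $p \seq r$ and $q \seq r$ remain well-typed after translation (which follows from applying \textsc{PcEff} to the composed \pc-derivation and inverting through \textsc{Seq$_\leq$}); and that \textsc{ProtectTrans} is both necessary and sufficient to move the protection hypothesis across. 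Once this interface is pinned down, the theorem falls out immediately, which is precisely the point: the framework has been arranged so that ``noninterference half-off'' is almost free given pure noninterference.
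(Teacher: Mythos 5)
Your proposal matches the paper's own argument: the paper proves Theorem~\ref{thm:niho} exactly by invoking \textsc{PcEff} to move from the \pc judgement to a type-and-effect judgement and then applying Theorem~\ref{thm:type-and-effect-ni}, noting only that the $\e$ in $\lequivEff[-]{\e}$ is the one used to type-check~$r$; your additional bookkeeping via \textsc{ProtectTrans} is precisely what Figure~\ref{fig:pc-system-rules} was designed to supply. The proposal is correct and takes essentially the same route, just spelled out in slightly more detail than the paper's two-line proof.
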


\begin{proof}
  This follows directly from \textsc{PcEff} and Theorem~\ref{thm:type-and-effect-ni}.
  We note that this always instantiates the~$\e$ in $\lequivEff[-]{\e}$ with the same~$\e$ used to type check~$r$.
\end{proof}

\section{Deepening the PC-Effect Connection}
\label{sec:deepening-pc-effect}

So far we have kept the connections between effects and program-counter labels lightweight: we only required a function $\ell_{-}$ from effects to labels and the two rules from Figure~\ref{fig:pc-system-rules}.
This means that our framework can give semantics to many systems.
This generality, however, prevents us from proving some interesting theorems which we would like to prove.
In this section, we strengthen the connection between program-counter labels and effects, allowing us to prove stronger results.

In particular, we formalize the aphorism that the \pc is a lower bound on effects.
Interestingly, not every language that fits our framework treats the \pc as a lower bound on effects, despite the fact that they are secure by Thoerem~\ref{thm:niho}.
Indeed, a rather simple counterexample shows that the \pc can always be incomparable to the label of an effect.

Still, all of our realistic examples do treat the \pc as a lower bound on effects.
We show that this is because they admit a few simple rules on top of the framework we have developed so far.
Moreover, in all of our examples so far, we can extend the function~$\ell_{-}$ into a \emph{Galois connection} between labels and effects.
Beyond being intrinsically interesting, it also allows us to refine our formalization of the folklore above, producing a more-concrete result.

\subsection{Is the PC a Lower Bound on Effects?}
\label{sec:pc-lower-bound}

We start by formalizing the folklore statement that ``the \textsf{pc} is a lower bound on effects.''
As mentioned in Section~\ref{sec:introduction}, taken literally this aphorism does not even seem to type-check, since we are trying to bound an effect by a label.
However, we can use our function~$\ell_{-}$ to formalize the statement by saying that the $\pc$ bounds the \emph{label} of the effect.
\begin{defn}[$\pc$-Bounded Effects]
  \label{defn:pc-bound-eff}
  We say effects are \emph{$\pc$-bounded} if whenever $\pcColor{t \ctxsep \pc \provesPc p \dashv t'}$,
  there is some $\varepsilon$ such that $\effColor{\effTypeTrans{t} \provesEff p \dashv \effTypeTrans{t'} \ctxsep \varepsilon}$ where $\pc \flowsto \ell_\varepsilon$.
\end{defn}
Note that Definition~\ref{defn:pc-bound-eff} only requires that the \pc bound \emph{some}~$\varepsilon$.
We might at first think that any~$\varepsilon$ with which $p$ type-checks in the type-and-effect system should be bounded below by \pc,
but effect variance prevents that definition from applying to most languages.
To see why, imagine a program~$p$ in the system from Section~\ref{sec:state-and-exns} that reads some data from state, but can never write data or throw an exception.
It can type-check with a $\pc$ of $\top$, which flows to $\ell_\REff$, since $\ell_\REff = \top$.
However, by variance $p$ can also type-check with effect $\{\REff,\WEff,\ExnEff\}$, and $\top \not\flowsto \ell_{\{\REff, \WEff, \ExnEff\}}$.
The existential quantifier in Definition~\ref{defn:pc-bound-eff} thus provides a meaningful statement while allowing imprecision due to variance.

Common wisdom suggests that any language that uses a \pc to enforce noninterference should have \pc-bounded effects.
However, this is not the case, as we can show using our framework.

Consider a language with state and exceptions, based on that from Section~\ref{sec:state-and-exns}.
In the original language, any preorder could serve as the set of information-flow labels (though we used a join semilattice for convenience).
However, in the new language we will use a join semilattice of a special form.
Intuitively, we want two equivalent but unrelated spaces of labels, one for effects and one for program-counter labels.
Hence, we use a \emph{semilattice coproduct}: given a semilattice of labels~$\Labs$ sufficient to represent our effects, we construct a new semilattice with two disjoint copies of~$\Labs$.
We cannot work directly over the disjoint union~$\Labs + \Labs$, since this is not a semilattice---there is no join of two labels $\Inl(\ell_1)$ and $\Inr(\ell_2)$.
However, if we add a new distinguished top element, the result is a semilattice.
In fact, it is the smallest semilattice that contains two disjoint copies of~$\Labs$.
Thus, we use a semilattice of this form for this example.

The modified language differs from the original in three ways:
(\one\/) the typing rule for \Labeled, (\two\/) the typing rules for effectful operations, and (\three\/) the function $\ell_{-}$.
First, the \Labeled rule now forces all labels into the left-hand side of the lattice.
That is, the rule is split into three cases:
\begin{mathpar}
\infer{\Gamma \proves e : \tau}{\Gamma \proves \LabeledProgram{\Inl(\ell)}{e} : \LabeledType{\Inl(\ell)}{\tau}} \and
\infer{\Gamma \proves e : \tau}{\Gamma \proves \LabeledProgram{\Inr(\ell)}{e} : \LabeledType{\Inl(\ell)}{\tau}} \and
\infer{\Gamma \proves e : \tau}{\Gamma \proves \LabeledProgram{\top}{e} : \LabeledType{\Inl(\top)}{\tau}} \and
\end{mathpar}

Second, we also consider functions~$\ell_{-}$ of a special form.
Intuitively, the lattice of labels has a label space on the left for data, and a label space on the right for effects.
We thus need $\ell_{\varepsilon}$ to always be of the form $\Inr(\ell)$ for some $\ell \in \Labs$.
To do this, we pick a function $\hat{\ell}_{-} : \Eff \to \Labs$ connecting effects to the original semilattice $\Labs$, such as the effect-to-label function we used in Section~\ref{sec:state-and-exns}.
We then lift~$\hat{\ell}_{-}$ to the full label space by defining $\ell_\e = \Inr(\hat{\ell}_\e)$.

Finally, we modify the rules that compare the $\pc$ and effect labels by explicitly comparing the $\pc$ to the data-label analogue of the effect's label.
Formally, we use the following rules:
\begin{mathpar}
  \infer{
    \Gamma \ctxsep \pc \proves e : \sigma \\
    \pc \flowsto \Inl(\hat{\ell}_\WEff)
  }{\Gamma \ctxsep \pc \proves \Write(e) : \Unit}
  \and
  \infer{
    \pc \flowsto \Inl(\hat{\ell}_\ExnEff)
  }{\Gamma \ctxsep \pc \proves \Throw : \tau}
\end{mathpar}
All of the other rules remain unchanged from those in Section~\ref{sec:state-and-exns}.

This fits our framework and in fact admits exactly the same programs as the original $\pc$-based system did.
However, if a program type-checks with some $\pc$, $\pc = \Inl(\ell)$ for some~$\ell$, while the effect label will be $\ell_{\varepsilon} = \Inr(\ell')$ for some~$\ell'$.
By construction, we cannot have $\Inl(\ell) \flowsto \Inr(\ell')$ for any labels~$\ell$ and~$\ell'$.

This example shows that it is possible to have a secure language in our framework where the \pc and the label of the effect are incomparable.
The language is noninterfering, yet its effects are not \pc-bounded.
However, we only need a few simple additions to our framework to ensure that a language's effects are \pc-bounded.

Consider a program~$p$ in one of our example type-and-effect systems such that $\Gamma, x \ty \tau \proves p : \tau' \ctxsep \varepsilon$.
We can transform this into a program on labeled data by unlabeling the input, running $p$, and labeling its output.
That is, we can build a program transformer~$\MapE{\ell}$ that we can type-check as $\Gamma, x \ty \LabeledType{\ell}{\tau} \proves \MapE{\ell}(p) : \LabeledType{\ell}{\tau'} \ctxsep \varepsilon$.
Notably, this does not change the effect.
To retain security, we must ensure that $\ell \flowsto \ell_\varepsilon$, since $p$ may otherwise leak data about the $\ell$-labeled input in its effects.

When a program of the form $\MapE{\ell}(p)$ has effect~$\varepsilon$---that is, $\Gamma, x \ty \LabeledType{\ell}{\tau} \proves \MapE{\ell}(p) : \LabeledType{\ell}{\tau'} \ctxsep \varepsilon$---we know that the effect $\varepsilon$ must come from~$p$.
Moreover, we know that $p$ must have type-checked with some effect~$\varepsilon'$ such that $\ell \flowsto \ell_{\varepsilon'}$.
However, this need not be $\varepsilon$, due to similar reasoning about variance that we saw in the design of Definition~\ref{defn:pc-bound-eff}.
Again, this leads to an existential quantifier.

The program transformer~$\MapE{\ell}(p)$ also has a similar action in the \pc system as it did in the type-and-effect system.
If $\Gamma, x \ty \tau \ctxsep \pc \proves p : \tau'$, then we want $\Gamma, x \ty \LabeledType{\ell}{\tau} \ctxsep \pc \proves \MapE{\ell}(p) : \LabeledType{\ell}{\tau'}$.
However, now the limiter is the \pc rather than the effect.
That is, this only type-checks if $\ell \flowsto \pc$.
We also note that $\MapE{\ell}$ type-checks in both the \pc and type-and-effect systems because the type translation between them leaves labels alone:
$\effTypeTrans{\LabeledType{\ell}{\tau}} = \LabeledType{\ell}{\effTypeTrans{\tau}}$.

\begin{figure}
  \centering
  \begin{mathpar}
    \infer*[left=MapPC]{\pcColor{t \ctxsep \pc \provesPc p \dashv t'}}{\pcColor{\LabeledType{\pc}{t} \ctxsep \pc \provesPc \MapE{\pc}(p) : \LabeledType{\pc}{t'}}}\and
    \infer*[left=MapEffInv]{\effColor{\LabeledType{\ell}{t} \provesEff \MapE{\ell}(p) \dashv \LabeledType{\ell}{t'} \ctxsep \varepsilon}}{\exists \varepsilon'.\,\effColor{t \provesEff p \dashv t' \ctxsep \varepsilon'} \land \ell \flowsto \ell_{\varepsilon'}} \and
    \infer*[left=LabTrans]{}{\effColor{\effTypeTrans{\pcColor{\LabeledType{\ell}{t}}}} = \effColor{\LabeledType{\ell}{\effTypeTrans{\pcColor{t}}}}}
  \end{mathpar}

\caption{Additional Rules for \pc as a Lower Bound}
\label{fig:pc-rules}
\end{figure}

We provide version of these rules in Figure~\ref{fig:pc-rules} using the single-input, single-output judgments from Sections~\ref{sec:prod-effects-ifc} and~\ref{sec:niho}.
These versions are, in fact, slightly more general.
First, the \textsc{MapPC} rule only requires that we be able to map the current \pc, rather than any label~$\ell$ where~$\ell \flowsto \pc$.
Second, we have no requirement that all effects of $\MapE{\ell}(p)$ come from $p$, only that the label of the effects that do come from $p$ are bounded below by $\ell$.

Note also that we do not have a rule corresponding to $\MapE{\ell}$ in the type-and-effect system.
\textsc{PcEff} will ensure that $\MapE{\ell}$ has the right type, which is all we need for the applications in this paper.
However the rule \textsc{MapEffInv} is very suggestive, and most any language that admits \textsc{MapEffInv} will also admit an appropriate rule for $\MapE{\ell}$ for effects.

Adding these assumptions to our framework is sufficient to prove that a language is \pc-bounded.

\begin{thm}
  \label{thm:pc-to-te}
  The effects of any language admitting the rules in Figures~\ref{fig:framework-rules}, \ref{fig:pc-system-rules}, and~\ref{fig:pc-rules} are \pc-bounded.
\end{thm}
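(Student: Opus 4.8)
The plan is to prove this by a short chase through the three rules added in Figure~\ref{fig:pc-rules}, using \textsc{PcEff} from Figure~\ref{fig:pc-system-rules} to cross from the \pc system to the type-and-effect system. Fix an arbitrary \pc-typing derivation $\pcColor{t \ctxsep \pc \provesPc p \dashv t'}$; unfolding Definition~\ref{defn:pc-bound-eff}, the goal is to exhibit some effect $\varepsilon'$ with $\effColor{\effTypeTrans{t} \provesEff p \dashv \effTypeTrans{t'} \ctxsep \varepsilon'}$ and $\pc \flowsto \ell_{\varepsilon'}$.

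First I would apply \textsc{MapPC} to the given derivation, obtaining $\pcColor{\LabeledType{\pc}{t} \ctxsep \pc \provesPc \MapE{\pc}(p) \dashv \LabeledType{\pc}{t'}}$. Next, \textsc{PcEff} produces some effect $\varepsilon$ such that $\effColor{\effTypeTrans{\LabeledType{\pc}{t}} \provesEff \MapE{\pc}(p) \dashv \effTypeTrans{\LabeledType{\pc}{t'}} \ctxsep \varepsilon}$. Rewriting both type translations with \textsc{LabTrans}, i.e.\ using $\effTypeTrans{\LabeledType{\pc}{t}} = \LabeledType{\pc}{\effTypeTrans{t}}$ and likewise for $t'$, turns this into $\effColor{\LabeledType{\pc}{\effTypeTrans{t}} \provesEff \MapE{\pc}(p) \dashv \LabeledType{\pc}{\effTypeTrans{t'}} \ctxsep \varepsilon}$, which is exactly the hypothesis of \textsc{MapEffInv} instantiated at $\ell = \pc$. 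Applying \textsc{MapEffInv} then yields some $\varepsilon'$ with $\effColor{\effTypeTrans{t} \provesEff p \dashv \effTypeTrans{t'} \ctxsep \varepsilon'}$ and $\pc \flowsto \ell_{\varepsilon'}$, which is precisely what Definition~\ref{defn:pc-bound-eff} requires. This completes the argument.

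There is no real obstacle here --- the proof is essentially a one-line rule chase --- but two points deserve care. The existential quantifier in Definition~\ref{defn:pc-bound-eff} is doing real work: the $\varepsilon$ delivered by \textsc{PcEff} need not itself satisfy $\pc \flowsto \ell_\varepsilon$, because of the effect-variance phenomenon discussed just after the definition, so it is the $\varepsilon'$ extracted by \textsc{MapEffInv}, not $\varepsilon$, that witnesses \pc-boundedness. One should also confirm that \textsc{LabTrans} is applied on both the domain and codomain types, so that the conclusion of \textsc{PcEff} genuinely matches the premise of \textsc{MapEffInv}; this is immediate since $\effTypeTrans{-}$ leaves labels in place and recurses on the body. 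Finally, it is worth noting that only the rules of Figures~\ref{fig:pc-system-rules} and~\ref{fig:pc-rules} are invoked directly: the productor structure and equivalence rules of Figure~\ref{fig:framework-rules} are assumed present for the framework to make sense but are not needed for this purely static statement.
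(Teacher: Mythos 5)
Your proof is correct and is essentially identical to the paper's own argument, which is exactly the derivation \textsc{MapPC}, then \textsc{PcEff} combined with \textsc{LabTrans}, then \textsc{MapEffInv} to extract the witnessing effect $\varepsilon'$ with $\pc \flowsto \ell_{\varepsilon'}$. Your remarks about the existential (the $\varepsilon$ from \textsc{PcEff} need not itself be bounded) and about applying \textsc{LabTrans} on both sides match the paper's intent precisely.
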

\begin{proof}
\[
  \newsavebox{\LabelBox}
  \savebox{\LabelBox}{\small\textsc{PcEff} and \textsc{LabTrans}}
  \infer*[Right=MapEffInv]{
    \infer*[Right=\usebox{\LabelBox}]{
      \infer*[Right=MapPC]{
        \pcColor{t \ctxsep \pc \provesPc p \dashv t'}
      }{\pcColor{\LabeledType{\pc}{t} \ctxsep \pc \provesPc \MapE{\pc}(p) \dashv \LabeledType{\pc}{t'}}}
    }{\exists \varepsilon.\,\effColor{\LabeledType{\pc}{\effTypeTrans{t}} \provesEff \MapE{\pc}(p) \dashv \LabeledType{\pc}{\effTypeTrans{t'}} \ctxsep \varepsilon}}
  }{\exists\varepsilon'.\,\effColor{\effTypeTrans{t} \provesEff p \dashv \effTypeTrans{t'} \ctxsep \varepsilon'} \land \pc \flowsto \ell_{\varepsilon'}}
  \qedhere
\]
\end{proof}

\subsection{Computing PC Bounds via Galois Connections}
\label{sec:galo-conn-their}

Our example languages have even more structure: we know what effect a program has based on its \pc.
That is, we can build a function $\gamma$ from labels to effects such that if a program type-checks with program-counter label \pc then it type-checks with effect $\gamma(\pc)$.
Let us examine this in detail for the example language from Section~\ref{sec:state-and-exns}.
We used a function~$\ell_{-}$ on effects such that $\ell_{\{E\}} = \ExnLabel$, $\ell_{\{R\}} = \top$, and $\ell_{\{W\}} = \StateLabel$ and on arbitrary sets it acts as a lower bound.
Note that because of this $\ell_{-}$ must be \emph{antitone}: if $\varepsilon \subseteq \varepsilon'$, then $\ell_{\varepsilon'} \flowsto \ell_{\varepsilon}$.

Given a particular $\ell_{-}$, we can then define the function $\gamma$ from labels to effects as follows:$$\gamma(\ell) = \{R\} \cup \{W \mid \ell \flowsto \StateLabel\} \cup \{E \mid \ell \flowsto \ExnLabel\}$$

The functions~$\ell_{-}$ and~$\gamma$ form a \emph{Galois connection}.
Galois connections are well-known for their uses in abstract interpretation~\citep{CousotC77}.
However, we will see that they can be used here to strengthen Theorem~\ref{thm:pc-to-te} by providing a witness for the existential in Definition~\ref{defn:pc-bound-eff}.

Since $\ell$ is antitone, it seems like it cannot be part of a Galois connection, as Galois connections are defined on \emph{monotone} functions.
However, it turns out $\ell_{-}$ and $\gamma$ more precisely form an \emph{antitone} Galois connection~\citep[see~e.g.,][]{Galatos07}.
An antitone Galois connection between lattices~$A$ and~$B$ is equivalent to a monotone Galois connection between $A$ and $B^{\text{op}}$, the order dual of $B$.

\begin{lem}[Antitone Galois Connection]
  The functions~$\ell_{-}$ and~$\gamma$ form an antitone Galois connection.
  That is, for any label $\ell$ and effect $\varepsilon$, $\ell \flowsto \ell_{\varepsilon}$ if and only if $\varepsilon \subseteq \gamma(\ell)$.
\end{lem}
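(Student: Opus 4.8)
The plan is to prove the biconditional directly in both directions, using three ingredients: the explicit singleton values $\ell_{\{R\}} = \top$, $\ell_{\{W\}} = \StateLabel$, and $\ell_{\{E\}} = \ExnLabel$; the antitonicity of $\ell_{-}$ noted just above the lemma; and the fact that on a set $\varepsilon$ the label $\ell_\varepsilon$ is the \emph{greatest} lower bound of $\{\ell_{\{\e\}} \mid \e \in \varepsilon\}$. That meet exists in this instance because $\ExnLabel \flowsto \StateLabel$ (so $\StateLabel \MeetL \ExnLabel = \ExnLabel$) and $\top$ dominates every label. The key observation is that $\gamma$ only inspects the singleton values of $\ell_{-}$, so the biconditional is essentially the universal property of this meet read off through the three possible effects.

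For the forward direction I would assume $\ell \flowsto \ell_\varepsilon$ and fix an arbitrary $\e \in \varepsilon$, aiming to place $\e$ in $\gamma(\ell)$. Since $\{\e\} \subseteq \varepsilon$, antitonicity gives $\ell_\varepsilon \flowsto \ell_{\{\e\}}$, hence $\ell \flowsto \ell_{\{\e\}}$ by transitivity. A three-way case split on $\e$ closes it: $R \in \gamma(\ell)$ holds unconditionally; if $\e = W$ then $\ell \flowsto \ell_{\{W\}} = \StateLabel$ puts $W$ in $\gamma(\ell)$; and if $\e = E$ then $\ell \flowsto \ell_{\{E\}} = \ExnLabel$ puts $E$ in $\gamma(\ell)$. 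As $\e$ was arbitrary, $\varepsilon \subseteq \gamma(\ell)$.

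For the reverse direction I would assume $\varepsilon \subseteq \gamma(\ell)$ and show that $\ell$ is a lower bound of $\{\ell_{\{\e\}} \mid \e \in \varepsilon\}$, whence $\ell \flowsto \ell_\varepsilon$ by the meet property. Again by cases on each $\e \in \varepsilon \subseteq \gamma(\ell)$: $\ell \flowsto \top = \ell_{\{R\}}$ always; if $W \in \gamma(\ell)$ then by definition of $\gamma$ we have $\ell \flowsto \StateLabel = \ell_{\{W\}}$; and if $E \in \gamma(\ell)$ then $\ell \flowsto \ExnLabel = \ell_{\{E\}}$. The edge case $\varepsilon = \varnothing$ is immediate, since $\ell_\varnothing = \top$ and $\varnothing \subseteq \gamma(\ell)$ trivially.

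I expect the only real obstacle --- a small one --- to be making the status of $\ell_\varepsilon$ on nonsingleton effects precise: the biconditional genuinely forces $\ell_\varepsilon$ to be the \emph{greatest} lower bound of the singleton labels, not merely some lower bound, since an under-chosen $\ell_{\{W,E\}}$ would admit an $\ell$ with $\ell \flowsto \StateLabel$ and $\ell \flowsto \ExnLabel$ but $\ell \nflowsto \ell_{\{W,E\}}$, breaking the $(\Leftarrow)$ direction. Once that is pinned down (and it holds in this example), everything else is the routine three-case bookkeeping above.
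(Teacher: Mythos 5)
Your proposal is correct, and it is organized differently from the paper's proof, which is a bare exhaustive check: the paper simply examines all eight possible values of $\varepsilon$ and verifies the biconditional for each, relying on the concrete choices $\ell_\varnothing = \ell_{\{\REff\}} = \top$, $\ell_{\{\WEff\}} = \ell_{\{\REff,\WEff\}} = \StateLabel$, and $\ell_{\{\ExnEff\}} = \ell_{\{\REff,\ExnEff\}} = \ell_{\{\WEff,\ExnEff\}} = \ell_{\{\REff,\WEff,\ExnEff\}} = \ExnLabel$ (the minima, since $\ExnLabel \flowsto \StateLabel \flowsto \top$ makes these labels a chain). You instead factor the argument through two structural facts---antitonicity of $\ell_{-}$ for the forward direction, and the universal (greatest-lower-bound) property of $\ell_\varepsilon$ for the reverse direction---reducing everything to the three singleton cases. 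What your route buys is precisely the observation you flag at the end: the lemma is \emph{not} a consequence of the paper's informal stipulation that $\ell_\varepsilon$ ``acts as a lower bound'' on the component labels; it genuinely requires $\ell_\varepsilon$ to be the meet (up to order-equivalence), since an under-chosen $\ell_{\{\WEff,\ExnEff\}}$ strictly below $\ExnLabel$ would break the right-to-left direction. The paper's eight-case check sidesteps this by implicitly using the meet values, but your version makes the necessary-and-sufficient condition explicit and would generalize to richer effect sets without an exponential case split. What the paper's approach buys is brevity and no reliance on meets existing in general (it only ever needs the three concrete labels, which happen to form a chain because $\ExnLabel \flowsto \StateLabel$)---a point you also note correctly when justifying why the relevant greatest lower bounds exist despite $\Labs$ being only a join semilattice.
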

\begin{proof}
  By examining all of the (eight) possible values of $\varepsilon$.
\end{proof}
\noindent A similar construction and proof can be done for the example of Section~\ref{sec:tsni}, and for realistic languages like Jif~\citep{jif-release35}, Fabric~\citep{LiuAGM17}, and FlowCaml~\citep{PottierS02}.

With this structure, it becomes relatively easy to strengthen Theorem~\ref{thm:pc-to-te}.
Here we use $\ComposeE{\e}{\e'}$ as the general ordering relation on effects.
\begin{thm}
  If a language admits the rules in Figures~\ref{fig:framework-rules}, \ref{fig:pc-system-rules}, and \ref{fig:pc-rules},
  and there is a function $\gamma$ such that $\ell_{-}$~and~$\gamma$ form an antitone Galois connection,
  then whenever $\pcColor{t \ctxsep \pc \provesPc p \dashv t'}$, then $\effColor{\effTypeTrans{t}  \provesEff p \dashv \effTypeTrans{t'} \ctxsep \gamma(\pc)}$.
\end{thm}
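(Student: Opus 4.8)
The plan is to derive the result by chaining Theorem~\ref{thm:pc-to-te} with the antitone Galois connection and a single use of effect weakening; essentially all of the work has already been done, and what remains is to replace the existentially-quantified effect produced by Theorem~\ref{thm:pc-to-te} with the canonical witness $\gamma(\pc)$.

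First I would apply Theorem~\ref{thm:pc-to-te} to the hypothesis $\pcColor{t \ctxsep \pc \provesPc p \dashv t'}$. Since the language admits the rules of Figures~\ref{fig:framework-rules}, \ref{fig:pc-system-rules}, and~\ref{fig:pc-rules}, that theorem produces some effect $\varepsilon'$ with $\effColor{\effTypeTrans{t} \provesEff p \dashv \effTypeTrans{t'} \ctxsep \varepsilon'}$ and $\pc \flowsto \ell_{\varepsilon'}$. Second, I would invoke the hypothesis that $\ell_{-}$ and $\gamma$ form an antitone Galois connection --- that is, $\ell \flowsto \ell_{\varepsilon}$ iff $\ComposeE{\varepsilon}{\gamma(\ell)}$ for all $\ell$ and $\varepsilon$ --- instantiated at $\ell := \pc$ and $\varepsilon := \varepsilon'$. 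The left-to-right direction turns $\pc \flowsto \ell_{\varepsilon'}$ into $\ComposeE{\varepsilon'}{\gamma(\pc)}$, i.e.\ $\varepsilon'$ lies below $\gamma(\pc)$ in the effector's ordering. Third, since composing a single program is just that program unmodified and \textsc{Seq$_\leq$} permits overstating a program's effect along the effector ordering $\ComposeE{\e}{\e'}$, I would apply \textsc{Seq$_\leq$} with $n = 1$ to $\effColor{\effTypeTrans{t} \provesEff p \dashv \effTypeTrans{t'} \ctxsep \varepsilon'}$ together with $\ComposeE{\varepsilon'}{\gamma(\pc)}$; this yields exactly $\effColor{\effTypeTrans{t} \provesEff p \dashv \effTypeTrans{t'} \ctxsep \gamma(\pc)}$, which is the desired conclusion. (Note that $\gamma(\pc) \in \Eff$, so the weakened judgement is well-formed.)

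There is no substantive obstacle here: the argument is a three-step composition, two of whose steps (Theorem~\ref{thm:pc-to-te} and effect weakening via \textsc{Seq$_\leq$}) are already available in the framework. The one point that warrants care is the interface between the two halves --- Theorem~\ref{thm:pc-to-te} speaks about the \emph{label} $\ell_{\varepsilon'}$ of an unknown effect, whereas the conclusion demands a statement about a \emph{specific} effect $\gamma(\pc)$; the Galois connection is precisely the bridge that converts the former into the latter, and one must be sure it is applied in the direction $\pc \flowsto \ell_{\varepsilon'} \Rightarrow \ComposeE{\varepsilon'}{\gamma(\pc)}$ rather than its converse. The proof could equally be laid out as a short derivation tree in the style of the proof of Theorem~\ref{thm:pc-to-te}, but the prose above already captures all of its content.
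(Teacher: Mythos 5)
Your proposal matches the paper's proof exactly: the paper also applies Theorem~\ref{thm:pc-to-te} to obtain a witness $\varepsilon$ with $\pc \flowsto \ell_\varepsilon$, uses the antitone Galois connection (in the same direction) to conclude $\ComposeE{\varepsilon}{\gamma(\pc)}$, and then applies $\textsc{Seq}_\leq$ (unary composition, i.e.\ effect weakening) to re-type $p$ at effect $\gamma(\pc)$. No gaps; this is the same three-step argument.
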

\begin{proof}
  By Theorem~\ref{thm:pc-to-te}, we know that there is some $\varepsilon$ such that $\effColor{\effTypeTrans{t} \provesEff p \dashv \effTypeTrans{t'} \ctxsep \varepsilon}$ and $\pc \flowsto \ell_\varepsilon$.
  \mbox{Because $\pc \flowsto \ell_\varepsilon$} and $\ell_{-}$ and $\gamma$ form an antitone Galois connection, $\ComposeE{\varepsilon}{\gamma(\pc)}$.
  This means we can further apply $\textsc{Seq}_\leq$ to get $\effColor{\effTypeTrans{t} \provesEff p \dashv \effTypeTrans{t'} \ctxsep \gamma(\pc)}$.
\end{proof}

\section{Related Work}
\label{sec:related}

This work pulls mostly from two distinct areas: static IFC and the theory of effects.
We discuss related work from each of these areas in turn.

\subsection{Static Information-Flow Control}
\label{sec:related-ifc}

Noninterference, originally introduced by \citet{GoguenM82}, is the foundational security property of information-flow control systems.
While originally proposed to avoid confidentiality leaks, noninterference can apply to any security policy expressible as a preorder of labels.
Since \citeposessive{VolpanoSI96} seminal work enforcing noninterference with a type system,
numerous others have used type systems to guarantee noninterference for functional and imperative languages, with and without effects,
where security policies represent confidentiality, integrity, or even distributed consistency~\citep{HeintzeR98,AbadiBJHR99,PottierS02,ZdancewicM02,SabelfeldM03,TsaiRH07,RafnssonS14,MilanoM18,VassenaRBW18}.

Termination is one channel many type-based enforcement mechanisms ignore~\citep[e.g.,][]{VolpanoSI96,PottierS02,jif-release35,LiuAGM17}.
As \citet{VolpanoS97} showed and we observed in Section~\ref{sec:tsni}, enforcing termination-sensitive noninterference with a type system is possible, but highly restrictive.
Unfortunately, \citet{AskarovHSS08} argue that termination channels can leak an arbitrary amount of data, making it dangerous to ignore them.
We hope that our framework's ability to unify possible nontermination with other effects will connect to
recent work on precisely specifying and constraining leakage through termination in more permissive languages~\citep{MooreAC12,BayA20}.

Prior work uses a wide variety of techniques to prove noninterference.
The first proofs of static noninterference~\citep{VolpanoSI96,VolpanoS97,HeintzeR98} relied on structural induction with careful manual reasoning.
\citet{PottierS02} used bracketed pairs of terms to simulate two program executions with different high inputs and compare the outputs.
This technique makes combining state and exceptions tractable, but provides no means to reason about termination.
Other proofs rely on semantics using partial equivalence relations~\citep{AbadiBJHR99,TseZ04,SabelfeldS01} or logical relations~\citep{ShikumaI08,RajaniG18}.
The complexity of all of these approaches lies in reasoning about effects, demonstrating the value of noninterference half-off.

We base all of the example languages in this paper on DCC~\citep{AbadiBJHR99}.
DCC was originally designed to explore dependency, with information flow as an interesting special case.
Interestingly, DCC was not given an operational semantics or a noninterference theorem in the original paper.
Instead, \citet{AbadiBJHR99} described a domain-theoretic semantics, and used it to prove a semantic security theorem closely related to noninterference.
\citet{TseZ04} later developed an operational semantics for DCC, and claimed to prove a noninterference theorem analogous to the one we used in Section~\ref{sec:pure-ifc-lang} by translating DCC into System~F and using parametric reasoning.
\citet{ShikumaI08}, however, found a flaw in \citeauthor{TseZ04}'s proof, which \citet{BowmanA15} later repaired.
\citet{AlgehedB19} extended and simplified the proof technique, leading to a verified version of the proof written in Agda.

DCC is the paradigmatic \emph{coarse-grained} IFC language, a style that is characterized by labeling and unlabeling data.
The various other results mentioned above all employ \emph{fine-grained} IFC systems, where each type includes a label.
Though the two approaches may appear substantially different, \citet{RajaniG18} proved them equivalent.

Both DCC\textsubscript{pc}~\citep{TseZ04} and the Sealing Calculus~\citep{ShikumaI08} include \emph{protection context labels} that look very similar to our \pc~labels.
Both languages, however, are pure, and the labels serve only to securely include a more permissive typing rule for \Unlabel.
While our examples could also employ this technique, it would increase the complexity of the type systems,
particularly the type-and-effect systems which would need to include both protection context labels and effects.

Other work has implemented coarse-grained IFC as monadic libraries, mostly in Haskell~\citep{VassenaRBW18,AlgehedR17,StefanRMM11,RussoCH08,TsaiRH07,Arden17}.
Both \citet{AlgehedR17} and MAC~\citep{VassenaRBW18}, moreover, handle effectful computation via monadic reasoning.
\citet{AlgehedR17} in particular advocate building noninterfering pure languages, and using monads to define effects on top of them.
They do not, however, explore the connections to \pc systems.
MAC~\cite{VassenaRBW18} combines the monads for effects and the monad for labels, and therefore still requires a \pc label.

\subsection{The Theory of Effects}
\label{sec:related-eff}

Type-and-effect systems originated as a program analysis technique~\citep{NielsonN99,Lucassen88,Nielson96}.
This technique allowed compilers to leverage the type system of their source language to track other properties of programs, enabling optimizations like dead-code elimination that may behave differently depending on effects.

\citet{WadlerT98} gave type-and-effect systems semantics via monads by recognizing the correspondence between type-and-effect systems and \citeposessive[Moggi91]{Moggi89} notions of computation.
This result gave rise to a long line of work describing generalizations of monads which could be used to give semantics to as many type-and-effect systems as possible.
The most relevant for this work are \citeauthor{WadlerT98}'s (and \citeposessive{OrchardPM14}) indexed monads, which work on a lattice of effects~\citep{WadlerT98} and \citeposessive{Tate13} productors, which work on an arbitrary effector.

\section{Conclusion}
\label{sec:conclusion}

We have developed a framework that gives semantics to program-counter labels based on the semantics of producer effects.
This choice supports an abstract perspective, allowing us to reason about a language \emph{feature} without being tied to a specific language.
The Noninterference Half-Off Theorem (Theorem~\ref{thm:niho}) thus proved noninterference for a large swath of languages---any language admitting the simple rules in Figures~\ref{fig:framework-rules} and~\ref{fig:pc-system-rules}.
Moreover, the proof technique the theorem suggests provides simple proofs of noninterference for important effects: state, exceptions, and nontermination.
It even applies to languages with multiple types of effects, as we saw in Section~\ref{sec:state-and-exns}.

By viewing possible nontermination as an effect, we both achieved a half-off proof of termination-sensitive noninterference and unified the treatment of termination sensitivity with that of other effects.
Previously, these had only been considered separately.
Hopefully, this new understanding of termination sensitivity will allow us to build better termination-sensitive type systems.

We also demonstrated the power of our framework by using it to formalize the folklore belief that the \pc is a lower bound on the effects in secure programs.
Surprisingly, such \pc-boundedness is \emph{not} a theorem of our semantics.
It is, however, a theorem of a slightly expanded version of our semantics.
Moreover, this extension is suggestive of a categorical construct called a \emph{distributive law}.
Exploring this connection would be interesting future work.

In fact, a categorical perspective infuses this entire work.
The semantics of effects are usually given categorically, so perhaps this is unsurprising.
Further formalizing this work categorically would require a categorical models of noninterference,
perhaps developing a connection between our semantics and \citeposessive{Kavvos19} semantics of pure noninterference given via modal types.

We believe there are two other important directions for future work.
First, our framework could influence the design of, and facilitate noninterference proofs for, a language with secure algebraic effect handlers.
Algebraic effect handlers~\citep{Leijen16,Pretnar10,PlotkinP03,BauerP15,PlotkinP09} allow programmers to specify their own effectful operations while retaining the fundamental properties of a pure language, hopefully making secure programming considerably easier.
Second, one may be able to expand our framework to other security guarantees by replacing preservation of equivalences in Figure~\ref{fig:framework-rules} with preservation of arbitrary \emph{hyperproperties}~\citep{ClarksonS10}.
Such an extension might have important applications in the design of secure programming languages in general.

The results we have developed in this paper and the future work we suggest all require the generality of our semantic framework.
We hope that future work also adopts abstract perspectives to similarly prove highly general results.

\section*{Acknowledgments}
\label{sec:acks}

This work originated from ideas we had while designing the First Order Logic for Flow-Limited Authorization
along with Pedro de Amorim, Owen Arden, and Ross Tate.
Deepak Garg and Andrew C. Myers helped us focus the work and,
along with Deian Stefan, provided guidance on how to explain our results to a broader audience.
Maximilian Algehed pointed us to important related work.
Co\c{s}ku Acay, Jonathan DiLorenzo, Matthew Milano, and Isaac Sheff helped with editing.
Finally, our shepherd, William J. Bowman, and our anonymous reviewers provided tremendously insightful comments and helpful suggestions.

This project was supported in part by a fellowship awarded through the National Defense Science and Engineering Graduate (NDSEG) Fellowship Program, sponsored by the Air Force Research Laboratory (AFRL), the Office of Naval Research (ONR), and the Army Research Office (ARO).
Any opinions, findings, conclusions, or recommendations expressed here are those of the authors and may not reflect those of these sponsors.

\bibliography{bibtex/references}

\appendix
\section{Programs with Multiple Inputs}
\label{sec:multi-input}

In Section~\ref{sec:prod-effects-ifc}, we developed a theory of effects in languages with information-flow-control types.
This was designed to be extremely general.
However, we developed our theory for single-input, single-output programs.
In this appendix, we consider expanding our theory to multiple-input programs, such as the simply-typed $\lambda$-calculus.

\citet[see Section 9]{Tate13}, mentions that a productor can be viewed as a 2-functor from an effector to the category of categories.
Thus, each effect~$\varepsilon$ is mapped to a functor~$\langle\CapturedType{\varepsilon}{-}, \MapE{\varepsilon}\rangle$, and each inequality $\ComposeE{\varepsilon_1, \ldots, \varepsilon_n}{\varepsilon}$ is mapped to a natural transformation $\CapturedType{\varepsilon_1}{-} \circ \cdots \circ \CapturedType{\varepsilon_n}{-} \Rightarrow \CapturedType{\varepsilon}{-}$.
He then suggests that we can move to multiple-input languages by defining changing the base to premonoidal categories, defining a \emph{strong} productor.

To understand this, let us define a premonoidal category~$\mathcal{C}$~\citep{Jeffrey97}:
\begin{defn}[Premonoidal Category]
A \emph{premonoidal category} is a category $\mathcal{C}$ along with
\begin{itemize}
\item A binary operation on objects, written $- \otimes -$
\item For every object $\Gamma$, two functors $\Gamma \ltimes -$ and $- \rtimes \Gamma$, such that their action on objects is $- \otimes -$
\end{itemize}
\end{defn}
This is enough to define a notion of propagating context.
We have suggestively written objects of our category as $\Gamma$, and we can define $\Gamma_1, \Gamma_2$ as $\Gamma_1 \otimes \Gamma_2$.
Then, for any morphism $\rho : \Gamma_1 \to \Gamma_2$, we can think of $\rho$ as a program that operates in an environment $\Gamma_1$, and then finishes having changed the environment to $\Gamma_2$.
Then, $\Gamma \ltimes \rho : \Gamma \otimes \Gamma_1 \to \Gamma \otimes \Gamma_2$, so we have propagated the context $\Gamma$.

We can put this in perhaps-more-familiar programming-languages terms.
A premonoidal category is one where the following rules are admissible:
\begin{mathpar}
  \infer{\Gamma_1 \proves \rho \dashv \Gamma_2}{\Gamma_1, \Gamma \proves \rho \rtimes \Gamma \dashv \Gamma_2, \Gamma} \and
  \infer{\Gamma_1 \proves \rho \dashv \Gamma_2}{\Gamma, \Gamma_1 \proves \Gamma \ltimes \rho \dashv \Gamma, \Gamma_2}
\end{mathpar}
Here, we write $\Gamma_1 \proves \rho \dashv \Gamma_2$ as the typing judgment representing a program $\rho : \Gamma_1 \to \Gamma_2$.

Then, a strong productor is simply a productor for which every functor $\CapturedType{\varepsilon}{-}$ is a premonoidal functor, that is $\CapturedType{\varepsilon}{\Gamma_1 \otimes \Gamma_2} = \CapturedType{\varepsilon}{\Gamma_1} \otimes \CapturedType{\varepsilon}{\Gamma_2}$.
In programming-language terms, the following rules are admissible:
\begin{mathpar}
  \infer{\Gamma_1 \proves \rho \dashv \Gamma_2}{\MapE{\varepsilon}(\rho \rtimes \Gamma) \progEq{(\CapturedType{\varepsilon}{\Gamma_1}, \CapturedType{\varepsilon}{\Gamma}),\,(\CapturedType{\varepsilon}{\Gamma_2}, \CapturedType{\varepsilon}{\Gamma})} \MapE{\varepsilon}(\rho) \rtimes \CapturedType{\varepsilon}{\Gamma}} \and
  \infer{\Gamma_1 \proves \rho \dashv \Gamma_2}{\MapE{\varepsilon}(\Gamma \ltimes \rho) \progEq{(\CapturedType{\varepsilon}{\Gamma}, \CapturedType{\varepsilon}{\Gamma_1}),\,(\CapturedType{\varepsilon}{\Gamma}, \CapturedType{\varepsilon}{\Gamma_2})} \CapturedType{\varepsilon}{\Gamma} \ltimes \MapE{\varepsilon}(\rho)}
\end{mathpar}

For a linear language, this is enough.
But we often want to deal with non-linear languages with the following notion of sequencing:
$$\infer{
  \Gamma \proves p_1 : t_1 \ctxsep \varepsilon_1 \\
  \Gamma, x_1 \ty t_1 \proves p_2 : t_2 \ctxsep \varepsilon_2 \\
  \cdots \\
  \Gamma, x_1 \ty t_1, \ldots, x_{n-1} \ty t_{n-1} \proves p_n : t_{n} \ctxsep \varepsilon_n \\\\
  \ComposeE{\varepsilon_1, \ldots, \varepsilon_n}{\varepsilon}
}{\Gamma \proves \LetIn*{x_1}{p_1}{\LetIn*{x_2}{p_2}{\cdots\LetIn*{x_{n-1}}{p_{n-1}}{p_n}}} : t_n \ctxsep \varepsilon}$$
In order to give semantics to this rule, we need an extra few assumptions.
In particular, we need a doubling natural transformation $\Delta_{\Gamma} : \Gamma \to \Gamma \otimes \Gamma$, which is preserved by $\MapE{\varepsilon}$.
That is, we need the following rules to be admissible:
\begin{mathpar}
  \infer{ }{\Gamma \proves \Delta_\Gamma \dashv \Gamma, \Gamma} \and
  \infer{ }{\MapE{\varepsilon}(\Delta_\Gamma) \progEq{\CapturedType{\varepsilon}{\Gamma},\,(\CapturedType{\varepsilon}{\Gamma},\CapturedType{\varepsilon}{\Gamma})} \Delta_{\CapturedType{\varepsilon}{\Gamma}}}
\end{mathpar}

This allows us to prove the following:
\begin{thm}[Semantics of Effectful Composition with Let]
  \label{thm:sem-eff-comp-let}
$$\infer{
  \Gamma \proves p_1 : t_1 \ctxsep \varepsilon_1 \\
  \cdots \\
  \Gamma, x_1 \ty t_1, \ldots, x_{n-1} \ty t_{n-1} \proves p_n : t_n \ctxsep \varepsilon_n \\
  \ComposeE{\varepsilon_1, \ldots, \varepsilon_n}{\varepsilon}
}{\Captured{\varepsilon}{\LetIn{x_1}{p_1}{\ldots\LetIn{x_{n-1}}{p_{n-1}}{p_n}}}
  \progEq{\Gamma,\,(\Gamma,x_1 \ty t_1, \ldots, x_n \ty t_n)}
  {\begin{array}{@{}l@{\;{\seq}\;}l@{}}
    \Delta_{\Gamma} & \Gamma \ltimes \Captured{\varepsilon_1}{p_1} \seq \JoinE{}{\varepsilon_1} \rtimes \CapturedType{\varepsilon}{x_1 \ty t_1} \\
    & \MapE{\varepsilon_1}
    \begin{array}[t]{@{}l@{\;{\seq}\;}l@{}}
      (\Delta_{\Gamma,x_1 \ty t_1} & \cdots \\
      & \MapE{\varepsilon_n}(\Captured{\varepsilon_n}{p_n}))
    \end{array}\\
    & \JoinE{\varepsilon_1, \ldots \varepsilon_n}{\varepsilon}
  \end{array}}}$$
\end{thm}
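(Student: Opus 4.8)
The plan is to prove this by induction on $n$, reading the claimed equation as the multi-input generalization of the single-input rule \textsc{CapturedSeq} from Figure~\ref{fig:framework-rules}: once the nested let-binding is unfolded into the premonoidal primitives $\Delta$, $\ltimes$, and $\rtimes$, the equation is exactly what \textsc{CapturedSeq}, functoriality of each $\MapE{\varepsilon}$, and the productor's associativity coherence for $\JoinName$ deliver. Every step will be a rewrite in the equational theory of pure programs, justified either by a rule of Figure~\ref{fig:framework-rules}, by one of the strong-productor coherences of this appendix---the $\MapE{\varepsilon}(\rho \rtimes \Gamma)$ versus $\MapE{\varepsilon}(\rho) \rtimes \CapturedType{\varepsilon}{\Gamma}$ laws, their $\ltimes$-mirrors, and $\MapE{\varepsilon}(\Delta_\Gamma)$ versus $\Delta_{\CapturedType{\varepsilon}{\Gamma}}$---or by the effector axioms of \citet[Section 5]{Tate13}. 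For the base case $n = 1$ the term reduces to $p_1$ and the right-hand side collapses to a composite of $\Delta_\Gamma$, a context-propagated copy of $\Captured{\varepsilon_1}{p_1}$, the nullary join $\JoinE{}{\varepsilon_1}$, and $\JoinE{\varepsilon_1}{\varepsilon}$; I would discharge it with the unit laws of the premonoidal category together with the identity laws of the productor, so that $\MapE{\varepsilon_1}$ applied to an identity is again an identity.

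For the inductive step I would isolate the outermost binder. Writing $r$ for its body---the let-composition of $p_2, \ldots, p_n$---I would use the identity and associativity structure of the effector to obtain an intermediate effect $\varepsilon'$ with $\ComposeE{\varepsilon_2, \ldots, \varepsilon_n}{\varepsilon'}$ and $\ComposeE{\varepsilon_1, \varepsilon'}{\varepsilon}$, so that $\Gamma, x_1 \ty t_1 \proves r \dashv (\Gamma, x_1 \ty t_1, \ldots, x_n \ty t_n) \ctxsep \varepsilon'$. Semantically, $\LetIn*{x_1}{p_1}{r}$ is the two-stage composite ``duplicate the environment with $\Delta_\Gamma$; run $p_1$ on one copy while propagating the other via $\Gamma \ltimes (-)$; run $r$ on the extended environment $\Gamma, x_1 \ty t_1$''. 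Applying \textsc{CapturedSeq} specialized to two stages then rewrites $\Captured{\varepsilon}{\LetIn*{x_1}{p_1}{r}}$ to $\Delta_\Gamma \seq (\Gamma \ltimes \Captured{\varepsilon_1}{p_1}) \seq s \seq \MapE{\varepsilon_1}(\Captured{\varepsilon'}{r}) \seq \JoinE{\varepsilon_1, \varepsilon'}{\varepsilon}$, where $s$ is the ``capture-shuffling'' morphism $\JoinE{}{\varepsilon_1} \rtimes \CapturedType{\varepsilon}{x_1 \ty t_1}$ that pushes the $\CapturedType{\varepsilon_1}{-}$ wrapper onto the propagated binding; checking that $s$ is well-typed and behaves as claimed uses the $\rtimes$ and $\ltimes$ coherences.

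I would then substitute the induction hypothesis for $\Captured{\varepsilon'}{r}$, distribute $\MapE{\varepsilon_1}$ over the resulting composite using functoriality of $\MapE{\varepsilon_1}$, rewrite each inner $\MapE{\varepsilon_1}(\Delta_{\Gamma, x_1 \ty t_1, \ldots})$ to the matching $\Delta_{\CapturedType{\varepsilon_1}{\Gamma, x_1 \ty t_1, \ldots}}$ via the $\Delta$-coherence so the doublings align with those in the target, and finally collapse $\MapE{\varepsilon_1}(\JoinE{\varepsilon_2, \ldots, \varepsilon_n}{\varepsilon'}) \seq \JoinE{\varepsilon_1, \varepsilon'}{\varepsilon}$ into $\JoinE{\varepsilon_1, \ldots, \varepsilon_n}{\varepsilon}$ using the productor's associativity coherence for $\JoinName$, modulo the canonical reassociation of the nested captures $\CapturedType{\varepsilon_1}{\cdots \CapturedType{\varepsilon_n}{-} \cdots}$. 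What remains is precisely the right-hand side of the theorem.

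I expect the main obstacle to be bookkeeping in the binary step rather than anything conceptual: threading the premonoidal environment correctly through $\ltimes$ and $\rtimes$, checking that the heavily nested capture types line up with what the effector's associativity and $\JoinName$ demand, and confirming that $\Delta_\Gamma$, $\JoinE{}{\varepsilon_1}$, and the remaining structural morphisms land at exactly the types named in the statement. Each individual rewrite is routine given the coherences stated here, but there are many of them, the types are large, and the argument stays tractable only with a disciplined choice of where to insert each associativity isomorphism. A smaller secondary point is confirming that the effector structure really does supply the decomposition $\varepsilon'$ with both required compositions, and that the base case's appeal to the unit laws is legitimate.
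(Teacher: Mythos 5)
Your toolkit is the right one---unfold the let into $\Delta$, $\ltimes$, $\rtimes$ and then appeal to \textsc{CapturedSeq}, functoriality of $\MapE{\varepsilon}$, the strong-productor strength laws, and preservation of $\Delta$---and since the paper states this theorem without spelling out a proof, that is indeed all one has to work with. The problem is the shape of your induction. Your inductive step requires an intermediate effect $\varepsilon'$ with $\ComposeE{\varepsilon_2,\ldots,\varepsilon_n}{\varepsilon'}$ and $\ComposeE{\varepsilon_1,\varepsilon'}{\varepsilon}$, together with the nested-join coherence $\MapE{\varepsilon_1}(\JoinE{\varepsilon_2,\ldots,\varepsilon_n}{\varepsilon'}) \seq \JoinE{\varepsilon_1,\varepsilon'}{\varepsilon} \progEq{}{} \JoinE{\varepsilon_1,\ldots,\varepsilon_n}{\varepsilon}$. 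Neither is supplied by the framework in which the theorem is stated: an effector is an $n$-ary relation satisfying identity and associativity only in the composition direction, so a primitive fact $\ComposeE{\varepsilon_1,\ldots,\varepsilon_n}{\varepsilon}$ need not factor through any binary bracketing at all (e.g., a four-element effector whose only non-singleton instance is one ternary composition satisfies the laws and admits no $\varepsilon'$), and the partial-composition settings of Section~\ref{sec:bett-state-except} are precisely where such factorizations are suspect. This is why \textsc{Seq}$_{\leq}$, \textsc{Join}, and \textsc{CapturedSeq} are all stated for arbitrary arity: the framework deliberately never reduces $n$-ary sequencing to binary. So what you dismiss as a ``smaller secondary point'' is in fact the crux, and as proposed the inductive step does not go through at the stated level of generality; it could only be rescued by additionally assuming an effectoid-style binary/nullary presentation in which the $n$-ary relation is generated by bracketings, which the theorem does not assume.

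The robust route is to drop the induction on binders and use the $n$-ary machinery once: view the interpretation of the nested let as the $n$-stage sequential composite whose $i$-th stage is $\Delta_{\Gamma, x_1 \ty t_1, \ldots, x_{i-1} \ty t_{i-1}} \seq ((\Gamma, x_1 \ty t_1, \ldots, x_{i-1} \ty t_{i-1}) \ltimes p_i)$ carrying effect $\varepsilon_i$, apply \textsc{CapturedSeq} to that composite with the given $\ComposeE{\varepsilon_1,\ldots,\varepsilon_n}{\varepsilon}$ so that the single join $\JoinE{\varepsilon_1,\ldots,\varepsilon_n}{\varepsilon}$ appears directly, and then rewrite each captured stage into the displayed form $\Delta \seq (\cdot \ltimes \Captured{\varepsilon_i}{p_i}) \seq (\JoinE{}{\varepsilon_i} \rtimes \CapturedType{\varepsilon_i}{t_i})$ using the $\ltimes$/$\rtimes$ coherences, the preservation of $\Delta$ by $\MapE{\varepsilon_i}$, and the nullary join as the unit on the propagated environment, pushing these rewrites under the $\MapE{}$'s by functoriality. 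No intermediate effects and no nested-join law are then needed. Finally, be careful with your base case: for $n=1$ the right-hand side still duplicates and retains the environment (it lands in $\CapturedType{\varepsilon}{\Gamma, x_1 \ty t_1}$), so it is not a unit-law collapse onto the bare $\Captured{\varepsilon_1}{p_1}$ but an instance of the same stage-rewriting under the environment-retaining reading of the let-interpretation.
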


In order to extend to information-flow-control typed languages, we need only assume that labeling is a \emph{strong premonoidal} functor.
That is, the following rules are admissible:
\begin{mathpar}
  \LabeledType{\ell}{\Gamma_1, \Gamma_2} = \LabeledType{\ell}{\Gamma_1}, \LabeledType{\ell}{\Gamma_2} \and
  \infer{ }{\Gamma_1, \LabeledType{\ell}{\Gamma_2} \proves \textsf{str}_{\ell, \Gamma_1, \Gamma_2} \dashv \LabeledType{\ell}{\Gamma_1, \Gamma_2}}   
\end{mathpar}
Then, extending Theorem~\ref{thm:sem-eff-comp-let} to \Unlabel is not difficult.

\section{Full  Rules for  Example Languages}
\label{sec:full-pure-typing}

\subsection{DCC}

Here we present the full type system for our fragment of DCC from Section~\ref{sec:pure-ifc-lang}

{\small
\begin{mathpar}
  \infer*[left=Var]{\Gamma(x) = \tau}{\Gamma \proves x : \tau}
  \and
  \infer*[left=Unit]{ }{\Gamma \proves () : \Unit}
  \\
  \infer*[left=Pair]{
    \Gamma \proves e_1 : \tau_1 \\
    \Gamma \proves e_2 : \tau_2
  }{\Gamma \proves \Pair{e_1}{e_2} : \tau_1 \times \tau_2}
  \and
  \infer*[left=Proj]{\Gamma \proves e : \tau_1 \times \tau_2}{\Gamma \proves \Proj{i}(e) : \tau_i}
  \\
  \infer*[left=InL]{
    \Gamma \proves e : \tau_1 \\
  }{\Gamma \proves \Inl(e) : \tau_1 + \tau_2}
  \and
  \infer*[left=InR]{
    \Gamma \proves e : \tau_2
  }{\Gamma \proves \Inr(e) : \tau_1 + \tau_2}
  \and
  \infer*[left=Match]{
    \Gamma \proves e : \tau_1 + \tau_2 \\\\
    \Gamma, x \ty \tau_1 \proves e_1 : \tau \\
    \Gamma, y \ty \tau_2 \proves e_2 : \tau
  }{\Gamma \proves \MatchSum*{e}{x}{e_1}{y}{e_2} : \tau}
  \\
  \infer*[left=Lam]{
    \Gamma, x \ty \tau_1 \proves e : \tau_2 \\
  }{\Gamma \proves \lambda x \ty \tau_1.\, e : \tau_1 \to \tau_2}
  \and
  \infer*[left=App]{
    \Gamma \proves f : \tau_1 \to \tau_2 \\
    \Gamma \proves e : \tau_1
  }{\Gamma \proves f~e : \tau_2}
  \\
  \infer*[left=Label]{
    \Gamma \proves e : \tau
  }{\Gamma \proves \LabeledProgram{\ell}{e} : \LabeledType{\ell}{\tau}}
  \and
  \infer*[left=Unlabel]{
    \Gamma \proves e_1 : \LabeledType{\ell}{\tau_1} \\
    \Gamma, x \ty \tau_1 \proves e_2 : \tau_2 \\
    \ell \prot \tau_2
  }{\Gamma \proves \UnlabelAs*{e_1}{x}{e_2} : \tau_2}
\end{mathpar}}

\paragraph{Protection Rules}
{\small
  \begin{mathpar}
    \infer{\ell \flowsto \ell'}{\ell \prot \LabeledType{\ell'}{\tau}}
    \and
    \infer{\ell \prot \tau}{\ell \prot \LabeledType{\ell'}{\tau}}
    \and
    \infer{\ell \prot \tau_1 \\ \ell \prot \tau_2}{\ell \prot \tau_1 \times \tau_2}
    \and
    \infer{\ell \prot \tau_2}{\ell \prot \tau_1 \to \tau_2}
\end{mathpar}}

\paragraph{Operational Semantics}
{\small
\begin{mathpar}
  \infer{e \stepsone e'}{E[e] \stepsone E[e']} \and
  (\lambda x \ty \tau.\,e)~v \stepsone \subst{e}{x}{v} \and
  \Proj{i}(\Pair{v_1}{v_2}) \stepsone v_i \and
  \MatchSum*{\Inl(v)}{x}{e_1}{y}{e_2} \stepsone \subst{e_1}{x}{v} \and
  \MatchSum*{\Inr(v)}{x}{e_1}{y}{e_2} \stepsone \subst{e_2}{y}{v} \and
  \UnlabelAs*{\LabeledProgram{\ell}{v}}{x}{e} \stepsone \subst{e}{x}{v}
\end{mathpar}}

\subsection{DCC with State and Exceptions}

\paragraph{PC Type System}
{\small
\begin{mathpar}
  \infer*[left=Var]{\Gamma(x) = \tau}{\Gamma \ctxsep \pc \proves x : \tau}
  \and
  \infer*[left=Unit]{ }{\Gamma \ctxsep \pc \proves () : \Unit}
  \\
  \infer*[left=Pair]{
    \Gamma \ctxsep \pc \proves e_1 : \tau_1 \\
    \Gamma \ctxsep \pc \proves e_2 : \tau_2
  }{\Gamma \ctxsep \pc \proves \Pair{e_1}{e_2} : \tau_1 \times \tau_2}
  \and
  \infer*[left=Proj]{\Gamma \ctxsep \pc \proves e : \tau_1 \times \tau_2}{\Gamma \ctxsep \pc \proves \Proj{i}(e) : \tau_i}
  \\
  \infer*[left=InL]{
    \Gamma \ctxsep \pc \proves e : \tau_1 \\
  }{\Gamma \ctxsep \pc \proves \Inl(e) : \tau_1 + \tau_2}
  \and
  \infer*[left=InR]{
    \Gamma \ctxsep \pc \proves e : \tau_2
  }{\Gamma \ctxsep \pc \proves \Inr(e) : \tau_1 + \tau_2}
  \and
  \infer*[left=Match]{
    \Gamma \ctxsep \pc \proves e : \tau_1 + \tau_2 \\\\
    \Gamma, x \ty \tau_1 \ctxsep \pc \proves e_1 : \tau \\
    \Gamma, y \ty \tau_2 \ctxsep \pc \proves e_2 : \tau
  }{\Gamma \ctxsep \pc \proves \MatchSum*{e}{x}{e_1}{y}{e_2} : \tau}
  \\
  \infer*[left=Lam]{
    \Gamma, x \ty \tau_1 \ctxsep \pc_1 \proves e : \tau_2 \\
  }{\Gamma \ctxsep \pc_2 \proves \lambda x \ty \tau_1.\, e : \tau_1 \pcto[\pc_1] \tau_2}
  \and
  \infer*[left=App]{
    \Gamma \ctxsep \pc_1 \proves f : \tau_1 \pcto[\pc_2] \tau_2 \\
    \Gamma \ctxsep \pc_1 \proves e : \tau_1\\
    \pc_1 \flowsto \pc_2
  }{\Gamma \ctxsep \pc_1 \proves f~e : \tau_2}
  \\
  \infer*[left=Label]{
    \Gamma \ctxsep \pc \proves e : \tau
  }{\Gamma \ctxsep \pc \proves \LabeledProgram{\ell}{e} : \LabeledType{\ell}{\tau}}
  \and
  \infer*[left=Unlabel]{
    \Gamma \ctxsep \pc \proves e_1 : \LabeledType{\ell}{\tau_1} \\
    \Gamma, x \ty \tau_1 \ctxsep \pc \JoinL \ell \proves e_2 : \tau_2 \\
    \ell \prot \tau_2
  }{\Gamma \ctxsep \pc \proves \UnlabelAs*{e_1}{x}{e_2} : \tau_2}
  \\
  \infer*[left=Variance]{
    \Gamma \ctxsep \pc' \proves e : \tau \\
    \pc \flowsto \pc' \\
  }{\Gamma \ctxsep \pc \proves e : \tau}
  \\
  \infer*[left=Read]{ }{\Gamma \ctxsep \pc \proves \Read : \sigma}
  \and
  \infer*[left=Write]{
    \Gamma \ctxsep \pc \proves e : \sigma\\
    \pc \flowsto \StateLabel
  }{\Gamma \proves \Write(e) : \Unit}
  \\
  \infer*[left=Throw]{
    \pc \flowsto \ExnLabel
  }{\Gamma \ctxsep \pc \proves \Throw : \tau}
  \and
  \infer*[left=TryCatch]{
    \Gamma \ctxsep \pc \proves e_1 : \tau\\
    \Gamma \ctxsep \pc \proves e_2 : \tau\\
    \ExnLabel \prot \tau
  }{\Gamma \ctxsep \pc \proves \TryCatch{e_1}{e_2} : \tau}
\end{mathpar}}

\paragraph{PC Protection Rules}
{\small
\begin{mathpar}
  \infer{\ell \flowsto \ell'}{\ell \prot \LabeledType{\ell'}{\tau}}
  \and
  \infer{\ell \prot \tau}{\ell \prot \LabeledType{\ell'}{\tau}}
  \and
  \infer{\ell \prot \tau_1 \\ \ell \prot \tau_2}{\ell \prot \tau_1 \times \tau_2}
  \and
  \infer{\ell \prot \tau_2 \\ \ell \flowsto \pc}{\ell \prot \tau_1 \pcto \tau_2}
\end{mathpar}}

\paragraph{Type-and-Effect Typing Rules}
{\small
\begin{mathpar}
  \infer*[left=Var]{\Gamma(x) = \tau}{\Gamma \proves x : \tau \ctxsep \varnothing}
  \and
  \infer*[left=Unit]{ }{\Gamma \proves () : \Unit \ctxsep \varnothing}
  \\
  \infer*[left=Pair]{
    \Gamma \proves e_1 : \tau_1 \ctxsep \varepsilon_1\\
    \Gamma \proves e_2 : \tau_2 \ctxsep \varepsilon_2
  }{\Gamma \proves \Pair{e_1}{e_2} : \tau_1 \times \tau_2 \ctxsep \varepsilon_1 \cup \varepsilon_2}
  \and
  \infer*[left=Proj]{\Gamma \proves e : \tau_1 \times \tau_2 \ctxsep \varepsilon}{\Gamma \ctxsep \pc \proves \Proj{i}(e) : \tau_i \ctxsep \varepsilon}
  \\
  \infer*[left=InL]{
    \Gamma \proves e : \tau_1  \ctxsep \varepsilon\\
  }{\Gamma \proves \Inl(e) : \tau_1 + \tau_2 \ctxsep \varepsilon}
  \and
  \infer*[left=InR]{
    \Gamma \proves e : \tau_2  \ctxsep \varepsilon
  }{\Gamma \proves \Inr(e) : \tau_1 + \tau_2\ctxsep \varepsilon}
  \and
  \infer*[left=Match]{
    \Gamma \proves e : \tau_1 + \tau_2 \ctxsep \varepsilon_1\\\\
    \Gamma, x \ty \tau_1 \proves e_1 : \tau  \ctxsep \varepsilon_2\\
    \Gamma, y \ty \tau_2 \proves e_2 : \tau \ctxsep \varepsilon_2
  }{\Gamma \proves \MatchSum*{e}{x}{e_1}{y}{e_2} : \tau \ctxsep \varepsilon_1 \cup \varepsilon_2}
  \\
  \infer*[left=Lam]{
    \Gamma, x \ty \tau_1 \proves e : \tau_2 \ctxsep \varepsilon\\
  }{\Gamma  \proves \lambda x \ty \tau_1.\, e : \tau_1 \effto \tau_2 \ctxsep \varnothing}
  \and
  \infer*[left=App]{
    \Gamma \proves f : \tau_1 \effto[\varepsilon_1] \tau_2 \ctxsep \varepsilon_2 \\
    \Gamma \proves e : \tau_1 \ctxsep \varepsilon_3 \\
  }{\Gamma \proves f~e : \tau_2 \ctxsep \varepsilon_1 \cup \varepsilon_2 \cup \varepsilon_3}
  \\
  \infer*[left=Label]{
    \Gamma \proves e : \tau \ctxsep \varepsilon
  }{\Gamma \proves \LabeledProgram{\ell}{e} : \LabeledType{\ell}{\tau} \ctxsep \varepsilon}
  \and
  \infer*[left=Unlabel]{
    \Gamma \proves e_1 : \LabeledType{\ell}{\tau_1} \ctxsep \varepsilon_1 \\
    \Gamma, x \ty \tau_1 \proves e_2 : \tau_2 \ctxsep \varepsilon_2 \\\\
    \ell \prot \tau_2 \\
    \ell \flowsto \ell_{\e_2}
  }{\Gamma \proves \UnlabelAs*{e_1}{x}{e_2} : \tau_2 \ctxsep \varepsilon_1 \cup \varepsilon_2}
  \\
  \infer*[left=Variance]{
    \Gamma \proves e : \tau \ctxsep \varepsilon'\\
    \varepsilon' \subseteq \varepsilon
  }{\Gamma \proves e : \tau \ctxsep \varepsilon}
  \\
  \infer*[left=Read]{
  }{\Gamma \proves \Read : \sigma \ctxsep \REff}
  \and
  \infer*[left=Write]{
    \Gamma \proves e : \sigma \ctxsep \varepsilon\\
  }{\Gamma \proves \Write(e) : \Unit \ctxsep \varepsilon \cup \WEff}
  \\
  \infer*[left=Throw]{
  }{\Gamma \proves \Throw : \tau \ctxsep \ExnEff}
  \and
  \infer*[left=TryCatch]{
    \Gamma \proves e_1 \ty \tau \ctxsep \varepsilon_1 \cup E\\
    \Gamma \proves e_2 \ty \tau \ctxsep \varepsilon_2\\
    \ExnLabel \prot \tau
  }{\Gamma \proves \TryCatch{e_1}{e_2} : \tau \ctxsep \varepsilon_1 \cup \varepsilon_2}
\end{mathpar}}

\paragraph{Type-and-Effect Protection Rules}
{\small
\begin{mathpar}
  \infer{\ell \flowsto \ell'}{\ell \prot \LabeledType{\ell'}{\tau}}
  \and
  \infer{\ell \prot \tau}{\ell \prot \LabeledType{\ell'}{\tau}}
  \and
  \infer{\ell \prot \tau_1 \\ \ell \prot \tau_2}{\ell \prot \tau_1 \times \tau_2}
  \and
  \infer{\ell \prot \tau_2\\ \ell \flowsto \ell_\varepsilon}{\ell \prot \tau_1 \effto \tau_2}
\end{mathpar}}

\paragraph{Operational Semantics}
{\small
\begin{mathpar}
  \infer{\langle e, s \rangle \stepsone \langle e', s' \rangle}{\langle E[e], s \rangle \stepsone \langle E[e'], s' \rangle} \and
  \langle(\lambda x \ty \tau.\,e)~v, s\rangle \stepsone \langle\subst{e}{x}{v},s\rangle \and
  \langle\Proj{i}(\Pair{v_1}{v_2}), s\rangle \stepsone \langle v_i, s\rangle \and
  \langle\MatchSum*{\Inl(v)}{x}{e_1}{y}{e_2},s\rangle \stepsone \langle\subst{e_1}{x}{v},s\rangle \and
  \langle\MatchSum*{\Inr(v)}{x}{e_1}{y}{e_2},s\rangle \stepsone \langle\subst{e_2}{y}{v},s\rangle \and
  \langle\UnlabelAs*{\LabeledProgram{\ell}{v}}{x}{e},s\rangle \stepsone \langle\subst{e}{x}{v},s\rangle \\
  \langle\Read, s\rangle \stepsone \langle s, s \rangle\and
  \langle\Write(v), s\rangle \stepsone \langle (), v \rangle \\
  \langle T[\Throw], s \rangle \stepsone \langle \Throw, s\rangle\and
  \langle \TryCatch{v}{e},s \rangle \stepsone \langle v, s \rangle \and
  \langle \TryCatch{\Throw}{e}, s \rangle \stepsone \langle e, s \rangle
\end{mathpar}}

\subsection{DCC with Fixpoints and Pointed Types}

{\small
\begin{mathpar}
  \infer*[left=Var]{\Gamma(x) = \tau}{\Gamma \proves x : \tau}
  \and
  \infer*[left=Unit]{ }{\Gamma \proves () : \Unit}
  \\
  \infer*[left=Pair]{
    \Gamma \proves e_1 : \tau_1 \\
    \Gamma \proves e_2 : \tau_2
  }{\Gamma \proves \Pair{e_1}{e_2} : \tau_1 \times \tau_2}
  \and
  \infer*[left=Proj]{\Gamma \proves e : \tau_1 \times \tau_2}{\Gamma \proves \Proj{i}~e : \tau_i}
  \\
  \infer*[left=InL]{
    \Gamma \proves e : \tau_1 \\
  }{\Gamma \proves \Inl(e) : \tau_1 + \tau_2}
  \and
  \infer*[left=InR]{
    \Gamma \proves e : \tau_2
  }{\Gamma \proves \Inr(e) : \tau_1 + \tau_2}
  \and
  \infer*[left=Match]{
    \Gamma \proves e : \tau_1 + \tau_2 \\\\
    \Gamma, x \ty \tau_1 \proves e_1 : \tau \\
    \Gamma, y \ty \tau_2 \proves e_2 : \tau
  }{\Gamma \proves \MatchSum*{e}{x}{e_1}{y}{e_2} : \tau}
  \\
  \infer*[left=Lam]{
    \Gamma, x \ty \tau_1 \proves e : \tau_2 \\
  }{\Gamma \proves \lambda x \ty \tau_1.\, e : \tau_1 \to \tau_2}
  \and
  \infer*[left=App]{
    \Gamma \proves f : \tau_1 \to \tau_2 \\
    \Gamma \proves e : \tau_1
  }{\Gamma \proves f~e : \tau_2}
  \\
  \infer*[left=Label]{
    \Gamma \proves e : \tau
  }{\Gamma \proves \LabeledProgram{\ell}{e} : \LabeledType{\ell}{\tau}}
  \and
  \infer*[left=Unlabel]{
    \Gamma \proves e_1 : \LabeledType{\ell}{\tau_1} \\
    \Gamma, x \ty \tau_1 \proves e_2 : \tau_2 \\
    \ell \prot \tau_2
  }{\Gamma \proves \UnlabelAs*{e_1}{x}{e_2} : \tau_2}
  \and
  \infer*[left=Fix]{
    \Gamma, f \ty \tau \proves e : \tau\\
    \proves \tau~\Ptd
  }{\Gamma \proves \Fixpoint{f}{\tau}{e} : \tau}
  \and
  \infer*[left=Lift]{
    \Gamma \proves e : \tau
  }{\Gamma \proves \LiftProgram{e} : \LiftType{\tau}}
  \and
  \infer*[left=Seq]{
    \Gamma \proves e_1 : \LiftType{\tau_1}\\
    \Gamma, x \ty \tau_1 \proves e_2 : \tau_2\\\\
    \proves \tau_2~\Ptd
  }{\Gamma \proves \SeqTerm*{x}{e_1}{e_2} : \tau_2}
\end{mathpar}}

\paragraph{Protection Rules}
{\small
\begin{mathpar}
  \infer{\ell \flowsto \ell'}{\ell \prot \LabeledType{\ell'}{\tau}}
  \and
  \infer{\ell \prot \tau}{\ell \prot \LabeledType{\ell'}{\tau}}
  \and
  \infer{\ell \prot \tau_1 \\ \ell \prot \tau_2}{\ell \prot \tau_1 \times \tau_2}
  \and
  \infer{\ell \prot \tau_2}{\ell \prot \tau_1 \to \tau_2}
  \and
  \infer{\ell \flowsto \TermLabel\\ \ell \prot \tau}{\ell \prot \LiftType{\tau}}
\end{mathpar}}

\paragraph{Operational Semantics}
{\small
\begin{mathpar}
  \infer{e \stepsone e'}{E[e] \stepsone E[e']} \and
  (\lambda x \ty \tau.\,e)~v \stepsone \subst{e}{x}{v} \and
  \Proj{i}((v_1, v_2)) \stepsone v_i \and
  \MatchSum*{\Inl(v)}{x}{e_1}{y}{e_2} \stepsone \subst{e_1}{x}{v} \and
  \MatchSum*{\Inr(v)}{x}{e_1}{y}{e_2} \stepsone \subst{e_2}{y}{v} \and
  \UnlabelAs*{\LabeledProgram{\ell}{v}}{x}{e} \stepsone \subst{e}{x}{v} \and
  \Fixpoint{f}{\tau}{e} \stepsone \subst{e}{f}{\Fixpoint{f}{\tau}{e}} \and
  \SeqTerm*{x}{\LiftProgram{v}}{e} \stepsone \subst{e}{x}{v}
\end{mathpar}}

\subsection{DCC with Fixpoints: PC and Type-and-Effect Systems}

\paragraph{PC Type System}
{\small
\begin{mathpar}
  \infer*[left=Var]{\Gamma(x) = \tau}{\Gamma \ctxsep \pc \proves x : \tau}
  \and
  \infer*[left=Unit]{ }{\Gamma \ctxsep \pc \proves () : \Unit}
  \\
  \infer*[left=Pair]{
    \Gamma \ctxsep \pc \proves e_1 : \tau_1 \\
    \Gamma \ctxsep \pc \proves e_2 : \tau_2
  }{\Gamma \ctxsep \pc \proves \Pair{e_1}{e_2} : \tau_1 \times \tau_2}
  \and
  \infer*[left=Proj]{\Gamma \ctxsep \pc \proves e : \tau_1 \times \tau_2}{\Gamma \ctxsep \pc \proves \Proj{i}~e : \tau_i}
  \\
  \infer*[left=InL]{
    \Gamma \ctxsep \pc \proves e : \tau_1 \\
  }{\Gamma \ctxsep \pc \proves \Inl(e) : \tau_1 + \tau_2}
  \and
  \infer*[left=InR]{
    \Gamma \ctxsep \pc \proves e : \tau_2
  }{\Gamma \ctxsep \pc \proves \Inr(e) : \tau_1 + \tau_2}
  \and
  \infer*[left=Match]{
    \Gamma \ctxsep \pc \proves e : \tau_1 + \tau_2 \\\\
    \Gamma, x \ty \tau_1 \ctxsep \pc \proves e_1 : \tau \\
    \Gamma, y \ty \tau_2 \ctxsep \pc \proves e_2 : \tau
  }{\Gamma \ctxsep \pc \proves \MatchSum*{e}{x}{e_1}{y}{e_2} : \tau}
  \\
  \infer*[left=Lam]{
    \Gamma, x \ty \tau_1 \ctxsep \pc_1 \proves e : \tau_2 \\
  }{\Gamma \ctxsep \pc_2 \proves \lambda x \ty \tau_1.\, e : \tau_1 \pcto[\pc_1] \tau_2}
  \and
  \infer*[left=App]{
    \Gamma \ctxsep \pc_1 \proves f : \tau_1 \pcto[\pc_2] \tau_2 \\
    \Gamma \ctxsep \pc_1 \proves e : \tau_1\\
    \pc_1 \flowsto \pc_2
  }{\Gamma \ctxsep \pc_1 \proves f~e : \tau_2}
  \\
  \infer*[left=Label]{
    \Gamma \ctxsep \pc \proves e : \tau
  }{\Gamma \ctxsep \pc \proves \LabeledProgram{\ell}{e} : \LabeledType{\ell}{\tau}}
  \and
  \infer*[left=Unlabel]{
    \Gamma \ctxsep \pc \proves e_1 : \LabeledType{\ell}{\tau_1} \\
    \Gamma, x \ty \tau_1 \ctxsep \pc \JoinL \ell \proves e_2 : \tau_2 \\
    \ell \prot \tau_2
  }{\Gamma \ctxsep \pc \proves \UnlabelAs*{e_1}{x}{e_2} : \tau_2}
 \\
 \infer*[left=Variance]{
   \Gamma \ctxsep \pc' \proves e : \tau \\
   \pc \flowsto \pc'
 }{\Gamma \ctxsep \pc \proves e : \tau}
 \and
 \infer*[left=Fix]{
   \Gamma, f \ty \tau \ctxsep \pc \proves e : \tau\\
   \pc \flowsto \TermLabel
 }{\Gamma \ctxsep \pc \proves \Fixpoint{f}{\tau}{e} : \tau}
 \end{mathpar}}

\paragraph{PC Protection Rules}
{\small
\begin{mathpar}
  \infer{\ell \flowsto \ell'}{\ell \prot \LabeledType{\ell'}{\tau}}
  \and
  \infer{\ell \prot \tau}{\ell \prot \LabeledType{\ell'}{\tau}}
  \and
  \infer{\ell \prot \tau_1 \\ \ell \prot \tau_2}{\ell \prot \tau_1 \times \tau_2}
  \and
  \infer{\ell \prot \tau_2 \\ \ell \flowsto \pc}{\ell \prot \tau_1 \pcto \tau_2}
\end{mathpar}}

\paragraph{Type-and-Effect Type System}
{\small
\begin{mathpar}
  \infer*[left=Var]{\Gamma(x) = \tau}{\Gamma \proves x : \tau \ctxsep \varnothing}
  \and
  \infer*[left=Unit]{ }{\Gamma \proves () : \Unit \ctxsep \varnothing}
  \\
  \infer*[left=Pair]{
    \Gamma \proves e_1 : \tau_1 \ctxsep \varepsilon_1\\
    \Gamma \proves e_2 : \tau_2 \ctxsep \varepsilon_2
  }{\Gamma \proves \Pair{e_1}{e_2} : \tau_1 \times \tau_2 \ctxsep \varepsilon_1 \cup \varepsilon_2}
  \and
  \infer*[left=Proj]{\Gamma \proves e : \tau_1 \times \tau_2 \ctxsep \varepsilon}{\Gamma \ctxsep \pc \proves \Proj{i}~e : \tau_i \ctxsep \varepsilon}
  \\
  \infer*[left=InL]{
    \Gamma \proves e : \tau_1  \ctxsep \varepsilon\\
  }{\Gamma \proves \Inl(e) : \tau_1 + \tau_2 \ctxsep \varepsilon}
  \and
  \infer*[left=InR]{
    \Gamma \proves e : \tau_2  \ctxsep \varepsilon
  }{\Gamma \proves \Inr(e) : \tau_1 + \tau_2\ctxsep \varepsilon}
  \and
  \infer*[left=Match]{
    \Gamma \proves e : \tau_1 + \tau_2 \ctxsep \varepsilon_1\\\\
    \Gamma, x \ty \tau_1 \proves e_1 : \tau  \ctxsep \varepsilon_2\\
    \Gamma, y \ty \tau_2 \proves e_2 : \tau \ctxsep \varepsilon_2
  }{\Gamma \proves \MatchSum*{e}{x}{e_1}{y}{e_2} : \tau \ctxsep \varepsilon_1 \cup \varepsilon_2}
  \\
  \infer*[left=Lam]{
    \Gamma, x \ty \tau_1 \proves e : \tau_2 \ctxsep \varepsilon\\
  }{\Gamma  \proves \lambda x \ty \tau_1.\,e : \tau_1 \effto \tau_2 \ctxsep \varnothing}
  \and
  \infer*[left=App]{
    \Gamma  \proves f : \tau_1 \effto[\varepsilon_1] \tau_2 \ctxsep \varepsilon_2 \\
    \Gamma \proves e : \tau_1 \ctxsep \varepsilon_3 \\
  }{\Gamma \proves f~e : \tau_2 \ctxsep \varepsilon_1 \cup \varepsilon_2 \cup \varepsilon_3}
  \\
  \infer*[left=Label]{
    \Gamma \proves e : \tau \ctxsep \varepsilon
  }{\Gamma \proves \LabeledProgram{\ell}{e} : \LabeledType{\ell}{\tau} \ctxsep \varepsilon}
  \and
  \infer*[left=Unlabel]{
    \Gamma \proves e_1 : \LabeledType{\ell}{\tau_1} \ctxsep \varepsilon_1 \\
    \Gamma, x \ty \tau_1 \proves e_2 : \tau_2 \ctxsep \varepsilon_2 \\\\
    \ell \prot \tau_2 \\
    \ell \flowsto \ell_{\e_2}
  }{\Gamma \proves \UnlabelAs*{e_1}{x}{e_2} : \tau_2 \ctxsep \varepsilon_1 \cup \varepsilon_2}
 \\
 \infer*[left=Variance]{
   \Gamma \proves e : \tau \ctxsep \varepsilon'\\
   \varepsilon' \subseteq \varepsilon
 }{\Gamma \proves e : \tau \ctxsep \varepsilon}
 \and
 \infer*[left=Fix]{
   \Gamma, f : \tau \proves e : \tau \ctxsep \varepsilon
 }{\Gamma \proves \Fixpoint{f}{\tau}{e} : \tau \ctxsep \PntEff}
 \end{mathpar}}

\paragraph{Type-and-Effect Protection Rules}
{\small
\begin{mathpar}
  \infer{\ell \flowsto \ell'}{\ell \prot \LabeledType{\ell'}{\tau}}
  \and
  \infer{\ell \prot \tau}{\ell \prot \LabeledType{\ell'}{\tau}}
  \and
  \infer{\ell \prot \tau_1 \\ \ell \prot \tau_2}{\ell \prot \tau_1 \times \tau_2}
  \and
  \infer{\ell \prot \tau_2\\ \ell \flowsto \ell_\varepsilon}{\ell \prot \tau_1 \effto \tau_2}
\end{mathpar}}

\paragraph{Operational Semantics}
{\small
\begin{mathpar}
  \infer{e \stepsone e'}{E[e] \stepsone E[e']} \and
  (\lambda x \ty \tau.\,e)~v \stepsone \subst{e}{x}{v} \and
  \Proj{i}((v_1, v_2)) \stepsone v_i \and
  \MatchSum*{\Inl(v)}{x}{e_1}{y}{e_2} \stepsone \subst{e_1}{x}{v} \and
  \MatchSum*{\Inr(v)}{x}{e_1}{y}{e_2} \stepsone \subst{e_2}{y}{v} \and
  \UnlabelAs*{\LabeledProgram{\ell}{v}}{x}{e} \stepsone \subst{e}{x}{v} \and
  \Fixpoint{f}{\tau}{e} \stepsone \subst{e}{f}{\Fixpoint{f}{\tau}{e}}
\end{mathpar}}

\section{Full Translations of Example Languages}
\label{sec:full-transl-example}

\subsection{Type Translations}
\label{sec:full-type-transl}

Here we define the type translations between the \pc system and the type-and-effect systems from Sections~\ref{sec:state-and-exns}~and~\ref{sec:tsni}.

The translation for the language with state and exceptions where $\Eff = 2^{\{\REff, \WEff, \ExnEff\}}$ is as follows.
\begin{align*}
  \effTypeTrans{\Unit} & = \Unit \\
  \effTypeTrans{\tau_1 + \tau_2} & = \effTypeTrans{\tau_1} + \effTypeTrans{\tau_2} \\
  \effTypeTrans{\tau_1 \times \tau_2} & = \effTypeTrans{\tau_1} \times \effTypeTrans{\tau_2} \\
  \effTypeTrans{\tau_1 \pcto \tau_2} & = \begin{cases}
    \effTypeTrans{\tau_1} \effto[\REff\WEff\ExnEff] \effTypeTrans{\tau_2} & \text{if } \pc \flowsto \ExnLabel \\
    \effTypeTrans{\tau_1} \effto[\REff\WEff] \effTypeTrans{\tau_2} & \text{if } \pc \nflowsto \ExnLabel \text{ but } \pc \flowsto \StateLabel \\
    \effTypeTrans{\tau_1} \effto[\REff] \effTypeTrans{\tau_2} & \text{otherwise}
  \end{cases} \\
  \effTypeTrans{\LabeledType{\ell}{\tau}} & = \LabeledType{\ell}{\effTypeTrans{\tau}}
\end{align*}

The translation for the potentially nonterminating language where $\Eff = \{\varnothing, \PntEff\}$ is as follows.
\begin{align*}
  \effTypeTrans{\Unit} & = \Unit \\
  \effTypeTrans{\tau_1 + \tau_2} & = \effTypeTrans{\tau_1} + \effTypeTrans{\tau_2} \\
  \effTypeTrans{\tau_1 \times \tau_2} & = \effTypeTrans{\tau_1} \times \effTypeTrans{\tau_2} \\
  \effTypeTrans{\tau_1 \pcto \tau_2} & = \begin{cases}
    \effTypeTrans{\tau_1} \effto[\PntEff] \effTypeTrans{\tau_2} & \text{if } \pc \flowsto \TermLabel \\
    \effTypeTrans{\tau_1} \effto[\varnothing] \effTypeTrans{\tau_2} & \text{otherwise}
  \end{cases} \\
  \effTypeTrans{\LabeledType{\ell}{\tau}} & = \LabeledType{\ell}{\effTypeTrans{\tau}}
\end{align*}

\subsection{Effectful DCC}
\label{sec:effectful-dcc-transl}

In Sections~\ref{sec:state-and-exns}~and~\ref{sec:tsni} our translation made use of indexed monads

\begingroup
\allowdisplaybreaks
\addtocounter{numlevels}{1}

Here we define the full translation between effectful and pure languages for our examples.
For simplicity, we use the syntactic sugar $\LetIn*{x}{e_1}{e_2}$ and $\LetIn*{\Pair{x}{y}}{e_1}{e_2}$ with their standard meanings.

As the translation for Section~\ref{sec:state-and-exns} is based on an indexed monad~\citep{WadlerT98,OrchardPM14},
we define a type-directed translation making use of $\eta_\e$, $\BindM{\e}$, and $\CoerceE{\e_1}{\e_2}$ operations for our monadic forms.
For notational clarity, we use a different but equivalent type for $\BindM{}$ than we did in Section~\ref{sec:state-and-exns}.
Specifically, we use $\BindM{\e} : \Monad{\e}(\tau_1) \to (\tau_1 \to \Monad{\e}(\tau_2)) \to \Monad{\e}(\tau_2)$.

We start by providing a general translation for the fragment of DCC that appears in both Sections~\ref{sec:state-and-exns}~and~\ref{sec:tsni},
and then discuss the specifics of each translation including their monads and the operations that appear in only one.

All of the translations are type-directed.
We use $\Captured{\e}{e}$ as shorthand to denote the translation of a derivation of $\Gamma \proves e : \tau \ctxsep \e$.

\subsection{Common Language Fragment}
\label{sec:general-translation}

The type directed translation is as follows.
\begin{align*}
  \Captured*{\infer{\Gamma(x) = \tau}{\Gamma \proves x : \tau \ctxsep \varnothing}} & = x \\[0.5em]
  \Captured*{\infer{ }{\Gamma \proves () : \Unit \ctxsep \varnothing}} & = () \\[0.5em]
  \Captured*{\infer{\Gamma \proves e : \tau_1 \ctxsep \e}{\Gamma \proves \Inl~e : \tau_1 + \tau_2 \ctxsep \e}} & =
    \BindM{\e} ~ \Captured{\e}{e} ~ (\lambda x\ty\tau_1.\, \eta_\e~(\Inl~x)) \\[0.5em]
  \Captured*{\infer{\Gamma \proves e : \tau_2 \ctxsep \e}{\Gamma \proves \Inr~e : \tau_1 + \tau_2 \ctxsep \e}} & =
    \BindM{\e} ~ \Captured{\e}{e} ~ (\lambda x\ty\tau_2.\, \eta_\e~(\Inr~x)) \\[0.5em]
  \addtocounter{numlevels}{-1}
  \Captured*{
    \infer{
      \Gamma \proves e : \tau_1 + \tau_2 \ctxsep \e_1 \\\\
      \Gamma, x\ty\tau_1 \proves e_1 : \tau \ctxsep \e_2 \\\\
      \Gamma, y\ty\tau_2 \proves e_2 : \tau \ctxsep \e_2 \\\\
      \e_1 \cup \e_2 \subseteq \e
    }{\Gamma \proves \MatchSum{e}{x}{e_1}{y}{e_2} : \tau \ctxsep \e}} & =
  \addtocounter{numlevels}{1}
    \begin{array}{@{}l@{~}l@{}}
      \BindM{\e} & (\CoerceE{\e_1}{\e}~\Captured{\e_1}{e}) \\
      & \left(\begin{array}{@{}c@{}}\lambda z\ty\tau_1 + \tau_2.\, \CoerceE{\e_2}{\e}~\MatchSum{z}{x}{\Captured{\e_2}{e_1}}{y}{\Captured{\e_2}{e_2}}\end{array}\right)
    \end{array}
  \\[0.5em]
  \Captured*{\infer{\Gamma \proves e_1 : \tau_1 \ctxsep \e_1 \\\\ \Gamma \proves e_2 : \tau_2 \ctxsep \e_2 \\\\ \e_1 \cup \e_2 \subseteq \e}{\Gamma \proves \Pair{e_1}{e_2} : \tau_1 \times \tau_2 \ctxsep \e}} & =
    \begin{array}{@{}l@{~}l@{}}
      \BindM{\e} & (\CoerceE{\e_1}{\e}~\Captured{\e_1}{e_1}) \\
      & (\lambda v_1\ty\tau_1.\,\begin{array}[t]{@{}l@{~}l@{}}\BindM{\e} & (\CoerceE{\e_2}{\e}~\Captured{\e_2}{e_2}) \\ & (\lambda v_2\ty\tau_2.\, \eta_\e~\Pair{v_1}{v_2}))\end{array}
    \end{array}\\[0.5em]
  \Captured*{\infer{\Gamma \proves e : \tau_1 \times \tau_2 \ctxsep \e}{\Gamma \proves \Proj{i}~e : \tau_i \ctxsep \e}} & =
    \BindM{\e} ~ \Captured{\e}{e} ~ (\lambda x\ty\tau.\, \eta_\e~(\Proj{i}~x)) \\[0.5em]
  \Captured*{\infer{\Gamma, x\ty\tau_1 \proves e : \tau_2 \ctxsep \e}{\Gamma \proves \lambda x\ty\tau_1.\, e : \tau_1 \effto \tau_2 \ctxsep \varnothing}} & = \lambda x\ty\tau_1.\Captured{\e}{e} \\[0.5em]
  \Captured*{\infer{\Gamma \proves e_1 : \tau_1 \effto[\e_1] \tau_2 \ctxsep \e_2 \\\\ \Gamma \proves e_2 : \tau_2 \ctxsep \e_3 \\\\ \e_1 \cup \e_2 \cup \e_3 \subseteq \e}{\Gamma \proves e_1~e_2 : \tau_2 \ctxsep \e}} & =
    \begin{array}{@{}l@{~}l@{}}
      \BindM{\e} & (\CoerceE{\e_2}{\e}~\Captured{\e_2}{e_1}) \\
      & \begin{array}[t]{@{}l@{}}
        (\lambda f\ty \tau_1 \to \Monad{\e_1}(\tau_2). \\
        \quad
        \begin{array}[t]{@{}l@{~}l@{}}
          \BindM{\e} & (\CoerceE{\e_3}{\e}~\Captured{\e_3}{e_2}) \\
          & (\lambda x\ty\tau_1.\, \CoerceE{\e_1}{\e}~(f~x)))
        \end{array}
      \end{array}
    \end{array} \\[0.5em]
  \Captured*{\infer{\Gamma \proves e : \tau \ctxsep \e}{\Gamma \proves \LabeledProgram{\ell}{e} : \LabeledType{\ell}{\tau} \ctxsep \e}} & =
    \BindM{\e} ~ \Captured{\e}{e} ~ (\lambda x\ty\tau.\, \eta_\e~(\LabeledProgram{\ell}{x})) \\[0.5em]
  \Captured*{\infer{\Gamma \proves e_1 : \LabeledType{\ell}{\tau_1} \ctxsep \e_1 \\\\ \Gamma, x\ty\tau_1 \proves e_2 : \tau \ctxsep \e_2 \\\\ \ell \prot \tau \\ \ell \flowsto \ell_{\e_2} \\\\ \e_1 \cup \e_2 \subseteq \e}{\Gamma \proves \UnlabelAs*{e_1}{x}{e_2} : \tau \ctxsep \e}} & =
    \begin{array}{@{}l@{~}l@{}}
      \BindM{\e} & (\CoerceE{\e_1}{\e} ~ \Captured{\e_1}{e_1}) \\
      & (\lambda v\ty\LabeledType{\ell}{\tau_1}.\, \CoerceE{\e_2}{\e} ~ (\UnlabelAs*{v}{x}{\Captured{\e_2}{e_2}}))
    \end{array} \\[0.5em]
  \Captured*{\infer{\Gamma \proves e : \tau \ctxsep \e' \\ \e' \subseteq \e}{\Gamma \proves e : \tau \ctxsep \e}} & = \CoerceE{\e'}{\e}~\Captured{\e'}{e}
\end{align*}

\subsection{State and Exceptions}

For this we first define the $\eta$ and $\BindM{}$ rules for each component of our indexed monad.
Note that we omit the trivial monad $\Monad{\varnothing}$ because it is the identity.

\paragraph{Exception}
\begin{mathpar}
  \eta_\ExnEff~v = \LabeledProgram{\ExnLabel}{\Inr~v}
  \and
  \begin{array}{@{}l@{}}
    \BindM{\ExnEff}~v~f = \UnlabelAs{v}{x}{
      \MatchSum{x}{\_}{\LabeledProgram{\ExnLabel}{\Inl~()}}{y}{f~y}}
  \end{array}
\end{mathpar}

\paragraph{Read}
\begin{mathpar}
  \eta_\REff~v = \lambda s\ty\sigma.\, v
  \and
  \BindM{\REff}~v~f = \lambda s\ty\sigma.\, f~(v~s)
\end{mathpar}

\paragraph{Read and Write}
Note that this also applies to \WEff effects, as we use the same monad.
\begin{mathpar}
  \eta_{\REff\WEff}~v = \lambda s\ty\sigma.\, \Pair{v}{s}
  \and
  \BindM{\REff\WEff}~v~f = \lambda s\ty\sigma.\, \LetIn*{\Pair{x}{s'}}{v~s}{f~x~s'}
\end{mathpar}

\paragraph{Read and Exception}
\begin{mathpar}
  \eta_{\REff\ExnEff}~v = \lambda s\ty\sigma.\, \LabeledProgram{\ExnLabel}{\Inr~(v~s)}
  \and
  \begin{array}{@{}l@{}}
    \BindM{\REff\ExnEff}~v~f = \lambda s\ty\sigma.\, \UnlabelAs{(v~s)}{x}{
      \MatchSum{x}{\_}{\LabeledProgram{\ExnLabel}{\Inl~()}}{y}{f~y~s}}
  \end{array}
\end{mathpar}

\paragraph{Read, Write, and Exception}
Note that this also applies to $\{\WEff,\ExnEff\}$, as we use the same monad.
\[
  \eta_{\REff\WEff\ExnEff}~v = \lambda s\ty\sigma.\,\Pair{\LabeledProgram{\ExnLabel}{\Inr~v}}{s}
  \hspace{1em}
  \begin{array}{l}
    \BindM{\REff\WEff\ExnEff}~v~f = \lambda s\ty\sigma.\, \LetIn{\Pair{x}{s'}}{v~s}{
      \UnlabelAs{x}{y}{
        \MatchSum{y}{\_}{\Pair{\LabeledProgram{\ExnLabel}{\Inl~()}}{s'}}{z}{f~z~s'}}}
  \end{array}
\]

We now define the various coercions used in our translation.
First we note that, for all effects~$\e$, $\CoerceE{\varnothing}{\e} = \eta_\e$ and $\CoerceE{\e}{\e}$ is the identity.
The rest are defined as follows:
\begin{align*}
  \CoerceE{\REff}{\REff\WEff}~v & = \lambda s\ty\sigma.\, \Pair{v~s}{s} \\[0.5em]
  \CoerceE{\REff}{\REff\ExnEff}~v & = \lambda s\ty\sigma.\, \eta_\ExnEff~(v~s) \\[0.5em]
  \CoerceE{\REff}{\REff\WEff\ExnEff}~v & = \lambda s\ty\sigma.\, \Pair{\eta_\ExnEff~(v~s)}{s} \\[0.5em]
  \CoerceE{\ExnEff}{\REff\ExnEff}~v & = \lambda s\ty\sigma.\, v \\[0.5em]
  \CoerceE{\ExnEff}{\REff\WEff\ExnEff}~v & = \lambda s\ty\sigma.\, \Pair{v}{s} \\[0.5em]
  \CoerceE{\REff\WEff}{\REff\WEff\ExnEff}~v & = \lambda s\ty\sigma.\, \LetIn*{\Pair{x}{s'}}{v~s}{\Pair{\eta_\ExnEff~x}{s'}} \\[0.5em]
  \CoerceE{\REff\ExnEff}{\REff\WEff\ExnEff}~v & = \lambda s\ty\sigma.\, \Pair{v~s}{s} 
\end{align*}

Finally, we provide the translations for the four stateful operations defined in Section~\ref{sec:state-and-exns}.
We note that try-catch operates differently depending on the effects of the expression in the try block, so we provide several different translations.
\begin{align*}
  \Captured*{\infer{ }{\Gamma \proves \Read : \sigma \ctxsep \REff}} & = \lambda s\ty\sigma.\, s \\[0.5em]
  \Captured*{\infer{\Gamma \proves e : \sigma \ctxsep \e}{\Gamma \proves \Write(\e) : \Unit \ctxsep \e \cup \WEff}} & =
    \begin{array}[t]{@{}l@{~}l@{}}
      \BindM{\e \cup \WEff} & (\CoerceE{\e}{\e \cup \WEff} ~ \Captured{\e}{e}) \\
      & (\lambda x\ty\sigma.\, \CoerceE{\WEff}{\e \cup \WEff}~\Pair{()}{x})
    \end{array} \\[0.5em]
  \Captured*{\infer{ }{\Gamma \proves \Throw : \tau \ctxsep \ExnEff}} & = \LabeledProgram{\ExnLabel}{\Inl~()} \\[0.5em]
  \Captured*{\infer{\Gamma \proves e_1 : \tau \ctxsep \ExnEff \\ \Gamma \proves e_2 : \tau \ctxsep \e \\ \ExnLabel \prot \tau}{\Gamma \proves \TryCatch{e_1}{e_2} : \tau \ctxsep \e}} & =
    \UnlabelAs{\Captured{\ExnEff}{e_1}}{x}{
      \MatchSum{x}{\_}{\Captured{\e}{e_2}}{v}{\eta_\e~v}} \\
  \Captured*{\infer{\Gamma \proves e_1 : \tau \ctxsep \REff\ExnEff \\ \Gamma \proves e_2 : \tau \ctxsep \e \\ \ExnLabel \prot \tau}{\Gamma \proves \TryCatch{e_1}{e_2} : \tau \ctxsep \e \cup \REff}} & =
    \lambda s\ty\sigma.\,
      \UnlabelAs{\Captured{\REff\ExnEff}{e_1}~s}{x}{
        \MatchSum{x}{\_}{\CoerceE{\e}{\e \cup \REff}~\Captured{\e}{e_2}~s}{v}{\eta_{\e\cup \REff}~v~s}} \\
  \Captured*{\infer{\Gamma \proves e_1 : \tau \ctxsep \REff\WEff\ExnEff \\ \Gamma \proves e_2 : \tau \ctxsep \e \\ \ExnLabel \prot \tau}{\Gamma \proves \TryCatch{e_1}{e_2} : \tau \ctxsep \e \cup \REff\WEff}} & =
    \lambda s\ty\sigma.\,
      \LetIn{\Pair{x}{s'}}{\Captured{\REff\WEff\ExnEff}{e_1}~s}{
        \UnlabelAs{x}{y}{
          \MatchSum{y}{\_}{\CoerceE{\e}{\e \cup \REff\WEff}~\Captured{\e}{e_2}~s'}{v}{\eta_{\e \cup \REff\WEff}~v~s'}}} \\
\end{align*}

\subsection{Potential Nontermination}

The translation for Section~\ref{sec:tsni} is considerably simpler because it uses only a single nontermination monad.
Moreover, the monadic operations are provided directly in the language as follows:
\begin{mathpar}
  \eta~v = \LiftProgram{v}
  \and
  \BindM{}~v~f = \left(\SeqTerm*{x}{v}{f~x}\right)
\end{mathpar}

The only stateful operation not present in the base language (see Appendix~\ref{sec:general-translation}) is \Fix, which has the following translation:
$$\Captured*{\infer{\Gamma, f\ty\tau \proves e : \tau \ctxsep \e}{\Gamma \proves \Fixpoint{f}{\tau}{e} : \tau \ctxsep \PntEff}} =
\Fixpoint{f}{\LiftType{\tau}}{\CoerceE{\e}{\PntEff}~\Captured{\e}{e}}$$
Note that, since $\e$ is either $\varnothing$ or $\PntEff$, $\CoerceE{\e}{\PntEff}$ is either $\eta_\PntEff$ or the identity, respectively.

\addtocounter{numlevels}{-1}
\endgroup

\end{document}